\documentclass[acmsmall,nonacm]{acmart}
\acmJournal{PACMPL}
\acmVolume{1}
\acmNumber{OOPSLA}
\acmArticle{1}
\acmYear{2020}
\acmMonth{1}
\acmDOI{}
\startPage{1}

\setcopyright{none}

\usepackage{booktabs}
\usepackage{subcaption}
\usepackage{amsmath}
\usepackage{mathpartir}
\usepackage[all,cmtip]{xy}
\usepackage[nameinlink,noabbrev,capitalise]{cleveref}
\usepackage{csquotes}
\usepackage{booktabs}
\usepackage{microtype}
\usepackage{stmaryrd}
\usepackage{graphicx}
\usepackage{enumitem}
\usepackage{mathtools}
\usepackage{courier}
\usepackage{bussproofs}
\usepackage{tikz-cd}
\usepackage{diagbox}
\usepackage{cmll}
\usepackage{afterpage}
\usepackage[normalem]{ulem}
\usepackage{cancel}
\usepackage{dashbox}
\usepackage{mdframed}
\usepackage[mathscr]{euscript}
\DeclareMathAlphabet\mathbfscr{U}{eus}{b}{n}
\usepackage{colortbl}
\usepackage{anyfontsize}
\usepackage{listings}
\usepackage{alltt} 
\usepackage{multicol} 
\usepackage{scalerel} 
\usepackage{tipa}
\usepackage{adjustbox}
\usepackage{thm-restate}
\theoremstyle{remark}
\newtheorem{remark}{Remark}
\theoremstyle{acmplain}

\usepackage[]{todonotes}

\newcommand{\dominic}[2][]{\todo[inline,color=red!40,author=Dominic,#1]{#2}}

\newcommand{\dnote}[1]{\dominic{#1}{}}

\newcommand{\costmonad}[2]{\mathbb{M}\,#1\,#2}
\newcommand{\potentialtype}[2]{[#1] #2}
\newcommand{\tyunitp}{\mathbf{1}}
\newcommand{\tyunitn}{\top}
\newcommand{\tyvoid}{\mathbf{0}}
\newcommand{\tyvec}[2]{\mathtt{Vec}\ #1\ #2}
\newcommand{\tyqueue}[2]{\mathtt{Queue}\ #1\ #2}

\newcommand{\sub}{<:}

\newcommand{\tmunit}{\mathtt{unit}}
\newcommand{\tminl}[1]{\mathtt{inl}(#1)}
\newcommand{\tminr}[1]{\mathtt{inr}(#1)}
\newcommand{\tmtensor}[2]{\langle #1, #2 \rangle}
\newcommand{\tmtick}[1]{\ \uparrow^{#1}\ }
\newcommand{\tmret}[1]{\mathtt{ret}\ #1}
\newcommand{\tmstore}[1]{\mathtt{store}\ #1}
\newcommand{\tmrelease}[3]{\mathtt{release}\ #1 = #2\ \mathtt{in}\ #3}
\newcommand{\tmbind}[3]{\mathtt{bind}\ #1 = #2\ \mathtt{in}\ #3}
\newcommand{\tmcase}[2]{#1.\mathtt{case}\ \lbrace #2 \rbrace}
\newcommand{\tmcocase}[1]{\mathtt{cocase}\ \lbrace #1 \rbrace}
\newcommand{\tmfst}{\mathtt{fst}}
\newcommand{\tmsnd}{\mathtt{snd}}
\newcommand{\tmrun}[1]{\mathtt{run}(#1)}

\newcommand{\tmpay}[1]{\mathtt{pay}(#1)}
\newcommand{\tmsplit}[1]{\mathtt{split}(#1)}
\newcommand{\tmplet}[3]{\mathtt{plet}\ #1 = #2\ \mathtt{in}\ #3}
\newcommand{\tmclet}[3]{\mathtt{clet}\ #1 = #2\ \mathtt{in}\ #3}

\newcommand{\evaluatesto}{\Downarrow}
\newcommand{\forcesto}[1]{\Downarrow^{#1}}

\newcommand{\potential}{p}
\newcommand{\cost}{\kappa}
\newcommand{\kripkemodel}[1]{\llbracket #1 \rrbracket}
\newcommand{\kripkemodele}[1]{\llbracket #1 \rrbracket_{\mathcal{E}}}

\newcommand{\op}{\mathsf{op}}
\newcommand{\setcategory}{\mathbf{Set}}
\newcommand{\costcategory}{\mathbb{C}}
\newcommand{\modelcategory}{\mathbb{X}}
\newcommand{\copresheafmodel}{\hat{\mathbb{X}}}
\newcommand{\kripkecategory}{\mathbb{K}}

\newcommand{\homset}[3]{\mathbf{Hom}_{#1}(#2,#3)}
\newcommand{\interpret}[1]{\llbracket #1 \rrbracket}
\newcommand{\yoneda}[1]{\mathcal{Y}(#1)}
\newcommand{\catobjects}[1]{\mathbf{Obj}(#1)}
\newcommand{\idmorph}[1]{\mathrm{id}_{#1}}

\newcommand{\dayconvolution}{\otimes_{\mathrm{Day}}}
\newcommand{\dayidentity}{I_{\mathrm{Day}}}
\newcommand{\dayexponential}{\multimap_\mathrm{Day}}

\newcommand{\lean}{\includegraphics[scale=0.01, trim = 0px 125px 0px 0px]{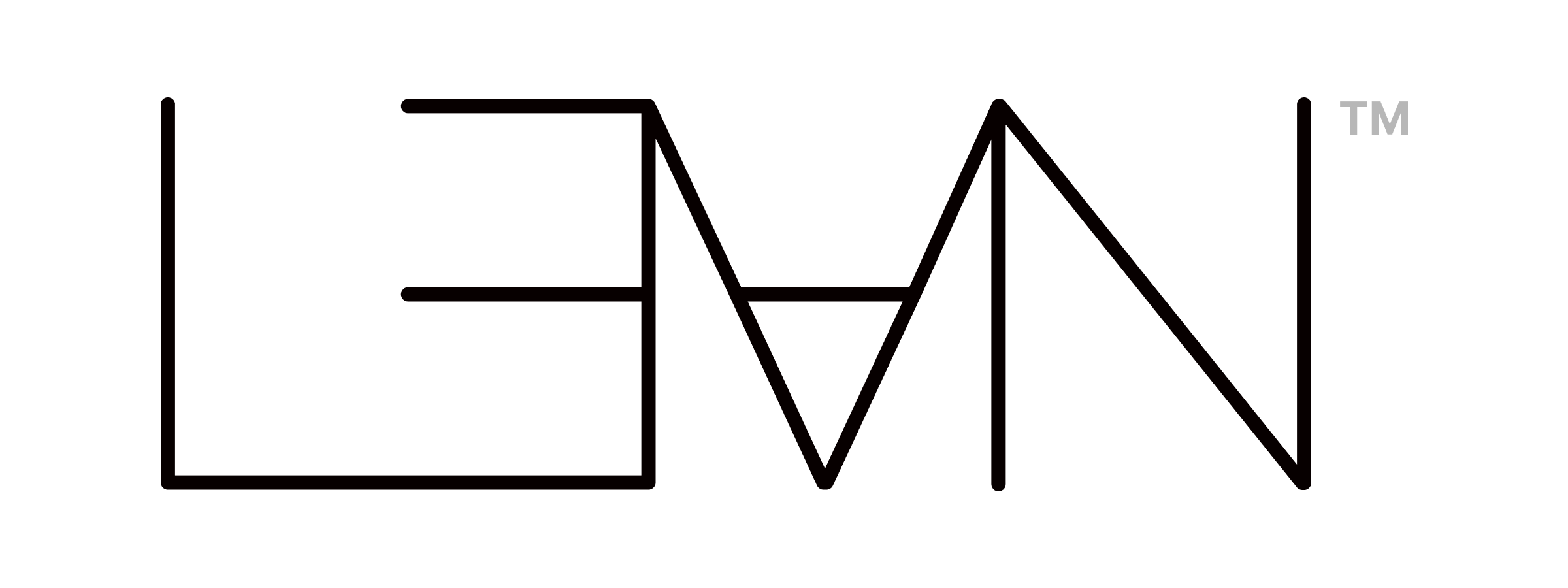}}

\newcommand{\tyrulevar}{
    \begin{prooftree}
      \AxiomC{\phantom{$\Gamma$}}
      \RightLabel{\textsc{Var}}
      \UnaryInfC{$x : \tau \vdash x : \tau$}
    \end{prooftree}
}

\newcommand{\tyrulesub}{
    \begin{prooftree}
      \AxiomC{$\Gamma_2 \vdash e : \sigma$}
      \AxiomC{$\Gamma_1 \sub \Gamma_2$}
      \AxiomC{$\sigma \sub \tau$}
      \RightLabel{\textsc{Sub}}
      \TrinaryInfC{$\Gamma_1 \vdash e : \tau$}
    \end{prooftree}
}

\newcommand{\tyrulewk}{
    \begin{prooftree}
      \AxiomC{$\Gamma \vdash e : \tau$}
      \RightLabel{\textsc{Wk}}
      \UnaryInfC{$\Gamma, x : \tau' \vdash e : \tau$}
    \end{prooftree}
}

\newcommand{\tyruleex}{
    \begin{prooftree}
      \AxiomC{$\Gamma_1, x : \tau_1, y : \tau_2, \Gamma_2 \vdash e : \tau_3$}
      \RightLabel{\textsc{Ex}}
      \UnaryInfC{$\Gamma_1, y : \tau_2, x : \tau_1, \Gamma_2 \vdash e : \tau_3$}
    \end{prooftree}
}

\newcommand{\tyruleunitp}{
    \begin{prooftree}
      \AxiomC{\phantom{$\Gamma \vdash$}}
      \RightLabel{\textsc{Unit-P}}
      \UnaryInfC{$\vdash \tmunit : \tyunitp$}
    \end{prooftree}
}

\newcommand{\tyrulecaseunitp}{
    \begin{prooftree}
      \AxiomC{$\Gamma_1 \vdash e_1 : \tyunitp$}
      \AxiomC{$\Gamma_2 \vdash e_2 : \tau$}
      \RightLabel{\textsc{Case-Unit-P}}
      \BinaryInfC{$\Gamma_1, \Gamma_2 \vdash \tmcase{e_1}{\tmunit \Rightarrow e_2} : \tau$}
    \end{prooftree}
}

\newcommand{\tyruleunitn}{
    \begin{prooftree}
      \AxiomC{}
      \RightLabel{\textsc{Unit-N}}
      \UnaryInfC{$\Gamma \vdash \tmcocase{} : \tyunitn$}
    \end{prooftree}
}

\newcommand{\tyrulevoid}{
    \begin{prooftree}
      \AxiomC{$\Gamma \vdash e : \tyvoid$}
      \RightLabel{\textsc{Void}}
      \UnaryInfC{$\Gamma \vdash \tmcase{e}{} : \tau$}
    \end{prooftree}
}

\newcommand{\tyruleabs}{
  \begin{prooftree}
      \AxiomC{$\Gamma, x : \tau \vdash e : \sigma$}
      \RightLabel{\textsc{Abs}}
      \UnaryInfC{$\Gamma \vdash \lambda x.e : \tau \multimap \sigma$}
    \end{prooftree}
}

\newcommand{\tyruleapp}{
    \begin{prooftree}
      \AxiomC{$\Gamma_1 \vdash e_1 : \tau_1 \multimap \tau_2$}
      \AxiomC{$\Gamma_2 \vdash e_2 : \tau_1$}
      \RightLabel{\textsc{App}}
      \BinaryInfC{$\Gamma_1, \Gamma_2 \vdash e_1\, e_2 : \tau_2$}
    \end{prooftree}
}

\newcommand{\tyruleinl}{
    \begin{prooftree}
      \AxiomC{$\Gamma \vdash e : \sigma$}
      \RightLabel{\textsc{Inl}}
      \UnaryInfC{$\Gamma \vdash \tminl{e} : \sigma \oplus \tau$}
    \end{prooftree}
}

\newcommand{\tyruleinr}{
\begin{prooftree}
      \AxiomC{$\Gamma \vdash e : \tau$}
      \RightLabel{\textsc{Inr}}
      \UnaryInfC{$\Gamma \vdash \tminr{e} : \sigma \oplus \tau$}
    \end{prooftree}
}

\newcommand{\tyrulecasesum}{
    \begin{prooftree}
      \AxiomC{$\Gamma_1 \vdash e : \tau_1 \oplus \tau_2$}
      \AxiomC{$\Gamma_2, x : \tau_1 \vdash e_1 : \tau_3$}
      \AxiomC{$\Gamma_2, x : \tau_2 \vdash e_2: \tau_3$}
      \RightLabel{\textsc{Case-Sum}}
      \TrinaryInfC{$\Gamma_1, \Gamma_2 \vdash \tmcase{e}{\tminl{x} \Rightarrow e_1, \tminr{x} \Rightarrow e_2}: \tau_3$}
    \end{prooftree}
}

\newcommand{\tyruletensor}{
  \begin{prooftree}
    \AxiomC{$\Gamma_1 \vdash e_1 : \tau_1$}
    \AxiomC{$\Gamma_2 \vdash e_2 : \tau_2$}
    \RightLabel{\textsc{Tensor}}
    \BinaryInfC{$\Gamma_1, \Gamma_2 \vdash \tmtensor{e_1}{e_2} : \tau_1 \otimes \tau_2$}
  \end{prooftree}
}

\newcommand{\tyrulecasetensor}{
  \begin{prooftree}
    \AxiomC{$\Gamma_1 \vdash e_1 : \tau_1 \otimes \tau_2$}
    \AxiomC{$\Gamma_2, x : \tau_1, y : \tau_2 \vdash e_2 : \tau_3$}
    \RightLabel{\textsc{Case-Tensor}}
    \BinaryInfC{$\Gamma_2, \Gamma_1 \vdash \tmcase{e_1}{\tmtensor{x}{y} \Rightarrow e_2} : \tau_3$}
  \end{prooftree}
}

\newcommand{\tyrulewith}{
  \begin{prooftree}
    \AxiomC{$\Gamma \vdash e_1 : \tau_1$}
    \AxiomC{$\Gamma \vdash e_2 : \tau_2$}
    \RightLabel{\textsc{With}}
    \BinaryInfC{$\Gamma \vdash \tmcocase{\tmfst \Rightarrow e_1, \tmsnd \Rightarrow e_2} : \tau_1 \with \tau_2$}
  \end{prooftree}
}

\newcommand{\tyrulefst}{
  \begin{prooftree}
      \AxiomC{$\Gamma \vdash e : \tau_1 \with \tau_2$}
      \RightLabel{\textsc{Fst}}
      \UnaryInfC{$\Gamma \vdash e.\tmfst : \tau_1$}
  \end{prooftree}
}

\newcommand{\tyrulesnd}{
  \begin{prooftree}
      \AxiomC{$\Gamma \vdash e : \tau_1 \with \tau_2$}
      \RightLabel{\textsc{Snd}}
      \UnaryInfC{$\Gamma \vdash e.\tmfst : \tau_1$}
  \end{prooftree}
}

\newcommand{\tyrulebind}{
  \begin{prooftree}
    \AxiomC{$\Gamma_1 \vdash e_1 : \costmonad{p_1}{\tau_1}$}
    \AxiomC{$\Gamma_2, x : \tau_1 \vdash e_2 : \costmonad{p_2}{\tau_2}$}
    \RightLabel{\textsc{Bind}}
    \BinaryInfC{$\Gamma_1, \Gamma_2 \vdash \tmbind{x}{e_1}{e_2} : \costmonad{(p_1 + p_2)}{\tau_2}$}
  \end{prooftree}
}

\newcommand{\tyruleret}{
  \begin{prooftree}
      \AxiomC{$\Gamma \vdash e : \tau$}
      \RightLabel{\textsc{Ret}}
      \UnaryInfC{$\Gamma \vdash \tmret{e} : \costmonad{0}{\tau}$}
  \end{prooftree}
}

\newcommand{\tyrulerun}{
  \begin{prooftree}
    \AxiomC{$\Gamma \vdash e : \costmonad{0}{\tau}$}
    \RightLabel{\textsc{Run}}
    \UnaryInfC{$\Gamma \vdash \tmrun{e} : \tau$}
  \end{prooftree}
}

\newcommand{\tyruletick}{
  \begin{prooftree}
    \AxiomC{\rule{0em}{1em}} 
    \RightLabel{\textsc{Tick}}
    \UnaryInfC{$\vdash \tmtick{p} : \costmonad{p}{\tyunitp}$}
  \end{prooftree}
}

\newcommand{\tyrulestore}{
  \begin{prooftree}
    \AxiomC{$\Gamma \vdash e : \tau$}
    \RightLabel{\textsc{Store}}
    \UnaryInfC{$\Gamma \vdash \tmstore{e} : \costmonad{p}{\potentialtype{p}{\tau}}$}
  \end{prooftree}
}

\newcommand{\tyrulerelease}{
  \begin{prooftree}
    \AxiomC{$\Gamma_1 \vdash e_1 : \potentialtype{p_1}{\tau_1}$}
    \AxiomC{$\Gamma_2, x : \tau_1 \vdash e_2 : \costmonad{(p_1 + p_2)}{\tau_2}$}
    \RightLabel{\textsc{Release}}
    \BinaryInfC{$\Gamma_1, \Gamma_2 \vdash \tmrelease{x}{e_1}{e_2} : \costmonad{p_2}{\tau_2}$}
  \end{prooftree}
}

\newcommand{\tyrulepay}{
  \begin{prooftree}
    \AxiomC{$\Gamma \vdash e : \potentialtype{p}{\costmonad{p}{\tau}}$}
    \RightLabel{\textsc{Pay}}
    \UnaryInfC{$\Gamma \vdash \tmpay{e} : \tau$}
  \end{prooftree}
}

\newcommand{\tyrulesplit}{
  \begin{prooftree}
    \AxiomC{$\Gamma \vdash e : \costmonad{p_1 + p_2}{\tau}$}
    \RightLabel{\textsc{Split}}
    \UnaryInfC{$\Gamma \vdash \tmsplit{e} : \costmonad{p_1}{(\costmonad{p_2}{\tau})}$}
  \end{prooftree}
}

\newcommand{\tyruleplet}{
  \begin{prooftree}
    \AxiomC{$\Gamma_1 \vdash e_1 : \potentialtype{p}{\tau_1}$}
    \AxiomC{$\Gamma_2, x : \tau_1 \vdash e_2 : \tau_2$}
    \RightLabel{\textsc{PLet}}
    \BinaryInfC{$\Gamma_1,\Gamma_2 \vdash \tmplet{x}{e_1}{e_2} : \potentialtype{p}{\tau_2}$}
  \end{prooftree}
}

\newcommand{\eqrulerefl}{
  \begin{prooftree}
    \AxiomC{$\mathcal{D} : (\Gamma \vdash e : \tau)$}
    \RightLabel{$\equiv$\textsc{-Refl}}
    \UnaryInfC{$\mathcal{D} \equiv \mathcal{D}$}
  \end{prooftree}
}

\newcommand{\eqrulesymm}{
  \begin{prooftree}
    \AxiomC{$\mathcal{D}_1 \equiv \mathcal{D}_2$}
    \RightLabel{$\equiv$\textsc{-Sym}}
    \UnaryInfC{$\mathcal{D}_2 \equiv \mathcal{D}_1$}
  \end{prooftree}
}

\newcommand{\eqruletrans}{
  \begin{prooftree}
    \AxiomC{$\mathcal{D}_1 \equiv \mathcal{D}_2$}
    \AxiomC{$\mathcal{D}_2 \equiv \mathcal{D}_3$}
    \RightLabel{$\equiv$\textsc{-Trans}}
    \BinaryInfC{$\mathcal{D}_1 \equiv \mathcal{D}_3$}
  \end{prooftree}
}

\newcommand{\eqruleconginl}{
  \begin{prooftree}
    \AxiomC{$\mathcal{D}_1 \equiv \mathcal{D}_2$}
    \RightLabel{$\equiv$\textsc{-Cong-Inl}}
    \UnaryInfC{$\textsc{Inl}(\mathcal{D}_1) \equiv \textsc{Inl}(\mathcal{D}_2)$}
  \end{prooftree}
}

\newcommand{\eqruleconginr}{
  \begin{prooftree}
    \AxiomC{$\mathcal{D}_1 \equiv \mathcal{D}_2$}
    \RightLabel{$\equiv$\textsc{-Cong-Inr}}
    \UnaryInfC{$\textsc{Inr}(\mathcal{D}_1) \equiv \textsc{Inr}(\mathcal{D}_2)$}
  \end{prooftree}
}

\newcommand{\eqrulecongfst}{
  \begin{prooftree}
    \AxiomC{$\mathcal{D}_1 \equiv \mathcal{D}_2$}
    \RightLabel{$\equiv$\textsc{-Cong-Fst}}
    \UnaryInfC{$\textsc{Fst}(\mathcal{D}_1) \equiv \textsc{Fst}(\mathcal{D}_2)$}
  \end{prooftree}
}

\newcommand{\eqrulecongsnd}{
  \begin{prooftree}
    \AxiomC{$\mathcal{D}_1 \equiv \mathcal{D}_2$}
    \RightLabel{$\equiv$\textsc{-Cong-Snd}}
    \UnaryInfC{$\textsc{Snd}(\mathcal{D}_1) \equiv \textsc{Snd}(\mathcal{D}_2)$}
  \end{prooftree}
}

\usepackage{xcolor}

\begin{document}

\title{Categorical Models of Amortized Cost}
\subtitle{An Adjoint Relationship between Cost and Potential}

\keywords{cost analysis, amortized complexity, categorical models}


\begin{CCSXML}
  <ccs2012>
  <concept>
  <concept_id>10003752.10003753.10003754.10003733</concept_id>
  <concept_desc>Theory of computation~Lambda calculus</concept_desc>
  <concept_significance>300</concept_significance>
  </concept>
  <concept>
  <concept_id>10003752.10003790.10011740</concept_id>
  <concept_desc>Theory of computation~Type theory</concept_desc>
  <concept_significance>300</concept_significance>
  </concept>
  </ccs2012>
\end{CCSXML}

\ccsdesc[300]{Theory of computation~Lambda calculus}
\ccsdesc[300]{Theory of computation~Type theory}

\author{David Binder}
\orcid{0000-0003-1272-0972}
\affiliation{
  \department{Department of Computer Science}
  \institution{University of Kent}
  \streetaddress{Kennedy Building}
  \city{Canterbury}
  \postcode{CT2 7FS}
  \country{United Kingdom}
}
\email{D.Binder@kent.ac.uk}

\author{David Corfield}
\orcid{0000-0003-0432-3221}
\affiliation{
  \institution{Independent Researcher}
  \city{Canterbury}
  \country{United Kingdom}
}

\email{D.Corfield@kent.ac.uk}

\author{Dominic Orchard}
\orcid{0000-0002-7058-7842}
\affiliation{
  \department{Department of Computer Science}
  \institution{University of Kent}
  \streetaddress{Kennedy Building}
  \city{Canterbury}
  \postcode{CT2 7FS}
  \country{United Kingdom}
}
\affiliation{
  \department{Department of Computer Science and Technology}
  \institution{University of Cambridge}
  \streetaddress{William Gates Building}
  \city{Cambridge}
  \country{United Kingdom}
}
\email{D.A.Orchard@kent.ac.uk}

\author{Vineet Rajani}
\orcid{0000-0001-7701-8311}
\affiliation{
  \department{Department of Computer Science and Engineering}
  \institution{UNSW Sydney}
  \city{Sydney}
  \country{Australia}
}
\email{v.rajani@unsw.edu.au}

\begin{abstract}
  Various type systems have been developed to track the cost $\kappa$ of a computation using a cost-tracking monad $\costmonad{\kappa}{\tau}$.
  On its own, this only tracks the \emph{worst-case} cost of a computation.
  If we also want to track \emph{amortized cost}, then we can add a type $\potentialtype{\kappa}{\tau}$ which stores \emph{potential} $\kappa$ with a type $\tau$, together with operations for storing and releasing potential.
  In this work, we build on one such system, $\lambda$-amor:
  $\lambda$-amor allows to track cost and potential in the type system and subsumes effect and coeffect-based systems, call-by-value and call-by-name based languages.
  In this paper, we identify the abstract properties that denotational models of type theories for cost and potential have to satisfy:
  Cost and potential must be modelled by an adjoint pair of graded functors, where the functor modelling cost forms both a graded monad and a compatible graded comonad.
  We present three concrete instances of this general abstract scheme:
  (1) A simple set-theoretic model that ignores the cost tracked by the type system,
  (2) the Kripke logical relations model in the original $\lambda$-amor paper (which we show can be turned into an instance of the adjoint model),
  and (3) a novel model based on copresheaves on a monoidal category of costs, where we model pairs and functions by Day convolution and its right-adjoint.
\end{abstract}

\maketitle

\section{Introduction}
\label{sec:intro}
Due to their compositional nature, type systems are widely used to ensure different correctness properties and to track various aspects of program behavior.
One program behaviour that we often want to reason about is \emph{computational cost}; specifically, we want to ensure that a program does not need more time, memory or other computational resources than we expect.
In this article we build on the approach of tracking cost
via a \emph{graded monad}, originally proposed by \citet{Danielsson2008semiformal}, which
has subsequently been built on by others~\cite{Gaboardi2016combining,Radivcek2017monadic,Gaboardi2021gradedhoare}. Graded monads generalise the single endofunctor
of a monad to a family of endofunctors indexed by a monoidal structure, whose operations
are then stratified by the monoidal structure $(\mathbb{M}, +, 0)$. The result, when rendered in a type system, is a pair of constructs:

\begin{minipage}{0.3\textwidth}
  \begin{prooftree}
    \AxiomC{$\Gamma \vdash t : \tau$}
    \RightLabel{\textsc{Return}}
    \UnaryInfC{$\Gamma \vdash \tmret{t} : \costmonad{0}{\tau}$}
  \end{prooftree}
\end{minipage}
\begin{minipage}{0.6\textwidth}
  \begin{prooftree}
    \AxiomC{$\Gamma \vdash t_1 : \costmonad{p_1}{\tau_1}$}
    \AxiomC{$\Gamma, x : \tau_1 \vdash t_2 : \costmonad{p_2}{\tau_2}$}
    \RightLabel{\textsc{Bind}}
    \BinaryInfC{$\Gamma \vdash \tmbind{x}{t_1}{t_2} : \costmonad{(p_1 + p_2)}{\tau_2}$}
  \end{prooftree}
\end{minipage}
\vspace{0.1cm}

\noindent
where $\tmret{t}$ injects a pure term of type $\tau$ into the graded monad at `grade' 0,
witnessing the purity of the computation (cost free), and
$\tmbind{x}{t_1}{t_2}$ sequences two computations taking the addition of their costs.
Graded monads have many applications, but we focus here on their use for tracking
computation costs. Such systems further add an operation for incurring a non-zero cost, which here we present as $\tmtick{\kappa}$ (pronounced \enquote{tick}), incurring
a cost $\kappa$ and returning a term of the unit type:
\begin{prooftree}
    \AxiomC{}
    \RightLabel{\textsc{Tick}}
    \UnaryInfC{$\Gamma \vdash \tmtick{\kappa} : \costmonad{\kappa}{\tyunitp}$}
\end{prooftree}

This approach to cost tracking is often used to provide worst-case guarantees: if we run a program
of type $\costmonad{\kappa}{\tau}$ then we expect it to return a value of type $\tau$ and incur at
most cost $\kappa$ through the ticks which we have used to annotate the program. However, only
worst-case analysis is often not good enough, especially when we analyze algorithms on functional
data structures, where we want to track the \emph{amortized cost} of a
computation. One way to establish amortized cost bounds is via the \emph{method of potentials}
\cite{Tarjan1985amortized}: Cheap and frequent operations allow to \emph{store some potential},
while expensive and rare operations can \emph{release stored potential}. \citet{Rajani2021lamor}
reflect this sort of reasoning in the type system by using an \emph{indexed potential modality}
$\potentialtype{\kappa}{\tau}$. This type describes the terms of type $\tau$ which store a potential
of $\kappa$ units with them. Since potential is a resource which is used to account for cost, they
are not allowed to be freely duplicated. \citet{Rajani2021lamor} handle this by making their type
theory affine. This is reflected in the typing of store and release.


\hspace{-0.5em}\begin{minipage}{0.38\linewidth}
\begin{prooftree}
    \AxiomC{$\Gamma \vdash t : \tau$}
    \RightLabel{\textsc{Store}}
    \UnaryInfC{$\Gamma \vdash \tmstore{t} : \costmonad{\kappa}{(\potentialtype{\kappa}{\tau})}$}
\end{prooftree}
\end{minipage}
\begin{minipage}{0.59\linewidth}
\begin{prooftree}
    \AxiomC{$\Gamma_1 \vdash t_1 : \potentialtype{\kappa_1}{\tau_1}$}
    \AxiomC{$\Gamma_2, x : \tau_1 \vdash t_2 : \costmonad{(\kappa_1 + \kappa_2)}{\tau_2}$}
    \RightLabel{\textsc{Release}}
    \BinaryInfC{$\Gamma_1,\Gamma_2 \vdash \tmrelease{x}{t_1}{t_2} : \costmonad{\kappa_2}{\tau_2}$}
\end{prooftree}
\end{minipage}
\vspace{0.1cm}

The operation $\tmstore{t}$ incurs some cost $\kappa$ now, which is reflected in the monad, and
attaches it as potential in the type $\potentialtype{\kappa}{\tau}$ so that it can be used later.
The operation $\tmrelease{x}{t_1}{t_2}$ uses some potential stored with the term $t_1$ to pay for
some of the cost in the monadic computation $t_2$.

To prove that the amortized analysis provided by the cost monad and the potential type is sound we
can use denotational models. \citet{Rajani2021lamor} develop a step-indexed Kripke model (indexed
sets of terms) using it to establish that running a monadic computation never exceeds the cost
promised by the monad. That model is very concrete, which makes it difficult to combine with other
language features that need to be modelled differently. Building a categorical model will help bring
out the necessary structure needed to handle amortization, which can later be combined in a
compositional way with other kinds of deterministic~\cite{Rajani2021lamor} (like recursion or
bounded use) or probabilistic effects~\cite{Rajani2024modal} (like expected loss or regret).
To that end, we make the following contributions in this work:


\begin{description}
  \item[Contribution 1:] We describe an abstract model which models cost and potential by a family of adjunctions between a strong graded monad and graded comonad (for cost) and a strong graded functor (for potential).
     We describe in detail the properties we require to model all features of the type system.
     We prove that the model is sound, i.e. we prove that every typing derivation can be interpreted in the model.
  \item[Contribution 2:] We provide a mechanization of the type system in the Lean proof assistant and verify that the Kripke model presented by \citet{Rajani2021lamor} is indeed an instance
     of the general categorical model that we have identified.
     This has not been observed by \citet{Rajani2021lamor}, and proving that the necessary laws hold involves a nontrivial amount of technical detail.
  \item[Contribution 3:] We present a novel model instance which uses presheaves on a monoidal category of costs.
     We use Day convolution \cite{Day1970thesis} to model positive pairs, and its right adjoint to model the function type.
     Presheaf models are widely used in categorical semantics, and our model shows that they can also be used for cost and potential.
\end{description}

The rest of this article is structured as follows:
\begin{itemize}
    \item In \cref{sec:example} we present one of the classical examples of amortized cost analysis in the $\lambda$-amor style: The two-list representation of queues.
    \item In \cref{sec:lamor} we present the syntax, typing rules and operational semantics of a small subset of $\lambda$-amor which has enough structure to explain the core mechanisms used to track cost in the type system, but which lacks several abstraction mechanisms which are needed to analyze more complicated examples.
    In particular, the system described in that section does not have exponentials, universal and existential quantification, and recursion.
    \item We investigate the mathematical structure of $\lambda$-amor through a discussion of its possible categorical semantics in \cref{sec:lamor-cat-model}.
    This then motivates an alternate, more reduced, calculus, to which we give the name \textbf{$\lambda$-amor Core},
    presented in \cref{sec:lambda-amor-core}. This calculus comes
    with an equational theory which was missing from the original presentation of $\lambda$-amor.
  \item In \cref{sec:adjoint-model} we present a general model for $\lambda$-amor Core using a
    family of adjunctions and prove its soundness with respect to the equational theory introduced
    in \cref{sec:lambda-amor-core}.
  \item In \cref{sec:kripke-model} we show that the Kripke model presented in \cite{Rajani2021lamor}
    is an instance of the generic model from previous section.
    \item In \cref{sec:model} we present another instance based on covariant preshaves (\emph{copresheaves}) over a category of costs which gives a kind of canonical construction for amortized cost models.
\end{itemize}
We compare to related work in \cref{sec:related-work}, discuss future work in \cref{sec:future-work} and conclude in \cref{sec:conclusion}.
Some of the theorems formalized in this paper have been verified in the proof assistant Lean; these theorems are marked with \lean.

\section{Type-Based Amortized Complexity by Example}
\label{sec:example}
Before we delve deeply into the details of the type theory and model, we will first give a high-level introduction to the kind of problems which motivate their design.
In this section we are therefore going to take a look at a classical example of amortized analysis of an algorithm: \citet[Chapter 3]{Okasaki1998purely}'s presentation of a first-in-first-out (FIFO) queue.
We first introduce in \cref{subsec:example:informal} the example using the informal style that is used in books and paper before we then show how to formally analyse it using the type system of $\lambda$-amor in \cref{subsec:example:formal}.\footnote{%
Our presentation of this example closely follows \citet[Section 3.1]{Rajani2021lamor}, who modelled this in $\lambda$-amor.}

\subsection{Amortized Analysis, Informal Style}
\label{subsec:example:informal}

FIFO queues are characterized by the two operations illustrated in \cref{fig:example:fifo-queue}: $\mathtt{enqueue}$ adds another element to the front of the queue, and $\mathtt{dequeue}$ which removes an element from the end.
\begin{figure}[h]
  \scalebox{0.8}{
  \begin{tikzpicture}[scale=0.8]
    \draw (-0.5,0) rectangle ++(1,1) node[pos=.5] {A};
    \draw[->] (0.5,0.5) -- (3,0.5) node[midway,above] {enqueue};
    \draw (3,0) rectangle ++(1,1) node[pos=.5] {B};
    \draw (4,0) rectangle ++(1,1) node[pos=.5] {C};
    \draw (5,0) rectangle ++(1,1) node[pos=.5] {D};
    \draw (6,0) rectangle ++(1,1) node[pos=.5] {E};
    \draw (7,0) rectangle ++(1,1) node[pos=.5] {F};
    \draw[->] (8,0.5) -- (10.5,0.5) node[midway,above] {dequeue};
    \draw (10.5,0) rectangle ++(1,1) node[pos=.5] {G};
  \end{tikzpicture}}
  \caption{The operations provided by the interface of a FIFO queue.}
  \Description{The operations provided by the interface of a FIFO queue.}
  \label{fig:example:fifo-queue}
\end{figure}

We are interested in the algorithmic complexity of $\mathtt{enqueue}$ and $\mathtt{dequeue}$, and specifically want to obtain an upper-bound for the number of \enquote{cons} operations that they use.
If the queue is na\"{i}vely represented using a singly-linked list (like those available in many functional programming languages), then $\mathtt{enqueue}$ is always an $O(1)$ operation, since we only need to cons one element to the front of the list, and $\mathtt{dequeue}$ is always an $O(n)$ operation, since we always need to construct a new list with one element less at the \emph{end}.
Luckily, we can do better than that.

One optimized implementation that is widely used in practice represents a FIFO queue by \emph{two} singly-linked lists, where the first part of the queue is stored in the \emph{front list}, and the second part of the queue is stored in reverse order in the \emph{back list}.
This is illustrated in \cref{fig:example:two-list-repr}, which shows the effect of enqueuing and dequeing elements to the queue.

\begin{figure}[h]
  \begin{subfigure}{\textwidth}
    \begin{center}
    \scalebox{0.8}{
    \begin{tikzpicture}[scale=0.8]
        \node[right] at (0,3.25) {\textbf{Before}};
        \draw (0,0.25) rectangle ++(6,2.75);
        \node at (0.5,2.25) {front:};
        \draw (1,1.75) rectangle ++(1,1) node[pos=.5] {B};
        \draw (2.5,1.75) rectangle ++(1,1) node[pos=.5] {C};
        \draw[->, thick] (2,2.25) -- ++(0.5,0);
        \draw[-{Square}, thick] (3.5,2.25) -- ++(0.5,0);
        \node at (0.5,1) {back:};
        \draw (1,0.5) rectangle ++(1,1) node[pos=.5] {F};
        \draw (2.5,0.5) rectangle ++(1,1) node[pos=.5] {E};
        \draw (4,0.5) rectangle ++(1,1) node[pos=.5] {D};
        \draw[->, thick] (2,1) -- ++(0.5,0);
        \draw[->, thick] (3.5,1) -- ++(0.5,0);
        \draw[-{Square}, thick] (5,1) -- ++(0.5,0);

        \node[right] at (7,3.25) {\textbf{After}};
        \draw (7,0.25) rectangle ++(6,2.75);
        \node at (7.5,2.25) {front:};
        \draw (8,1.75) rectangle ++(1,1) node[pos=.5] {A};
        \draw (9.5,1.75) rectangle ++(1,1) node[pos=.5] {B};
        \draw (11,1.75) rectangle ++(1,1) node[pos=.5] {C};
        \draw[->, thick] (9,2.25) -- ++(0.5,0);
        \draw[->, thick] (10.5,2.25) -- ++(0.5,0);
        \draw[-{Square}, thick] (12,2.25) -- ++(0.5,0);
        \node at (7.5,1) {back:};
        \draw (8,0.5) rectangle ++(1,1) node[pos=.5] {F};
        \draw (9.5,0.5) rectangle ++(1,1) node[pos=.5] {E};
        \draw (11,0.5) rectangle ++(1,1) node[pos=.5] {D};
        \draw[->, thick] (9,1) -- ++(0.5,0);
        \draw[->, thick] (10.5,1) -- ++(0.5,0);
        \draw[-{Square}, thick] (12,1) -- ++(0.5,0);
    \end{tikzpicture}}
    \end{center}
    \caption{Before and after enqueuing A; uses at most $O(1)$ cons operations.}
    \label{fig:example:two-list-repr:enqueue}
  \end{subfigure}

  \begin{subfigure}{\textwidth}
    \begin{center}
    \scalebox{0.8}{
    \begin{tikzpicture}[scale=0.8]
        \node[right] at (0,3.25) {\textbf{Before}};
        \draw (0,0.25) rectangle ++(6,2.75);
        \node at (0.5,2.25) {front:};
        \draw (1,1.75) rectangle ++(1,1) node[pos=.5] {B};
        \draw (2.5,1.75) rectangle ++(1,1) node[pos=.5] {C};
        \draw[->, thick] (2,2.25) -- ++(0.5,0);
        \draw[-{Square}, thick] (3.5,2.25) -- ++(0.5,0);
        \node at (0.5,1) {back:};
        \draw (1,0.5) rectangle ++(1,1) node[pos=.5] {F};
        \draw (2.5,0.5) rectangle ++(1,1) node[pos=.5] {E};
        \draw (4,0.5) rectangle ++(1,1) node[pos=.5] {D};
        \draw[->, thick] (2,1) -- ++(0.5,0);
        \draw[->, thick] (3.5,1) -- ++(0.5,0);
        \draw[-{Square}, thick] (5,1) -- ++(0.5,0);

        \node[right] at (7,3.25) {\textbf{After}};
        \draw (7,0.25) rectangle ++(6,2.75);
        \node at (7.5,2.25) {front:};
        \draw (8,1.75) rectangle ++(1,1) node[pos=.5] {B};
        \draw (9.5,1.75) rectangle ++(1,1) node[pos=.5] {C};
        \draw[->, thick] (9,2.25) -- ++(0.5,0);
        \draw[-{Square}, thick] (10.5,2.25) -- ++(0.5,0);
        \node at (7.5,1) {back:};
        \draw (8,0.5) rectangle ++(1,1) node[pos=.5] {E};
        \draw (9.5,0.5) rectangle ++(1,1) node[pos=.5] {D};
        \draw[->, thick] (9,1) -- ++(0.5,0);
        \draw[-{Square}, thick] (10.5,1) -- ++(0.5,0);
    \end{tikzpicture}}
    \end{center}
    \caption{Before and after dequeuing F with non-empty back list; uses at most $O(1)$ cons operations.}
    \label{fig:example:two-list-repr:dequeue-simple}
  \end{subfigure}

  \begin{subfigure}{\textwidth}
    \begin{center}
    \scalebox{0.8}{
    \begin{tikzpicture}[scale=0.8]
        \node[right] at (0,3.25) {\textbf{Before}};
        \draw (0,0.25) rectangle ++(6,2.75);
        \node at (0.5,2.25) {front:};
        \draw (1,1.75) rectangle ++(1,1) node[pos=.5] {B};
        \draw (2.5,1.75) rectangle ++(1,1) node[pos=.5] {C};
        \draw (4,1.75) rectangle ++(1,1) node[pos=.5] {D};
        \draw[->, thick] (2,2.25) -- ++(0.5,0);
        \draw[->, thick] (3.5,2.25) -- ++(0.5,0);
        \draw[-{Square}, thick] (5,2.25) -- ++(0.5,0);
        \node at (0.5,1) {back:};
        \draw[-{Square}, thick] (1,1) -- ++(0.5,0);

        \node[right] at (7,3.25) {\textbf{After}};
        \draw (7,0.25) rectangle ++(6,2.75);
        \node at (7.5,2.25) {front:};
        \draw[-{Square}, thick] (8,2.25) -- ++(0.5,0);
        \node at (7.5,1) {back:};
        \draw (8,0.5) rectangle ++(1,1) node[pos=.5] {C};
        \draw (9.5,0.5) rectangle ++(1,1) node[pos=.5] {B};
        \draw[->, thick] (9,1) -- ++(0.5,0);
        \draw[-{Square}, thick] (10.5,1) -- ++(0.5,0);
    \end{tikzpicture}}
    \end{center}
    \caption{Before and after dequeuing F with empty back list; uses $O(n)$ cons operations, with $n =$ length of front list.}
    \label{fig:example:two-list-repr:dequeue-complicated}
  \end{subfigure}

  \caption{Enqueue and dequeue for the two-list representation of FIFO queues.}
  \label{fig:example:two-list-repr}
\end{figure}

In \cref{fig:example:two-list-repr:enqueue} we can see that enqueuing an element $A$ to the front of the list requires exactly one cons operation, and therefore is in $O(1)$.
\cref{fig:example:two-list-repr:dequeue-simple} and \cref{fig:example:two-list-repr:dequeue-complicated} illustrate the dequeue operation:
If the back list is non-empty, as in \cref{fig:example:two-list-repr:dequeue-simple}, then we do not have to perform any cons-ing, and the operation is clearly in $O(1)$.
If the back list is empty however, as in \cref{fig:example:two-list-repr:dequeue-complicated}, we need to perform $n-1$ cons operations to reverse the front list and create the new back list.
This analysis shows that the \emph{worst-case} of these operations is $O(n)$.

What we want to establish instead, however, is the following \emph{amortized} guarantee:
Every sequence of $n$ $\mathtt{enqueue}$ and $\mathtt{dequeue}$ operations uses the cons constructor at most $O(n)$ times.
We can establish this property using the so-called \enquote{Banker's method} \cite{Tarjan1985amortized,Okasaki1998purely}.
Every element of the front list is associated with some potential.\footnote{Also sometimes called \enquote{credits}, but we use the term \enquote{potential} throughout.}
This potential is only used for bookkeeping and not present at runtime.
Whenever we enqueue an element to the front list we now incur a cost of 2, one for the cons operation and one for storing some potential in the newly cons-ed element.
The payoff comes when we consider the dequeue case of \cref{fig:example:two-list-repr:dequeue-complicated} above:
We can now use the $n$ units of stored potential in the front list to pay for the construction of the reversed back list.

Note that this analysis is only sound under one of two restrictions:
Either queues are only used affinely, i.e. we never invoke $\mathtt{dequeue}$ multiple times on the same queue, or we use lazyness to ensure that the result of the expensive operation is shared and never computed more than once.
Lazyness makes this sort of theoretical analysis sound for implementations in call-by-need languages like Haskell, but for the purposes of this paper, and following \citet{Rajani2021lamor}, we enforce affine use of resources using a substructural type system.

\subsection{Type-Based Amortized Analysis}
\label{subsec:example:formal}

Let us now see how we can express the informal analysis of the previous section using a type system, such as the one presented by \citet{Rajani2021lamor}.
We use some features present in the type system of full $\lambda$-amor but not our formalization (indexed types for sized vectors) and omit terms.

We start with a definition of the $\mathtt{cons}$ function which wraps the constructor $::$ with a call to the tick function, representing the fact that we want to account for the number of $::$ constructors used.
\begin{align*}
  &\mathtt{cons} : \forall a\ n, a \multimap \tyvec{n}{a} \multimap \costmonad{1}{(\tyvec{(n + 1)}{a})}
\end{align*}
Using this function we can also define a function $\mathtt{reverse}$ whose cost is bound by the length of the original vector:
Reversing a vector of length $n$ incurs a cost of $n$.
\begin{align*}
  &\mathtt{reverse} : \forall a\ n, \tyvec{n}{a} \multimap \costmonad{n}{(\tyvec{n}{a})}
\end{align*}
By storing 1 potential with each element of the vector, i.e. by replacing the type $\tyvec{n}{a}$ with the type $\tyvec{n}{(\potentialtype{1}{a})}$, we can change the cost of $\mathtt{cons}$ and $\mathtt{reverse}$ and implement the following two functions instead:
\begin{align*}
  &\mathtt{cons}' : \forall a\ n, a \multimap \tyvec{n}{(\potentialtype{1}{a})} \multimap \costmonad{2}{(\tyvec{(n + 1)}{(\potentialtype{1}{a})})}\\
  &\mathtt{reverse}' : \forall a\ n, \tyvec{n}{(\potentialtype{1}{a})} \multimap \costmonad{0}{(\tyvec{n}{a})}
\end{align*}
We can see that every $\mathtt{cons'}$ operation now incurs a cost of $2$ instead of $1$, but reversing the list using $\mathtt{reverse}'$ now comes for free by consuming the potential stored in the vector.
This motivates the following definition of queues of $n$ elements of type $a$:
\begin{align*}
  &\tyqueue{n}{a} \coloneq \exists m\ p, m + p = n \wedge  \tyvec{m}{(\potentialtype{1}{a})} \otimes \tyvec{p}{a}
\end{align*}
A queue of length $n = m + p$ stores $m$ elements in the front list together with some potential, and $p$ elements in the back list without potential.
Using the $\mathtt{cons}'$ and $\mathtt{reverse'}$ functions from above we can now implement the $\mathtt{enqueue}$ and $\mathtt{dequeue}$ operations with the following type signatures.
\begin{align*}
  &\mathtt{enqueue} : \forall n\ a, a \multimap \tyqueue{n}{a} \multimap \costmonad{2}{(\tyqueue{(n+1)}{a})}\\
  &\mathtt{dequeue} : \forall n\ a, \tyqueue{(n + 1)}{a} \multimap \costmonad{0}{(a \otimes \tyqueue{n}{a})}
\end{align*}
Starting with an empty queue, and assuming that the type system is sound, we can observe that running a computation which chains $n$ $\mathtt{enqueue}$ and $\mathtt{dequeue}$ operations will run tick $\tmtick{1}$ at most $2n$ times.
This formally proves that these operations have amortized complexity of $O(1)$.

\section{Lambda-Amor}
\label{sec:lamor}
The original presentation of $\lambda$-amor \cite{Rajani2021lamor} chose a specific domain to model the cost of computations: positive real numbers $\mathbb{R}^+$.
In this article, we want to generalize their approach and extract the essential abstract properties that a domain for costs has to fulfil.
We therefore generalize costs from $\mathbb{R}^+$ to an arbitrary ordered commutative monoid $\costcategory$.

\begin{definition}[Ordered Commutative Monoid]
  \label{def:lamor:ordered-commutative-monoid}
  An ordered commutative monoid $(\costcategory, 0, +, \leq)$ combines a commutative monoid $(\costcategory, 0, +)$ and a pre-order $(\costcategory, \leq)$.
  Furthermore, the monoid operation has to be compatible with the pre-order.
  This means that whenever $x \leq y$, then $x + z \leq y + z$.
\end{definition}

The notation for the monoid operation and unit has been chosen deliberately: The standard intended model for costs are positive natural or real numbers with 0 and addition.
We also require all cost models to have a least element which coincides with the unit $0$.

\begin{example}[Bounded Cost]
  \label{ex:lamor:bounded-cost}
  The ordered monoid $(\mathbb{N}, 0, +, \leq)$ with the usual interpretation of $0$, $+$ and $\leq$ can be used to model bounded cost.
\end{example}

The following definition of the syntax of $\lambda$-amor is then parameterized over the choice of a specific ordered monoid for cost.
This parameterization can be seen in the syntax of indices, where constants $c$ are used to inject elements of the ordered monoid into the syntax.

In this article, we are not formalising the entire language presented in \citet{Rajani2021lamor} but only the subset without exponentials, since this is sufficient to show the model constructions discussed here.
There are some minor syntactic
variations introduced for the sake of notational consistency and one more
deviation with the addition of the $\mathtt{run}$ construct which appears in
the later extension of p$\lambda$-amor~\cite{Rajani2024modal}, which we explain in detail below.

\begin{definition}[Syntax of $\lambda$-amor core]
  \[
  \begin{array}{lclr}
    \tau & \Coloneqq & \tyunitp \mid \tyunitn \mid \tyvoid \mid \tau \multimap \tau \mid \tau \otimes \tau \mid \tau \with \tau \mid \tau \oplus \tau \mid \potentialtype{p}{\tau} \mid \costmonad{p}{\tau} &\emph{Types} \\
    e & \Coloneqq & v \mid x \mid e_1\ e_2 \mid \tmcase{e}{\tminl{x} \Rightarrow e, \tminr{x} \Rightarrow e} \mid e.\tmfst \mid e.\tmsnd & \emph{Expressions} \\
    & \mid & \tmcase{e}{\tmtensor{x}{y} \Rightarrow e} \mid \tmcase{e}{\tmunit \Rightarrow e} \mid \tmcase{e}{} \mid \tmrun{e} & \\
    v & \Coloneqq & \lambda x.e \mid \tmunit \mid \tminl{e} \mid \tminr{e} \mid \tmtensor{e_1}{e_2} \mid \tmcocase{\tmfst \Rightarrow e, \tmsnd \Rightarrow e} & \emph{Values} \\
    & \mid & \tmcocase{} \mid \tmret{e} \mid \tmtick{p} \mid \tmbind{x}{e}{e} \mid \tmstore{e} \mid \tmrelease{x}{e}{e} & \\
    p & \Coloneqq & 0 \mid p + p \mid c & \emph{Indices} \\
    \Gamma & \Coloneqq & \cdot \mid \Gamma, x : \tau & \emph{Typing contexts}
  \end{array}
  \]
\end{definition}

Let us start with the ordinary logical connectives that $\lambda$-amor core has in common with many other substructural type systems.
Affine functions $\tau_1 \multimap \tau_2$ are introduced by lambda abstraction $\lambda x.e$ and eliminated by function application $e_1\, e_2$.
There are two types of products: multiplicative products $\tau_1 \otimes \tau_2$ (pronounced \enquote{tensor}) are introduced by pairing two elements with $\tmtensor{e_1}{e_2}$ and eliminated using pattern matching $\tmcase{e_1}{\tmtensor{x}{y} \Rightarrow e_2}$, whereas additive products $\tau_1 \with \tau_2$ (pronounced \enquote{with}) are introduced using copattern matching $\tmcocase{\tmfst \Rightarrow e_1, \tmsnd \Rightarrow e_2}$ and eliminated using projections $e.\tmfst$ and $e.\tmsnd$.
The unit for $\otimes$ is $\tyunitp$, which is introduced by $\tmunit$ and eliminated using $\tmcase{e}{\tmunit \Rightarrow e}$, whereas the unit for $\with$ is $\tyunitn$, which is introduced by an empty $\tmcocase{}$ and has no elimination rule.
Sums $\tau_1 \oplus \tau_2$ are introduced using left injections $\tminl{e}$ and right injections $\tminr{e}$, and eliminated using a pattern match on both alternatives $\tmcase{e_1}{\tminl{x} \Rightarrow e_2, \tminr{y} \Rightarrow e_3}$.
The unit for $\oplus$ is $\tyvoid$ which has the elimination rule $\tmcase{e}{}$ and no introduction rule.
The types $\otimes$, $\oplus$, $\tyunitp$, $\tyvoid$ are positive, or data types, whereas $\with$, $\tyunitn$ and $\multimap$ are negative, or codata types \cite{Downen2019codata}.

The type constructor $\costmonad{p}{\tau}$ describes computations which incur
a cost of at most $p \in \mathbb{C}$ in order to produce a value of type $\tau$.
This type constructor has the structure of a \emph{graded monad}~\cite{Katsumata2014parametric} and \emph{graded comonad} and comes with three constructs: $\mathtt{bind}$ for sequencing computations,
$\mathtt{ret}$ for injecting pure computations into computations, and $\tmtick{p}$
for incurring a cost of $p$.

Dual to the idea of incurring cost (`consuming fuel') is that of
storing potential (`producing fuel'). Potentials are modelled using
the type $\potentialtype{p}{\tau}$ denoting a computation which
computes a value of type $\tau$ whilst also storing potential
$p \in \mathbb{C}$ for later use.  Two constructs $\mathtt{store}$ and
$\mathtt{release}$ provide the interaction between potential and cost.
We defer the rest of the explanation of the meaning of the syntax
until the typing relation is in view, since the meaning of constructs
for cost and potential tracking can be explained more clearly through
their types.

\subsection{Typing Rules}
\label{subsec:lamor:typing-rules}
We write $\mathcal{D} : \Gamma \vdash e : \tau$ to say that the typing derivation $\mathcal{D}$ proves the judgement $\Gamma \vdash e : \tau$.
In \cref{sec:lambda-amor-core} we are going to develop an equational theory on typing derivations.
For example, given the following rule we will write $\textsc{Tensor}(\mathcal{D}_1, \mathcal{D}_2)$ to denote the derivation that results from combining the two subderivations.

\begin{prooftree}
    \AxiomC{$\Gamma_1 \vdash e_1 : \tau_1$}
    \AxiomC{$\Gamma_2 \vdash e_2 : \tau_2$}
    \RightLabel{\textsc{Tensor}}
    \BinaryInfC{$\Gamma_1, \Gamma_2 \vdash \tmtensor{e_1}{e_2} : \tau_1 \otimes \tau_2$}
\end{prooftree}

Typing derivations are defined by three classes of rules: Core structural rules (\cref{def:lamor:core-typing-rules}), logical rules (\cref{def:lamor:logical-typing-rules}), and rules for the cost monad and potential type (\cref{def:lamor:typing-cost-potential}).

\begin{definition}[Core Typing Rules]
  \label{def:lamor:core-typing-rules}
  These structural rules are independent of any specific types that are present in the language.

  \begin{minipage}{0.45\textwidth}
    \tyrulevar
  \end{minipage}
  \begin{minipage}{0.45\textwidth}
    \tyrulesub
  \end{minipage}\\

  \begin{minipage}{0.45\textwidth}
    \tyrulewk
  \end{minipage}
  \begin{minipage}{0.45\textwidth}
    \tyruleex
  \end{minipage}
\end{definition}

The system $\lambda$-amor is affine, so \cref{def:lamor:core-typing-rules} only contains rules $\textsc{Wk}$ for weakening and $\textsc{Ex}$ for exchange, but no rule
for contraction.
The rule $\textsc{Sub}$ formalizes subsumption both in the type of the term $e$ and in the context that we are using to type $e$, leveraging the subtyping
relation (Section~\ref{subsec:lamor:subtyping}).

\begin{definition}[Typing Rules for Logical Constants]
  \label{def:lamor:logical-typing-rules}
  The following rules govern the logical connectives present in the system:

  \begin{minipage}{0.45\textwidth}
    \tyruleabs
  \end{minipage}
  \begin{minipage}{0.45\textwidth}
    \tyruleapp
  \end{minipage}

  \begin{minipage}{0.3\textwidth}
    \tyruleinl
  \end{minipage}
  \begin{minipage}{0.3\textwidth}
    \tyruleinr
  \end{minipage}
  \begin{minipage}{0.3\textwidth}
    \tyrulevoid
  \end{minipage}

  \begin{minipage}{0.35\textwidth}
    \tyruleunitp
  \end{minipage}
  \begin{minipage}{0.5\textwidth}
    \tyrulecaseunitp
  \end{minipage}

  \tyrulecasesum

  \tyruletensor

  \begin{minipage}{0.6\textwidth}
    \tyrulecasetensor
  \end{minipage}
  \begin{minipage}{0.35\textwidth}
    \tyruleunitn
  \end{minipage}

  \begin{minipage}{0.52\textwidth}
  \tyrulewith
  \end{minipage}
  \begin{minipage}{0.22\textwidth}
    \tyrulefst
  \end{minipage}
  \begin{minipage}{0.22\textwidth}
    \tyrulesnd
  \end{minipage}
\end{definition}

\begin{definition}[Typing Rules for Cost and Potential]
  \label{def:lamor:typing-cost-potential}
  The following typing rules govern the operations for the cost graded monad $\costmonad{p}{\tau}$ and potentials $\potentialtype{p}{\tau}$, and their interaction.

  \vspace{0.1cm}
  \begin{minipage}{0.6\textwidth}
  \tyrulebind
  \end{minipage}
  \begin{minipage}{0.4\textwidth}
    \tyruleret
  \end{minipage} \\

  \begin{minipage}{0.5\textwidth}
    \tyrulerun
  \end{minipage}
  \begin{minipage}{0.5\textwidth}
    \tyruletick
  \end{minipage}
  \vspace{0.1cm}


The \textsc{Bind} and \textsc{Ret} rules are standard for graded monads.
The expression $\tmbind{x}{e_1}{e_2}$ captures sequential composition of
a cost-incurring computation $e_1$, bindings its result to $x$ in the scope of $e_2$, leading to the additive composition of costs. The expression $\tmret{e}$ allows a pure term to be `injected' into the graded monad, marked as having $0$ cost.
The \textsc{Tick} rule types an effectful operation denoted $\tmtick{p}$ incurring cost $p$. 

The \textsc{Run} typing rule captures that we can project out of a costful
computation when the cost is known to be $0$; this witnesses that all
cost-incurring effects that are possible in the graded monad are
tracked by the grade, and thus a computation at grade 0 does not use
any effects.
This primitive is where we deviate slightly from the original definition of $\lambda$-amor~\cite{Rajani2021lamor} which
had no such primitive.
However, a later extension of $\lambda$-amor to probabilistic programs, called p$\lambda$-amor~\cite{Rajani2024modal}, provides
in its technical appendix~\cite{Rajani2024technical}
a `purify' primitive with the same typing as \textsc{Run}, although extended
to the indexed-type setting of p$\lambda$-amor.
We include this primitive here and discuss how it is important for our development in Section~\ref{sec:lambda-amor-core}.

Potential is then introduced and eliminated in typing rules that specify an interaction with cost:

  \begin{minipage}{0.35\textwidth}
    \tyrulestore
  \end{minipage}
  \begin{minipage}{0.65\textwidth}
    \tyrulerelease
  \end{minipage} \\

Dual to the idea of incurring cost is that of \emph{storing} potential, which is triggered by the
$\tmstore{e}$ constructor. Its typing (rule \textsc{Store}) shows that we can lift a pure term $e$
into a computation that incurrs $p$ cost to store $p$ potential with the term.

The \textsc{Release} rule explains that, given a term $e_1$ storing a potential of $p_1$ units, we
can use this to account for the cost of a computation $e_2$ that costs $p_1 + p_2$ leaving a cost of
$p_2$ remaining. As part of this, we get a binding for $x : \tau_1$ in the scope of $e_2$. The rule
for $\mathtt{release}$ thus packs together a substitution, an interaction between potential and
cost, and a notion of cost splitting.
\end{definition}

The following definition is often presented as the \enquote{substitution lemma} for typing derivations.
We present it here as a definition instead so that we can use it in the equational theory for typing derivations, where we need it to express $\beta$-equality.

\begin{definition}[Substitution on Typing Derivations]
  \label{def:lamor:substitution-derivations}
  We will formulate this as two derived operations on typing operations, once for the unary and once for the binary case.
  \begin{prooftree}
    \AxiomC{$\Gamma_1, x : \tau_1 \vdash e_1 : \tau_2$}
    \AxiomC{$\Gamma_2 \vdash e_2 : \tau_1$}
    \RightLabel{\textsc{Subst}$_1$}
    \BinaryInfC{$\Gamma_1, \Gamma_2 \vdash e_1[e_2/x] : \tau_2$}
  \end{prooftree}
  \begin{prooftree}
    \AxiomC{$\Gamma_1, x : \tau_1, y : \tau_2 \vdash e_1 : \tau_3$}
    \AxiomC{$\Gamma_2 \vdash e_2 : \tau_1$}
    \AxiomC{$\Gamma_3 \vdash e_3 : \tau_2$}
    \RightLabel{\textsc{Subst}$_2$}
    \TrinaryInfC{$\Gamma_1, \Gamma_2, \Gamma_3 \vdash e_1[e_2/x, e_3/y] : \tau_3$}
  \end{prooftree}
  In both cases the simultaneous definition and proof proceeds by recursion on the structure of the typing derivation of $e_1$.
\end{definition}

\subsection{Subtyping}
\label{subsec:lamor:subtyping}

The costs tracked by our system are ordered.
Since these costs can also appear in types, such as $\potentialtype{p}{\tau}$ or $\costmonad{p}{\tau}$, we obtain a subtyping order for types.
This subtyping order is only induced by the order on costs, we do not consider any other subtyping relationship based on type structure.

\begin{definition}[Subtyping]
  The subtyping relationship is characterized by the following rules, where we have omitted the usual congruence rules for type formers.
  \vspace{0.2cm}

  \begin{minipage}{0.3\textwidth}
    \begin{prooftree}
      \AxiomC{}
      \RightLabel{$\textsc{Refl-}\tyunitp$}
      \UnaryInfC{$\tyunitp \sub \tyunitp$}
    \end{prooftree}
  \end{minipage}
  \begin{minipage}{0.3\textwidth}
    \begin{prooftree}
      \AxiomC{}
      \RightLabel{$\textsc{Refl-}\tyunitn$}
      \UnaryInfC{$\tyunitn \sub \tyunitn$}
    \end{prooftree}
  \end{minipage}
  \begin{minipage}{0.3\textwidth}
    \begin{prooftree}
      \AxiomC{}
      \RightLabel{$\textsc{Refl-}\tyvoid$}
      \UnaryInfC{$\tyvoid \sub \tyvoid$}
    \end{prooftree}
  \end{minipage}
  \vspace{0.15cm}



  \begin{minipage}{0.45\textwidth}
    \begin{prooftree}
      \AxiomC{$p_2 \leq p_1$}
      \AxiomC{$\tau_1 \sub \tau_2$}
      \RightLabel{\textsc{S-Pot}}
      \BinaryInfC{$\potentialtype{p_1}{\tau_1} \sub \potentialtype{p_2}{\tau_2}$}
    \end{prooftree}
  \end{minipage}
  \begin{minipage}{0.45\textwidth}
    \begin{prooftree}
      \AxiomC{$p_1 \leq p_2$}
      \AxiomC{$\tau_1 \sub \tau_2$}
      \RightLabel{\textsc{S-Mon}}
      \BinaryInfC{$\costmonad{p_1}{\tau_1} \sub \costmonad{p_2}{\tau_2}$}
    \end{prooftree}
  \end{minipage}
  \vspace{0.15cm}



  \label{def:lamor:subtyping-derivation}
\end{definition}

We write $\mathcal{S} : \tau_1 \sub \tau_2$ to express that $\mathcal{S}$ is a derivation which shows that $\tau_1$ is a subtype of $\tau_2$.
Subtyping between types can be extended in the obvious way to subtyping between contexts, and we write $\mathcal{X} : \Gamma_1 \sub \Gamma_2$ for a derivation $\mathcal{X}$ which proves that $\Gamma_1$ is a subcontext of $\Gamma_2$.

We do not have general transitivity or reflexivity rules, but three different reflexivity rules for the atomic types $\tyunitp, \tyunitn$ and $\tyvoid$ instead.
However, general reflexivity can be derived and transitivity is admissible, as the following two lemmas show.
\begin{lemma}[Derivability of Reflexivity, \lean]
  For any type $\tau$ and context $\Gamma$, there are derivations $\textsc{Refl} : \tau \sub \tau$ and $\textsc{Refl} : \Gamma \sub \Gamma$.
\end{lemma}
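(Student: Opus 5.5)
The plan is a structural induction on the type $\tau$, constructing the derivation $\textsc{Refl} : \tau \sub \tau$ case by case, and then lifting it pointwise to contexts by induction on the length of $\Gamma$.

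For the types, the base cases $\tyunitp$, $\tyunitn$ and $\tyvoid$ are covered directly by the atomic rules $\textsc{Refl-}\tyunitp$, $\textsc{Refl-}\tyunitn$ and $\textsc{Refl-}\tyvoid$.

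For the binary type formers $\tau_1 \multimap \tau_2$, $\tau_1 \otimes \tau_2$, $\tau_1 \with \tau_2$ and $\tau_1 \oplus \tau_2$, I take the derivations $\tau_1 \sub \tau_1$ and $\tau_2 \sub \tau_2$ given by the induction hypothesis. I then apply the corresponding (omitted) congruence rule.
\begin{itemize}
\item For the covariant formers $\otimes$, $\with$ and $\oplus$ this is immediate.
\item For $\multimap$ the congruence rule is presumably contravariant in the domain. It then needs $\tau_1 \sub \tau_1$ in the flipped direction, but since reflexivity is symmetric the same derivation serves.
\end{itemize}

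For the graded formers I use reflexivity of the preorder $\leq$ on $\costcategory$, which holds by \cref{def:lamor:ordered-commutative-monoid}. This gives $p \leq p$.
\begin{itemize}
\item For $\costmonad{p}{\tau}$ I combine $p \leq p$ with the induction hypothesis $\tau \sub \tau$ via \textsc{S-Mon}.
\item For $\potentialtype{p}{\tau}$ I do the same via \textsc{S-Pot}. Its contravariance in the grade is harmless because the premise is again just $p \leq p$.
\end{itemize}

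For contexts I proceed by induction on $\Gamma$. The empty context is a subcontext of itself by the base rule of the pointwise extension. For $\Gamma, x : \tau$ I combine the inductive derivation $\Gamma \sub \Gamma$ with $\textsc{Refl} : \tau \sub \tau$ from the type-level result.

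I expect no real obstacle. The only points needing care are, first, the exact shape of the omitted congruence rules, especially the variance of $\multimap$. Second, in the Lean mechanization, the fact that indices $p$ are syntactic, so that $p \leq p$ must be discharged through the interpretation of indices into $\costcategory$ rather than by syntactic identity.
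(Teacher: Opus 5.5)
Your proposal is correct and matches the expected argument. You do structural induction on $\tau$: the three atomic reflexivity rules for the base types, the congruence rules for the connectives, and reflexivity of the preorder $\leq$ to instantiate \textsc{S-Mon} and \textsc{S-Pot}. You then lift this pointwise to contexts. The paper gives no written proof and only marks the lemma as mechanized in Lean, but this is essentially the argument it relies on.
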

\begin{lemma}[Admissibility of Transitivity, \lean]
    For all types $\tau_1, \tau_2$ and $\tau_3$, if there are derivations $\mathcal{S}_1 : \tau_1 \sub \tau_2$ and $\mathcal{S}_2 : \tau_2 \sub \tau_3$, then there is a derivation $\textsc{Trans}(\mathcal{S}_1,\mathcal{S}_2) : \tau_1 \sub \tau_3$.
    \label{lem:lamor:subtyping-trans}
\end{lemma}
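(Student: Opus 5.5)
The plan is to prove admissibility of transitivity by induction on the structure of $\mathcal{S}_1$, inverting $\mathcal{S}_2$ in each case. The key structural fact is that subtyping is purely structural apart from the grades. Every rule, whether one of the three reflexivity axioms, \textsc{S-Pot}, \textsc{S-Mon}, or an omitted congruence rule for $\multimap, \otimes, \with, \oplus$, has a conclusion whose two sides share the same outermost type constructor. Hence the last rule of $\mathcal{S}_1 : \tau_1 \sub \tau_2$ fixes the head constructor of $\tau_2$. Since only one rule can conclude a judgement whose left-hand side has that head, the last rule of $\mathcal{S}_2 : \tau_2 \sub \tau_3$ is forced to be the same rule. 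This lets me do a simple case analysis without an auxiliary measure. An alternative is induction on the type $\tau_2$, which may match the Lean formalization more smoothly.

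\textbf{Base cases.} If $\mathcal{S}_1$ ends in $\textsc{Refl-}\tyunitp$, $\textsc{Refl-}\tyunitn$ or $\textsc{Refl-}\tyvoid$, then $\tau_2$ is the corresponding atomic type. Inversion then forces $\mathcal{S}_2$ to be the same axiom, so $\tau_3$ equals $\tau_1$ and we take $\textsc{Trans}(\mathcal{S}_1,\mathcal{S}_2) := \mathcal{S}_1$.

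\textbf{Graded cases.} For \textsc{S-Mon} we have premises $p_1 \leq p_2$ with $\tau_1' \sub \tau_2'$, and $p_2 \leq p_3$ with $\tau_2' \sub \tau_3'$. We apply \textsc{S-Mon} to $p_1 \leq p_3$, which holds by transitivity of the preorder on $\costcategory$ (\cref{def:lamor:ordered-commutative-monoid}), and to $\textsc{Trans}$ of the two subderivations given by the induction hypothesis. \textsc{S-Pot} is handled in the same way, except that the grade inequalities are reversed ($p_3 \leq p_2 \leq p_1$); again only preorder transitivity is needed.

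\textbf{Congruence cases.} The covariant constructors $\otimes, \with, \oplus$ are immediate componentwise applications of the induction hypothesis. The $\multimap$ case is the only delicate one, and I expect it to be the main obstacle. If its congruence rule is contravariant in the domain, then from $\sigma_2 \sub \sigma_1$ and $\sigma_3 \sub \sigma_2$ we must compose in the reversed order, obtaining $\textsc{Trans}(\mathcal{S}_2', \mathcal{S}_1') : \sigma_3 \sub \sigma_1$. This is still a call on a subderivation of $\mathcal{S}_2$ together with one of $\mathcal{S}_1$, so the recursion is not structural in $\mathcal{S}_1$ alone. To make it go through I would use induction on the type $\tau_2$, or equivalently on the sum of the sizes of $\mathcal{S}_1$ and $\mathcal{S}_2$: both orders of composition strictly decrease either measure. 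Termination is then clear, and the remaining work is bookkeeping of the inversion lemmas.
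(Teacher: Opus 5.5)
Your argument is correct. The subtyping rules here are syntax-directed, so casing on $\mathcal{S}_1$ and inverting $\mathcal{S}_2$ goes through, and the grades in \textsc{S-Pot} and \textsc{S-Mon} need only transitivity of the preorder. The paper gives no written proof beyond its Lean mechanization, and this structural induction is the natural argument for it. Set the induction up on the middle type $\tau_2$ (or on combined derivation size) from the start: as you note yourself, plain induction on $\mathcal{S}_1$ fails if the $\multimap$ congruence rule is contravariant, which contradicts your opening claim that no auxiliary measure is needed.
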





\subsection{Operational Semantics}
\label{subsec:lamor:operational-semantics}

We define two kinds of evaluation relations a pure ($\evaluatesto$) and a forcing
($\forcesto{\cost}$). The pure evaluation (Definition~\ref{def:pure-eval}) is standard and do not
track any cost. Monadic terms ($\mathtt{ret}$, $\tmtick{p}$, $\mathtt{store}$, $\mathtt{bind}$,
$\mathtt{release}$) are treated as values in the pure relation. To evaluate monadic (cost bearing)
computations we use the forcing relation (Definition~\ref{def:forcing-eval}). The relation evaluates
a monadic value of type $\costmonad{p'}{\tau}$ to a value of type $\tau$ while tracking the cost of
such an evaluation. The rules for the both the relations are described below.



\begin{definition}[Pure evaluation]
  \label{def:pure-eval}
  The pure evaluation is described by the following rules:

\begin{minipage}{0.3\textwidth}
  \begin{prooftree}
    \AxiomC{\phantom{X}}
    \RightLabel{\textsc{E-Val}}
    \UnaryInfC{$v \evaluatesto v$}
  \end{prooftree}
\end{minipage}
\begin{minipage}{0.4\textwidth}
  \begin{prooftree}
    \AxiomC{$e_1 \evaluatesto \lambda x.e_3$}
    \AxiomC{$e_3[e_2/x] \evaluatesto v$}
    \RightLabel{\textsc{E-Fun}}
    \BinaryInfC{$e_1\ e_2 \evaluatesto v$}
  \end{prooftree}
\end{minipage}
\begin{minipage}{0.3\textwidth}
 \begin{prooftree}
   \AxiomC{$e \forcesto{0} v$}
   \RightLabel{\textsc{E-Run}}
   \UnaryInfC{$\tmrun{e} \evaluatesto v$}
 \end{prooftree}
\end{minipage}

\begin{prooftree}
  \AxiomC{$e \evaluatesto \tmcocase{\tmfst \Rightarrow e_1, \tmsnd \Rightarrow e_2}$}
  \AxiomC{$e_1 \evaluatesto v$}
  \RightLabel{\textsc{E-With}$_1$}
  \BinaryInfC{$e.\tmfst \evaluatesto v$}
\end{prooftree}

\begin{prooftree}
  \AxiomC{$e \evaluatesto \tmcocase{\tmfst \Rightarrow e_1, \tmsnd \Rightarrow e_2}$}
  \AxiomC{$e_2 \evaluatesto v$}
  \RightLabel{\textsc{E-With}$_2$}
  \BinaryInfC{$e.\tmsnd \evaluatesto v$}
\end{prooftree}

\begin{minipage}{0.5\textwidth}
  \begin{prooftree}
    \AxiomC{$e_1 \evaluatesto \tmtensor{e_2}{e_3}$}
    \AxiomC{$e_4[e_2/x, e_3/y] \evaluatesto v$}
    \RightLabel{\textsc{E-Tensor}}
    \BinaryInfC{$\tmcase{e_1}{\tmtensor{x}{y} \Rightarrow e_4} \evaluatesto v$}
  \end{prooftree}
\end{minipage}
\begin{minipage}{0.4\textwidth}
  \begin{prooftree}
    \AxiomC{$e_1 \evaluatesto \tmunit$}
    \AxiomC{$e_2 \evaluatesto v$}
    \RightLabel{\textsc{E-Unit}}
    \BinaryInfC{$\tmcase{e_1}{\tmunit \Rightarrow e_2} \evaluatesto v$}
  \end{prooftree}
\end{minipage}

\begin{prooftree}
  \AxiomC{$e_1 \evaluatesto \tminl{e_4}$}
  \AxiomC{$e_2[e_4/x] \evaluatesto v$}
  \RightLabel{\textsc{E-Plus}$_1$}
  \BinaryInfC{$\tmcase{e_1}{\tminl{x} \Rightarrow e_2, \tminr{y} \Rightarrow e_3} \evaluatesto v$}
\end{prooftree}

\begin{prooftree}
  \AxiomC{$e_1 \evaluatesto \tminr{e_4}$}
  \AxiomC{$e_3[e_4/y] \evaluatesto v$}
  \RightLabel{\textsc{E-Plus}$_2$}
  \BinaryInfC{$\tmcase{e_1}{\tminl{x} \Rightarrow e_2, \tminr{y} \Rightarrow e_3} \evaluatesto v$}
\end{prooftree}

\end{definition}


\begin{definition}[Forcing evaluation]
  \label{def:forcing-eval}
The forcing relation is described by the following rules:

\begin{minipage}{0.3\textwidth}
  \begin{prooftree}
    \AxiomC{$e \evaluatesto v$}
    \RightLabel{\textsc{F-Return}}
    \UnaryInfC{$\tmret{e} \forcesto{0} v$}
  \end{prooftree}
\end{minipage}
\begin{minipage}{0.6\textwidth}
\begin{prooftree}
  \AxiomC{$e_1 \evaluatesto v_1$}
  \AxiomC{$v_1 \forcesto{p_1} v_1'$}
  \AxiomC{$e_2[v_1'/x] \evaluatesto v_2$}
  \AxiomC{$v_2 \forcesto{p_2} v_2'$}
  \RightLabel{\textsc{F-Bind}}
  \QuaternaryInfC{$\tmbind{x}{e_1}{e_2} \forcesto{p_1 + p_2} v_2'$}
\end{prooftree}
\end{minipage}

\hspace{-2em}\begin{minipage}{0.24\textwidth}
  \begin{prooftree}
    \AxiomC{$\phantom{e \evaluatesto v}$}
    \RightLabel{\textsc{F-Tick}}
    \UnaryInfC{$\tmtick{p} \forcesto{p} \tmunit$}
  \end{prooftree}
\end{minipage}
\begin{minipage}{0.2\textwidth}
  \begin{prooftree}
    \AxiomC{$e \evaluatesto v$}
    \RightLabel{\textsc{F-Store}}
    \UnaryInfC{$\tmstore{e} \forcesto{0} v$}
  \end{prooftree}
\end{minipage}
\begin{minipage}{0.6\textwidth}
\begin{prooftree}
  \AxiomC{$e_1 \evaluatesto v_1$}
  \AxiomC{$e_2[v_1/x] \evaluatesto v_2$}
  \AxiomC{$v_2 \forcesto{p} v_2'$}
  \RightLabel{\textsc{F-Release}}
  \TrinaryInfC{$\tmrelease{x}{e_1}{e_2} \forcesto{p} v_2'$}
\end{prooftree}
\end{minipage}
\end{definition}


\section{Considering the Categorical Semantics of Lambda-Amor}
\label{sec:lamor-cat-model}

Our overall goal is a category theoretic account of the semantics of
amortised cost analysis, taking $\lambda$-amor as our starting point.
An investigation of a semantics for $\lambda$-amor, leveraging
standard principles, leads us to an alternate design for the calculus,
based on a more primitive set of constructs arising from the categorical
model. Since we would like to be able to study how to combine cost and
potential with other notions of computation, this reduced set of primitives
provides a more useful substrate for future extensions.

We unpack the categorical analysis of the semantics
of $\lambda$-amor here informally, without going into full definitions,
before proposing the refined calculus of
\textbf{$\lambda$-amor Core} in Section~\ref{sec:lambda-amor-core} and then giving its categorical model in detail in Section~\ref{sec:adjoint-model}.

\paragraph{Interpreting cost annotations and graded modalities}
In \cref{def:lamor:ordered-commutative-monoid} we said that costs must
form an ordered commutative monoid. We thus consider for our model a symmetric
monoidal category $\costcategory$ whose objects are costs, with unit
$0 \in \catobjects{\costcategory}$ and tensor $+ : \costcategory \times \costcategory \to \costcategory$.

Next, we assume some base symmetric
monoidal closed category $\mathbb{X}$ (details to be expanded in
Section~\ref{sec:adjoint-model}) to interpret contexts
via a monoidal product, and the function space via the adjoint
exponent structure. Given that the type system of $\lambda$-amor is based
on the standard idea of an explicit graded monad for cost,
we naturally expect this to be modelled in $\mathbb{X}$ by a
a \emph{strong graded monad} $M : \mathbb{C} \to [\mathbb{X}, \mathbb{X}]$,
i.e.:
$$\interpret{\costmonad{p}{\tau}} \coloneq M(p, \interpret{\tau})$$
Note, we write $M(p,A) : \mathbb{X}$ for the application of
$M$ to two arguments $p \in \mathbb{C}$ and $A \in \mathbb{X}$.
We can thus model the $\mathtt{bind}$ and $\mathtt{ret}$ constructs
using the operations of the graded monad $M$.

\paragraph{Interpreting potential-stored computations}
We can now approach the semantics of the indexed potential modality
$\potentialtype{p}{\tau}$. Let us assume we have some functor $P : \mathbb{C}^\op
\to [\mathbb{X}, \mathbb{X}]$, i.e., a $\mathbb{C}^\op$-indexed family
of $\mathbb{X}$-endofunctors, which we use to interpret the potential types:
$$
\interpret{\potentialtype{p}{\tau}} \coloneq P(p, \interpret{\tau})
$$
Thus, to model $\mathtt{store}$ (typing replayed
on the left), we require a family of morphisms (right):

\begin{minipage}{0.5\textwidth}
  \tyrulestore
\end{minipage}
\begin{minipage}{0.35\textwidth}
  \begin{align*}
  \interpret{\mathtt{store}}_{p, A} : A \to M(p, P(p, A)) \\
\end{align*}
\end{minipage}

\noindent
Such an operation is highly suggestive of the unit operation of a
$\mathbb{C}$-indexed family of adjunctions $P p \dashv M p$, i.e., an
adjunction at every $p \in \mathbb{C}$. If such a
family of adjunctions exists then we would also have the counit operation
$\varepsilon_{p, A} : P (p, M (p, A)) \to A$
which we can see as intuitively corresponding to the idea
of using stored potential $p$ to `pay' for a computational cost of $p$ to
yield a result of type $A$. There is however no primitive in $\lambda$-amor
 that directly corresponds to the adjunction's counit $\varepsilon$.

Let us
hold that thought for the moment, and consider the model of the remaining
primitive that involves interaction between potential and cost,
$\mathtt{release}$:

\begin{minipage}{1\textwidth}
  \tyrulerelease
\end{minipage} \\

\noindent
Let us assume a family of \emph{strong} endofunctors
$P : \mathbb{C}^\op \to [\mathbb{X}, \mathbb{X}]$
such that we have $t_{p, A, B} : A \otimes P(p, B) \to P(p, A \otimes B)$,
and a family of adjunctions $P p \dashv M p$.
By the induction of the interpretation we
assume there exist morphisms $\interpret{\mathcal{D}_1} : \interpret{\Gamma_1} \to P(p_1,\interpret{\tau_1})$ and $\interpret{\mathcal{D}_2} : \interpret{\Gamma_2} \otimes \interpret{\tau_1} \to M(p_1 + p_2, \interpret{\tau_2})$. We can
then construct a derivation as follows using these ingredients but with a
missing morphism, marked below with ???:

          \adjustbox{scale=0.89,center}{\begin{tikzcd}
            \interpret{\Gamma_1} \otimes \interpret{\Gamma_2} \ar[rr, "\sigma_{\interpret{\Gamma_1},\interpret{\Gamma_2}}"] & &
            \interpret{\Gamma_2} \otimes \interpret{\Gamma_1} \ar[r, "\idmorph{\interpret{\Gamma_2}} \otimes \interpret{\mathcal{D}_1}"] &
            \interpret{\Gamma_2} \otimes P(p_1,\interpret{\tau_1}) \ar[dlll, "\mathit{st}_{p_1, \interpret{\Gamma_2},\interpret{\tau_1}}", swap] \\
            P(p_1, \interpret{\Gamma_2} \otimes \interpret{\tau_1}) \ar[rr, "{P(p_1, \interpret{\mathcal{D}_2})}"'] & & P(p_1, M(p_1 + p_2, \interpret{\tau_2})) \ar[r, "???"']
            &
            P(p_1, M(p_1, M(p_2, \interpret{\tau_2}))) \ar[r, "\varepsilon_{p_1}"'] & M(p_2, \interpret{\tau_2})
          \end{tikzcd}} \\

\noindent
The derivation ends with the counit of the adjunction, paying for
cost $p_1$, with the remaining cost of $p_2$ captured by
the graded monad. How do we fill the missing hole? From the signature
we can see that, underneath the $P(p_1, -)$ functor,
we need to \emph{split} up cost $p_1 + p_2$. This
operation matches that of the comultiplication
of a \emph{graded comonad}, i.e.,
$$
\delta_{p_1, p_2, A} : M(p_1 + p_2, A) \to M(p_1, M(p_2, A))
$$
The $\tmrun{e}$ construct, which was added in a later extension of $\lambda$-amor to probabilistic
programs (p$\lambda$-amor~\cite{Rajani2024modal,Rajani2024technical})
then provides the counit operation for the view of $M$ as a graded comonad
since it maps an expression of type $\costmonad{0}{\tau}$ to one of type $\tau$.

We thus arrive at a model that suggests $M$ should be simultaneously a
(strong) graded monad (for combining cost) and graded comonad (for
splitting cost) with a family of adjunctions between each $P$ (as left
adjoint) and $M$, for both storing potential and `paying' for cost. We
return to the details of this model in Section~\ref{sec:adjoint-model}
but in the next section propose our refinement of $\lambda$-amor to
\textbf{$\lambda$-amor Core} based on these category theoretic insights.

\section{A Refinement: Lambda-Amor Core}
\label{sec:lambda-amor-core}
We introduce a more reductive core calculus for cost and potential,
\emph{$\lambda$-amor Core}, which is largely similar to \emph{$\lambda$-amor}
but adapted based on the preceding discussion and also developing in more
detail the equational theory of language in order to consider
the class of sound category theoretic models.

\subsection{Syntax and Typing}
\label{subsec:lamor-core:syntax-typing}

The syntax of $\lambda$-amor Core is that of $\lambda$-amor but
\emph{without} the $\mathtt{release}$ primitive and with the addition
of several new primitives:
\[
  \begin{array}{lclr}
    e & \Coloneqq & \ldots \mid \tmpay{e} \mid \tmplet{x}{e}{e} & \\
    v & \Coloneqq & \ldots \mid \cancel{\tmrelease{x}{e}{e}} \mid \tmsplit{e} &
  \end{array}
\]
The typing of these new primitives is given as follows: \\

  {\scalebox{0.93}{
   \hspace{-1.5em}\begin{minipage}{0.22\textwidth}
    \tyrulepay
  \end{minipage}\hspace{0.8em}
  \begin{minipage}{0.44\textwidth}
    \tyruleplet
  \end{minipage}}
  \begin{minipage}{0.4\textwidth}
    \tyrulesplit
  \end{minipage}} \\

\noindent
The $\tmpay{e}$ construct captures the idea that we can use $p$
stored potential to pay for a computation costing $p$
to yield the resulting $\tau$ value.
The $\mathtt{plet}$ construct provides functoriality of the potential
type.
Lastly, the $\mathtt{split}$ construct allows a costful computation
to split its cost across two layers.

\subsection{Operational Semantics}
\label{subsec:lamor-core:operational-semantics}

We extend the operational semantics of Section~\ref{subsec:lamor:operational-semantics} for the three new constructs. Since $\mathtt{pay}$ and $\mathtt{plet}$
are expression forms, they have pure expression evaluation rules, whereas
$\mathtt{split}$ has only a forcing evaluation rule:

\begin{minipage}{0.3\textwidth}
  \begin{prooftree}
    \AxiomC{$e \evaluatesto v$}
    \AxiomC{$v \forcesto{p} v'$}
    \RightLabel{\textsc{E-Pay}}
    \BinaryInfC{$\tmpay{e} \evaluatesto v'$}
  \end{prooftree}
  \end{minipage}
\begin{minipage}{0.3\textwidth}
  \begin{prooftree}
    \AxiomC{$[e_1/x]e_2 \evaluatesto v$}
    \RightLabel{\textsc{E-PLet}}
    \UnaryInfC{$\tmplet{x}{e_1}{e_2} \evaluatesto v$}
  \end{prooftree}
  \end{minipage}
\begin{minipage}{0.3\textwidth}

\begin{prooftree}
  \AxiomC{$e \evaluatesto v$}
  \AxiomC{$v \forcesto{p} v'$}
  \RightLabel{\textsc{F-Split}}
  \BinaryInfC{$\tmsplit{e} \forcesto{p} v'$}
\end{prooftree}
\end{minipage}


\subsection{The Decomposition of Release}
\label{subsec:lamor:differences}

Our refactoring of $\lambda$-amor into $\lambda$-amor Core
can be seen simply as the decomposition:
\begin{equation*}
  \mathbf{Release} \equiv \mathbf{Pay} + \mathbf{Split} + \mathbf{PLet}.
\end{equation*}
That is, we replace the complicated expression $\tmrelease{x}{e_1}{e_2}$ that is present in the original system by three more primitive
terms $\tmpay{e}$, $\tmplet{x}{e_1}{e_2}$ and $\tmsplit{e}$.
In the following, we describe how to express each of the three constructs on the right in terms of release, and how to express release in terms of the three new constructs.

The relationship between the first three is made more precise in \cref{sec:kripke-model}, Theorem~\ref{thrm:kripke-model-instance}, where we use
the syntactic Kripke model of $\lambda$-amor given by \cite{Rajani2021lamor} to give a model of $\lambda$-amor Core, showing
that $\tmpay{e}$, $\tmplet{x}{e_1}{e_2}$ and $\tmsplit{e}$ are exactly modelled by the derivations in terms of
$\mathtt{release}$ shown below.

\subsubsection{Expressing Pay in terms of Release}
\label{subsubsec:pay-via-release}
\phantom{x}

\begin{minipage}{0.25\textwidth}
  \tyrulepay
\end{minipage}
\begin{minipage}{0.7\textwidth}
  \begin{prooftree}
    \AxiomC{$\Gamma \vdash e : \potentialtype{p}{\costmonad{p}{\tau}}$}
    \AxiomC{}
    \RightLabel{\textsc{Var}}
    \UnaryInfC{$x : \costmonad{(p + 0)}{\tau} \vdash x : \costmonad{(p + 0)}{\tau}$}
    \RightLabel{\textsc{Release}}
    \BinaryInfC{$\Gamma \vdash \tmrelease{x}{e}{x} : \costmonad{0}{\tau}$}
    \RightLabel{\textsc{Run}}
    \UnaryInfC{$\Gamma \vdash \tmrun{\tmrelease{x}{e}{x}} : \tau$}
  \end{prooftree}
\end{minipage}
\vspace{0.2cm}

\subsubsection{Expressing Split in terms of Release}
\label{subsubsec:split-via-release}

\tyrulesplit

\noindent
Let $\Gamma_x = x : \potentialtype{p_1}{(\costmonad{(p_1 + p_2)}{\tau})}$ and
$\Gamma_y = y : \costmonad{(p_1 + p_2)}{\tau}$, then, we derive:

{\scalebox{0.9}{
\begin{minipage}{1\linewidth}
\begin{prooftree}
  \AxiomC{$\Gamma \vdash e : \costmonad{(p_1 + p_2)}{\tau}$}
  \UnaryInfC{$\Gamma \vdash \tmstore{e} : \costmonad{p_1}{(\potentialtype{p_1}{(\costmonad{(p_1 + p_2)}{\tau})})}$}
  \AxiomC{}
  \UnaryInfC{$\Gamma_x \vdash x : \potentialtype{p_1}{(\costmonad{(p_1 + p_2)}{\tau})}$}
  \AxiomC{}
  \UnaryInfC{$\Gamma_y \vdash y : \costmonad{(p_1 + p_2)}{\tau}$}
  \BinaryInfC{$\Gamma_x \vdash \tmrelease{y}{x}{y} : \costmonad{p_2}{\tau}$}
  \UnaryInfC{$\Gamma_x \vdash \tmret{\tmrelease{y}{x}{y}} : \costmonad{0}{\costmonad{p_2}{\tau}}$}
  \BinaryInfC{$\Gamma \vdash \tmbind{x}{\tmstore{e}}{\tmret{(\tmrelease{y}{x}{y})}} : \costmonad{p_1}{(\costmonad{p_2}{\tau})}$}
\end{prooftree}
\end{minipage}
}}

\subsubsection{Expressing PLet in terms of Release}

\tyruleplet

\begin{prooftree}
  \AxiomC{$\Gamma_1 \vdash e_1 : \potentialtype{p}{\tau_1}$}
  \AxiomC{$\Gamma_2, x : \tau_1 \vdash e_2 : \tau_2$}
  \RightLabel{\textsc{Store}}
  \UnaryInfC{$\Gamma_2, x : \tau_1 \vdash \tmstore{e_2} : \costmonad{(p + 0)}{(\potentialtype{p}{\tau_2})}$}
  \RightLabel{\textsc{Release}}
  \BinaryInfC{$\Gamma_1,\Gamma_2 \vdash \tmrelease{x}{e_1}{\tmstore{e_2}} : \costmonad{0}{(\potentialtype{p}{\tau_2})}$}
  \RightLabel{\textsc{Run}}
  \UnaryInfC{$\Gamma_1,\Gamma_2 \vdash \tmrun{\tmrelease{x}{e_1}{\tmstore{e_2}}} : \potentialtype{p}{\tau_2}$}
\end{prooftree}

\subsubsection{Expressing Release in terms of Split, PLet and Pay}

Lastly, to see that we can recover the original release construct we have the following dervation:

\tyrulerelease

\begin{prooftree}
  \AxiomC{$\Gamma_1 \vdash e_1 : \potentialtype{p_1}{\tau_1}$}
  \AxiomC{$\Gamma_2, x : \tau_1 \vdash e_2 : \costmonad{(p_1 + p_2)}{\tau_2}$}
  \RightLabel{\textsc{Split}}
  \UnaryInfC{$\Gamma_2, x : \tau_1 \vdash \tmsplit{e_2} : \costmonad{p_1}{(\costmonad{p_2}{\tau_2})}$}
  \RightLabel{\textsc{PLet}}
  \BinaryInfC{$\Gamma_1,\Gamma_2 \vdash \tmplet{x}{e_1}{e_2} : \potentialtype{p_1}{(\costmonad{p_1}{(\costmonad{p_2}{\tau_2})})}$}
  \RightLabel{\textsc{Pay}}
  \UnaryInfC{$\Gamma_1,\Gamma_2 \vdash \tmpay{\tmplet{x}{e_1}{\tmsplit{e_2}}} : \costmonad{p_2}{\tau}$}
\end{prooftree}

\subsection{Equational Theory}
\label{subsec:lamor:equational}

In this section we introduce a typed equational theory for $\lambda$-amor Core which is based on the type system described in \cref{sec:lamor}, combined with the refinement described in \cref{subsec:lamor-core:syntax-typing}.
The core judgement form of this equational theory is $\mathcal{D}_1 \equiv \mathcal{D}_2$, where $\mathcal{D}_1 : (\Gamma \vdash e_1 : \tau)$ and $\mathcal{D}_2 : (\Gamma \vdash e_2 : \tau)$. We assume that there is no variable capture involved in any such
equations, i.e., that all variable binders do not clash with free names elsewhere.
That is, the two typing derivations must agree in the context and type, but might differ in the expression being typed.
We present a selection of the rules in this section and focus on those rules which characterize cost, potential and their interaction.
The parts that we omit, i.e. congruences, uses of subsumption, and $\beta$- and $\eta$-rules for ordinary types, are completely standard.
The complete set of rules is provided in \cref{sec:lambda-amor-core-complete}.

In order to formulate some of the laws we need to use the following derived operation which corresponds to the functorial map for cost.
\begin{prooftree}
  \AxiomC{$\Gamma_1 \vdash e_1 : \costmonad{p}{\tau_1}$}
  \AxiomC{$\Gamma_2, x : \tau_1 \vdash e_2 : \tau_2$}
  \RightLabel{\textsc{CLet}}
  \BinaryInfC{$\Gamma_1,\Gamma_2 \vdash \tmclet{x}{e_1}{e_2} : \costmonad{p}{\tau_2}$}
\end{prooftree}
We can define $\tmclet{x}{e_1}{e_2} \coloneq \tmbind{x}{e_1}{\tmret{e_2}}$; the name $\mathtt{clet}$ is chosen to correspond to $\mathtt{plet}$, the functorial map for potential.
Some of the laws are given twice using both $\mathtt{clet}$ and its expansion in terms of the bind and return operations.

\subsubsection{Laws Characterizing Cost as a Monad}

The first set of laws characterize $\costmonad{p}{\tau}$ as a graded monad.
We have two unit laws, and one law which witnesses the associativity of the bind operation.
\begin{align}
    \tag*{$\equiv$\textsc{-Monad-Unit-L}}
    \textsc{Bind}(\textsc{Ret}(\mathcal{D}_1), \mathcal{D}_2) &\equiv \textsc{Subst}_1(\mathcal{D}_2, \mathcal{D}_1) \\
    \tag*{$\equiv$\textsc{-Monad-Unit-R}}
    \textsc{Bind}(\mathcal{D},\textsc{Ret}(\textsc{Var})) &\equiv \mathcal{D} \\
    \tag*{$\equiv$\textsc{-Monad-Assoc}}
    \textsc{Bind}(\textsc{Bind}(\mathcal{D}_1, \mathcal{D}_2), \mathcal{D}_3) &\equiv \textsc{Bind}(\mathcal{D}_1, \textsc{Ex}^\ast(\textsc{Bind}(\mathcal{D}_2, \mathcal{D}_3)))
\end{align}
Note that we have $\textsc{Bind}(\textsc{Ret}(\textsc{Var}), \mathcal{D}) \equiv \mathcal{D}$ as an instance of $\equiv$\textsc{-Monad-Unit-L}.
The following derivations show why we need to use exchange in the equation for $\equiv$\textsc{-Monad-Assoc}.

\begin{prooftree}
    \AxiomC{$\Gamma_1 \vdash e_1 : \costmonad{p_1}{\tau_1}$}
    \AxiomC{$\Gamma_2, x : \tau_1 \vdash \costmonad{p_2}{\tau_2}$}
    \BinaryInfC{$\Gamma_1, \Gamma_2 \vdash \tmbind{x}{e_1}{e_2} : \costmonad{(p_1 + p_2)}{\tau_2}$}
    \AxiomC{$\Gamma_3, y : \tau_2 \vdash e_3 : \costmonad{p_3}{\tau_3}$}
    \BinaryInfC{$\Gamma_1, \Gamma_2, \Gamma_3 \vdash \tmbind{y}{(\tmbind{x}{e_1}{e_2})}{e_3} : \costmonad{((p_1 + p_2) + p_3)}{\tau_3}$}
\end{prooftree}
\begin{center}
    $\equiv$
\end{center}
\begin{prooftree}
    \AxiomC{$\Gamma_1 \vdash e_1 : \costmonad{p_1}{\tau_1}$}
    \AxiomC{$\Gamma_2, x : \tau_1 \vdash \costmonad{p_2}{\tau_2}$}
    \AxiomC{$\Gamma_3, y : \tau_2 \vdash e_3 : \costmonad{p_3}{\tau_3}$}
    \BinaryInfC{$\Gamma_2, x : \tau_1, \Gamma_3 \vdash \tmbind{y}{e_2}{e_3} : \costmonad{(p_2 + p_3)}{\tau_3}$}
    \doubleLine
    \UnaryInfC{$\Gamma_2, \Gamma_3, x : \tau_1 \vdash \tmbind{y}{e_2}{e_3} : \costmonad{(p_2 + p_3)}{\tau_3}$}
    \BinaryInfC{$\Gamma_1, \Gamma_2, \Gamma_3 \vdash \tmbind{x}{e_1}{(\tmbind{y}{e_2}{e_3})} : \costmonad{(p_1 + (p_2 + p_3))}{\tau_3}$}
\end{prooftree}
Note that our no-variable-capture condition means here that
$x$ is not free in $e_3$, avoiding variable capture on the right-hand side
of this equality.

We characterise the behaviour of tick and its interaction with the graded monad as:
\begin{align}
    \tag*{$\equiv$\textsc{tick}$_1$}
    \textsc{Tick}^0 & \equiv \textsc{Ret}(\textsc{Unit}) \\
    \tag*{$\equiv$\textsc{tick}$_2$}
\textsc{Bind}(\textsc{Tick}^p, \textsc{Tick}^q) & \equiv \textsc{Tick}^{p+q}
\end{align}

\subsubsection{Laws Characterizing Cost as a Comonad}

Since $\costmonad{p}{\tau}$ is also a graded comonad we require the following additional laws, where we have for
$\equiv$-\textsc{Comonad}$_1$ that $\mathcal{D}: (\Gamma \vdash e : \costmonad{(0 + p)}{\tau})$, for $\equiv$-\textsc{Comonad}$_2$ that $\mathcal{D} : (\Gamma \vdash e : \costmonad{(p + 0)}{\tau})$, and for
$\equiv$-\textsc{Comonad}$_3$ we have that $\mathcal{D} : (\Gamma \vdash e : \costmonad{(p_1 + p_2 + p_3)}{\tau})$.
\begin{align*}
  \tag*{$\equiv$-\textsc{Comonad}$_1$}
  \textsc{Run}(\textsc{Split}(\mathcal{D})) &\equiv \mathcal{D} \\
  \tag*{$\equiv$-\textsc{Comonad}$_2$}
  \textsc{CLet}(\textsc{Split}(\mathcal{D}), \textsc{Run}(\textsc{Var})) &\equiv \mathcal{D} \\
  \tag*{$\equiv$-\textsc{Comonad}$_3$}
  \textsc{Split}(\textsc{Split}(\mathcal{D})) &\equiv \textsc{CLet}(\textsc{Split}(\mathcal{D}), \textsc{Split}(\textsc{Var}))
\end{align*}

\subsubsection{Laws Characterizing Potential as a Functor}
The following two equations ensure that the $\mathtt{plet}$ construct behaves as the functorial action for the potential type.
The second equation requires that the variable bound by the inner $\mathtt{plet}$ on the left side does not occur free in the derivation $\mathcal{D}_3$.
\begin{align*}
  \tag*{$\equiv$-\textsc{PLet}$_1$}
  \textsc{PLet}(\mathcal{D}, \textsc{Var})&\equiv \mathcal{D} \\
  \tag*{$\equiv$-\textsc{PLet}$_2$}
  \textsc{PLet}(\textsc{PLet}(\mathcal{D}_1,\mathcal{D}_2),\mathcal{D}_3) &\equiv \textsc{PLet}(\mathcal{D}_1,\textsc{PLet}(\mathcal{D}_2, \mathcal{D}_3))
\end{align*}
\subsubsection{The Relationship between the Monadic and Comonadic Structure}

Lastly, the laws $\equiv$\textsc{-Run-Ret}, $\equiv$\textsc{-Ret-Run}, $\equiv$\textsc{-Split-Bind} and $\equiv$\textsc{-Bind-Split} witness the isomorphism of the unit/counit and multiplication/comultiplication.
\begin{align}
    \tag*{$\equiv$\textsc{-Run-Ret}}
    \textsc{Run}(\textsc{Ret}(\mathcal{D})) \equiv \mathcal{D} \\
    \tag*{$\equiv$\textsc{-Ret-Run}}
    \textsc{Ret}(\textsc{Run}(\mathcal{D})) \equiv \mathcal{D} \\
    \tag*{$\equiv$\textsc{-Split-Bind}}
    \textsc{Split}(\textsc{Bind}(\mathcal{D},\textsc{Var})) &\equiv \mathcal{D} \\
    \tag*{$\equiv$\textsc{-Bind-Split}}
    \textsc{Bind}(\textsc{Split}(\mathcal{D}),\textsc{Var}) &\equiv \mathcal{D}
\end{align}

\subsubsection{Laws Characterizing the Adjunction}

Lastly, we have equations about the interaction between $\mathtt{pay}$ and
$\mathtt{store}$, using also the functorial action of $\mathtt{clet}$
and $\mathtt{plet}$:

\begin{tikzcd}
    \costmonad{p}{\tau} \ar[rdd, "\mathrm{id}", swap] \ar[r, "\mathtt{store}"] &
    \costmonad{p}{(\potentialtype{p}{(\costmonad{p}{\tau})})} \ar[dd, "\mathtt{clet}(\mathtt{pay})"] &
    \potentialtype{p}{\tau} \ar[rrrrdd, "\mathrm{id}", swap] \ar[rrrr, "\mathtt{plet}(\mathtt{store})"] &&&&
    \potentialtype{p}{(\costmonad{p}{(\potentialtype{p}{\tau})})} \ar[dd, "\mathtt{pay}"] \\
    \\
    & \costmonad{p}{\tau} &&&&& \potentialtype{p}{\tau}
\end{tikzcd}

\begin{align*}
  \tag*{$\equiv$\textsc{-Adjunction}$_1$}
  \textsc{CLet}(\textsc{Store}(\mathcal{D}),\textsc{Pay}(\textsc{Var})) &\equiv \mathcal{D} \\
  \tag*{$\equiv$\textsc{-Adjunction}$_2$}
  \textsc{Pay}(\textsc{PLet}(\mathcal{D}, \textsc{Store}(\textsc{Var}))) &\equiv \mathcal{D}
\end{align*}

\section{Adjoint Model of Lambda-Amor Core}
\label{sec:adjoint-model}
In this section, we now expand the detail of the model discussion in Section~\ref{sec:lamor-cat-model}, giving a categorical model for the $\lambda$-amor Core calculus introduced in~\cref{sec:lambda-amor-core}). We begin by replaying the (standard) definitions of \emph{graded monads}, \emph{graded comonads},
and \emph{strong functors}.

\begin{definition}[Graded Monad]~\cite{Katsumata2014parametric,Orchard2014marriage}
    \label{def:adjoint-model:graded-monad}
    Given a strict monoidal category $\costcategory$,
    with bifunctor $+ : \mathbb{C} \times \mathbb{C} \to \mathbb{C}$
    and unit object $0 \in \catobjects{\costcategory}$, then a
    $\costcategory$-graded monad $M$ over category $\modelcategory$ consists of:
    \begin{enumerate}
        \item A bifunctor $M : \costcategory \times \modelcategory \to \modelcategory$
        (which in some presentations is given isomorphically in `curried' form as
        $\costcategory \to [\modelcategory, \modelcategory]$, i.e., as an indexed
        family of endofunctors on $\modelcategory$);

        \item A \emph{unit} natural transformation $\eta_X : X \to M(0,X)$;

        \item A \emph{multiplication} a natural transformation $\mu_{X,p_1,p_2} : M(p_1, M(p_2,X)) \to M(p_1 + p_2, X)$
    such that the following diagrams commute:

    \hspace{-3em}\begin{minipage}{0.4\linewidth}\begin{tikzcd}
       M(p_1,X) \ar[d, "{M(p_1, X)}" swap] \ar[r, "\eta_X"] & M(0, M(p_1, X)) \ar[d, "{\mu_{0,p_1,X}}"] \\
       M(p_1, M(0, X)) \ar[r, "{\mu_{p_1,0,X}}" swap] & M(p_1, X)
    \end{tikzcd}
    \end{minipage}
    \hspace{1em}
    \begin{minipage}{0.5\linewidth}
      \begin{tikzcd}
        M(p_1,M(p_2,M(p_3,X))) \ar[d, "{M(p_1, \mu_{p_2,p_3,X})}" swap] \ar[rr, "{\mu_{p_1,p_2,M(p_3, X)}}"] & & M(p_1 + p_2, M(p_3,X)) \ar[d, "{\mu_{p_1 + p_2, p_3, X}}"] \\
        M(p_1, M(p_2 + p_3, X)) \ar[rr, "{\mu_{p_1,p_2+p_3,X}}" swap] & &  M(p_1 + p_2 + p_3, X)
      \end{tikzcd}
      \end{minipage}
    \end{enumerate}
\end{definition}

The dual of a graded monad is a graded comonad.
\begin{definition}[Graded Comonad]~\cite{DBLP:conf/icalp/PetricekOM13,DBLP:conf/icfp/PetricekOM14}
  \label{def:adjoint-model:graded-comonad}
  Given a strict monoidal category $(\costcategory, +, 0)$, then a
   $\costcategory$-graded comonad $M$ in $\modelcategory$ consists of:
    \begin{enumerate}
        \item A bifunctor $M : \costcategory^\op \times \modelcategory \to \modelcategory$
        \item A \emph{counit} natural transformation $\varepsilon_X : M(0,X) \to X$
        \item A \emph{comultiplication} natural transformation $\delta_{p_1,p_2,X} : M(p_1 + p_2, X) \to M(p_1, M(p_2,X))$
    \end{enumerate}
    The diagrams required to commute are exactly the dual diagrams of \cref{def:adjoint-model:graded-monad}.
\end{definition}
Furthermore, we will require the additional property of \emph{tensorial strength}
for graded (co)monads:
\begin{definition}[Strong Functor]
    \label{def:adjoint-model:strong-functor}
    Given a monoidal category $\modelcategory$,
   a \emph{strong functor} is an endofunctor $F : \modelcategory \to \modelcategory$ with a (left) strength natural transformation:
    \begin{equation*}
        {\mathit{st}}_{X,Y} : X \otimes F(Y) \to F(X \otimes Y)
    \end{equation*}
    such that the following two diagrams commute, i.e., that tensorial strength commutes
    with the associator $\alpha$ and left unitor $\lambda$ of the monoidal category:

    \begin{minipage}{0.66\linewidth}
      \begin{tikzcd}
        (X \otimes Y) \otimes F(Z) \ar[d, "\alpha_{X,Y,F(Z)}", swap]\ar[rr, "\mathit{st}_{X \otimes Y, Z}"]& & F((X \otimes Y) \otimes Z) \ar[d, "F(\alpha_{X,Y,Z})"]\\
        X \otimes (Y \otimes F(Z)) \ar[r, "\idmorph{X}\otimes \mathit{st}_{Y,Z}"] & X \otimes F(Y \otimes Z) \ar[r,"\mathit{st}_{X,Y \otimes Z}"] & F(X \otimes (Y \otimes Z))
      \end{tikzcd}
    \end{minipage}
    \begin{minipage}{0.3\linewidth}
    \begin{tikzcd}
        1 \otimes F(X)\ar[r, "\mathit{st}_{1,X}"] \ar[rd, "\lambda_X" swap]& F(1 \otimes X) \ar[d, "F(\lambda_X)"] \\
        & F(X)
     \end{tikzcd}
    \end{minipage}
\end{definition}

\begin{definition}[Strong Graded Monad]~\cite{Katsumata2014parametric}
    \label{def:adjoint-model:strong-graded-monad}
    A strong graded monad is a graded monad where each endofunctor $M(p, -)$ is also a
    strong functor, i.e., there is $\costcategory$-family of natural transformations:
    \begin{equation*}
        \mathit{st}_{p,X,Y} : X \otimes M(p,Y) \to M(p,X \otimes Y)
    \end{equation*}
    which commutes with the operations of the graded monad (omitted for brevity; see \cref{apx:model-details:strong-graded-monad-diagrams}).
\end{definition}

\subsection{Description of the Model}

We now turn to the model of $\lambda$-amor Core, which has two defining categories,
the \emph{cost category} and \emph{model category}:
\begin{definition}[Cost category $\costcategory$]
  \label{def:adjoint-model:costcategory}
  The cost category $\costcategory$ is a \emph{thin, strict symmetric monoidal category}:
  \begin{enumerate}
    \item Thinness means that there is at most one morphism between any two objects of $\costcategory$, which models the pre-order relation between costs.
    \item Monoidality means that there is an object $0 \in \catobjects{\costcategory}$ called the unit
          and a bifunctor $+ : \mathbb{C} \times \mathbb{C} \to \mathbb{C}$ which interpret the unit cost and the addition of two costs, respectively. Strictness means that
          associativity and unitality are up-to equality.
    \item Symmetric means that $p_1 + p_2 = p_2 + p_1$.
  \end{enumerate}
\end{definition}

\begin{definition}[Model category $\modelcategory$]
  \label{def:adjoint-model:model-category}
    The category $\modelcategory$ is equipped with the following structure:
    \begin{enumerate}
      \item It has a symmetric monoidal structure given by the categorical product and terminal element:
        \begin{equation*}
          (\modelcategory, \times : \modelcategory \times \modelcategory \to \modelcategory, 1 \in \catobjects{\modelcategory}, \alpha, \lambda, \rho, \sigma)
        \end{equation*}
        Here, and in the following, $\alpha, \lambda, \rho$ and $\sigma$ are the natural transformations which witnesses
        associativity, the left and right unit laws, and the symmetry, respectively.
        This structure is used for interpreting the negative product type $\tau_1 \& \tau_2$ and its unit $\tyunitn$.
      \item It has a symmetric monoidal structure given by the categorical sum and initial element:
        \begin{equation*}
          (\modelcategory, + : \modelcategory \times \modelcategory \to \modelcategory, 0 \in \catobjects{\modelcategory}, \alpha, \lambda, \rho, \sigma)
        \end{equation*}
        This structure is used for interpreting the positive disjunction $\tau_1 \oplus \tau_2$ and its unit $\tyvoid$.
      \item It has a third symmetric monoidal structure
        \begin{equation*}
          (\modelcategory, \otimes : \modelcategory \times \modelcategory \to \modelcategory, 1 \in \catobjects{\modelcategory}, \alpha, \lambda, \rho, \sigma)
        \end{equation*}
        that is used for interpreting the positive product type $\tau_1 \otimes \tau_2$ and its unit $\tyunitp$, as well as for modelling typing contexts $\Gamma$.
      \item The monoidal structure given by $\otimes$ has a right adjoint $\to$:
        \begin{equation*}
           \phi_{\to} : \homset{\modelcategory}{Z \otimes Y}{X} \simeq \homset{\modelcategory}{Z}{Y \to X} : \phi^{-1}_{\to}
        \end{equation*}
        that is used for interpreting the function type $\tau_1 \multimap \tau_2$.

\item We have a strong $\costcategory$-graded monad $M : \costcategory \times \modelcategory \to \modelcategory$ used for interpreting the type $\costmonad{p}{\tau}$
      (see Definition~\ref{def:adjoint-model:graded-monad}).

\item There is a family of maps $\uparrow_p : 1 \to M(p, 1)$ satisying the following equations, meaning that $\uparrow_p$ is
a \emph{graded algebraic operation}:
\begin{equation*}
\eta_1 = \uparrow_0 \qquad  \mu_{p,q,1} \circ M(p, \uparrow_{q}) \circ \uparrow_{p} = \uparrow_{p+q}
\end{equation*}

\item The bifunctor $M$ also has the structure of $\costcategory^\op$-graded comonad
      (see Definition~\ref{def:adjoint-model:graded-comonad}). Note the double $\op$ means
      that $M$ is always covariant (in both arguments).

\item The graded monadic and graded comonadic structure are \emph{compatible} meaning
that multiplication and comuliplication form an isomorphism, and unit and counit form an isomorphism:
        \begin{align*}
          \eta_X : X &\simeq M(0,X) : \eta^{-1}_X = \varepsilon_{X} \\
          \mu_{p_1,p_2,X} : M(p_1, M(p_2,X)) &\simeq M(p_1 + p_2, X) : \mu^{-1}_{p_1,p_2,X} = \delta_{p_1,p_2,X}
        \end{align*}

      \item We have a strong functor $P : \costcategory^\op \times \modelcategory \to \modelcategory$ (see  Defintion~\ref{def:adjoint-model:strong-functor}), for interpreting the potential type, that is contravariant in its first argument.

      \item For every grade $p \in \catobjects{\costcategory}$, there is an adjunction  $P(p,-) \dashv M(p,-)$, thus:
        \begin{equation*}
          \phi_{p,X,Y} : \homset{\modelcategory}{P(p,X)}{Y} \simeq \homset{\modelcategory}{X}{M(p,Y)} : \phi^{-1}_{p,X,Y}
        \end{equation*}
        Alternatively, but equivalently, we can state the family
        of adjunctions via a pair of natural transformations $\eta_{p,X} : X \to M(p, P(p, X))$ and $\epsilon_{p,X} : P(p, M(p, X)) \to X$ along with the usual `triangle identities'.
    \end{enumerate}
\end{definition}


\subsection{Interpretation of Types, Contexts and Subtyping}

\begin{definition}[Interpretation of Types and Typing Contexts]
    The interpretation of types and typing contexts in the model is given by:
    \begin{gather*}
      \begin{aligned}
        \interpret{\tau_1 \otimes \tau_2} &\coloneq \interpret{\tau_1} \otimes \interpret{\tau_2} & \interpret{\tyunitp} &\coloneq 1 & \interpret{\tau_1 \multimap \tau_2} &\coloneq \interpret{\tau_1} \to \interpret{\tau_2} \\
        \interpret{\tau_1 \with \tau_2} &\coloneq \interpret{\tau_1} \times \interpret{\tau_2} & \interpret{\tyunitn} &\coloneq 1 & \interpret{\costmonad{p}{\tau}} &\coloneq M(p,\interpret{\tau}) \\
        \interpret{\tau_1 \oplus \tau_2} &\coloneq \interpret{\tau_1} + \interpret{\tau_2} & \interpret{\tyvoid} &\coloneq 0 & \interpret{\potentialtype{p}{\tau}} &\coloneq P(p,\interpret{\tau}) \\
      \end{aligned}\\
      \begin{aligned}
        \interpret{\cdot} &\coloneq 1 &
        \interpret{\Gamma, x : \tau} &\coloneq \interpret{\Gamma} \otimes \interpret{\tau}
      \end{aligned}
    \end{gather*}
\end{definition}

\begin{remark}[Units for positive and negative products]
    In this model the unit for positive and negative products are both interpreted as the terminal element $1$.
    This is possible due to the affineness of the theory, in the linear setting without weakening they have to be distinguished.
\end{remark}

The first thing that we have to verify is that the interpretation of types is compatible with the syntactic subtyping relationship that we have defined.
The following lemma witnesses this compatibility.

\begin{restatable}[Subtyping in the Model]{lemma}{subtypinginmodel}
    If there is a subtyping derivation $\mathcal{S}  : \tau_1 \sub \tau_2$, then there exists a corresponding morphism $\interpret{\mathcal{S}}$ from $\interpret{\tau_1}$ to $\interpret{\tau_2}$.
    Similarly, if there is a subtyping derivation $\mathcal{X} : \Gamma_1 \sub \Gamma_2$, then there is a morphism $\interpret{\mathcal{X}}$ in the model from $\interpret{\Gamma_1}$ to $\interpret{\Gamma_2}$.
    \label{lem:adjoint-model:subtyping-contexts}
    \label{lem:adjoint-model:subtyping}
\end{restatable}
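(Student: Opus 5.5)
We define $\interpret{\mathcal{S}}$ by structural recursion on the subtyping derivation $\mathcal{S}$, one clause per rule, and then extend it pointwise to contexts.

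\textbf{Atomic reflexivity rules.} For $\textsc{Refl-}\tyunitp$, $\textsc{Refl-}\tyunitn$ and $\textsc{Refl-}\tyvoid$ we take the identity on $1$, $1$ and $0$ respectively.

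\textbf{Congruence rules.} Each type former is interpreted by a functor, so it acts on morphisms. For $\otimes$, $\with$ and $\oplus$ we use the bifunctors $\otimes$, $\times$ and $+$ on $\modelcategory$ applied to the recursively obtained morphisms. For instance, given $\mathcal{S}_1 : \tau_1 \sub \tau_1'$ and $\mathcal{S}_2 : \tau_2 \sub \tau_2'$ we set $\interpret{\mathcal{S}} \coloneq \interpret{\mathcal{S}_1} \otimes \interpret{\mathcal{S}_2}$.

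The function type $\multimap$ is the only congruence that needs care, because $\to$ is contravariant in its domain. Suppose the rule takes $\mathcal{S}_1 : \tau_1' \sub \tau_1$ and $\mathcal{S}_2 : \tau_2 \sub \tau_2'$. We then build the morphism
\[
\phi_{\to}\bigl(\interpret{\mathcal{S}_2} \circ \mathrm{ev} \circ (\idmorph{\interpret{\tau_1} \to \interpret{\tau_2}} \otimes \interpret{\mathcal{S}_1})\bigr) : (\interpret{\tau_1} \to \interpret{\tau_2}) \to (\interpret{\tau_1'} \to \interpret{\tau_2'}),
\]
where $\mathrm{ev} = \phi^{-1}_{\to}(\idmorph{})$ is the counit of the closed structure. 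If the congruence rule for $\multimap$ were instead covariant in both arguments, the same recipe would not typecheck. Since subtyping is only induced by grades, I expect this case to be harmless, but it is the one that needs checking against the rules in the appendix.

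\textbf{Grade-changing rules.} These are the substantive cases, and they are where the cost category is used.
\begin{itemize}
\item For \textsc{S-Mon}, with $p_1 \leq p_2$ and $\mathcal{S}' : \tau_1 \sub \tau_2$, thinness of $\costcategory$ gives the unique morphism $!_{p_1 \leq p_2} : p_1 \to p_2$. Bifunctoriality of $M : \costcategory \times \modelcategory \to \modelcategory$ then gives $M(!_{p_1 \leq p_2}, \interpret{\mathcal{S}'}) : M(p_1,\interpret{\tau_1}) \to M(p_2,\interpret{\tau_2})$.
\item For \textsc{S-Pot}, with $p_2 \leq p_1$, we again have a morphism $p_2 \to p_1$ in $\costcategory$, which is a morphism $p_1 \to p_2$ in $\costcategory^\op$. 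Since $P : \costcategory^\op \times \modelcategory \to \modelcategory$ is a bifunctor, $P(!_{p_2 \leq p_1}, \interpret{\mathcal{S}'}) : P(p_1,\interpret{\tau_1}) \to P(p_2,\interpret{\tau_2})$.
\end{itemize}
The variances of $M$ and $P$ in their grade argument are exactly those prescribed by \textsc{S-Mon} and \textsc{S-Pot}. The one point to check is that $M$'s grade variance is the covariant one, as stated in \cref{def:adjoint-model:model-category}, and not that of the comonad.

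\textbf{Contexts.} A derivation $\mathcal{X} : \Gamma_1 \sub \Gamma_2$ consists of pointwise type subtyping derivations between contexts of equal length. We therefore recurse on the context: the empty context is sent to $\idmorph{1}$, and $\Gamma, x:\tau$ is sent to $\interpret{\mathcal{X}'} \otimes \interpret{\mathcal{S}}$, using bifunctoriality of $\otimes$.

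The lemma only asserts existence, so no coherence between different derivations of the same subtyping judgement is needed here. The only mild obstacle is therefore the variance bookkeeping described above, in the $\multimap$ case and in the two grade rules.
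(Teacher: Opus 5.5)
Your proposal is correct and follows the paper's own argument. It proceeds by induction on the subtyping derivation, uses bifunctoriality of the interpreting connectives for the congruence cases, takes $M(p_1 \leq p_2, \interpret{\mathcal{S}})$ for \textsc{S-Mon} and $P(p_2 \leq p_1, \interpret{\mathcal{S}})$ for \textsc{S-Pot} with the stated variances, and extends pointwise to contexts. The remaining differences are cosmetic: you work with the primitive rules and spell out the contravariant $\multimap$ case via $\phi_{\to}$ and evaluation, which the paper omits, whereas the paper treats \textsc{S-Refl} and \textsc{S-Trans} as cases even though they are derived or admissible in its system.
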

\begin{proof}
    See \cref{apx:model-details:subtyping}.
\end{proof}

\subsection{Interpretation of Typing Derivations}

We now give an interpretation of the typing derivations presented in \cref{subsec:lamor:typing-rules}.
Every such typing derivation is interpreted as a morphism from the interpretation of the context to the interpretation of the type of the judgement.

\begin{theorem}[Fundamental Theorem]
    If $\mathcal{D} : \Gamma \vdash e : \tau$ is derivable, then $\interpret{\Gamma} \overset{\interpret{\mathcal{D}}}{\rightarrow} \interpret{\tau}$.
    \label{thm:adjoint-model:fundamental}
\end{theorem}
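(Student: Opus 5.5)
We proceed by structural induction on the derivation $\mathcal{D}$, defining $\interpret{\mathcal{D}} : \interpret{\Gamma} \to \interpret{\tau}$ rule by rule. The induction hypothesis gives morphisms for all immediate subderivations, and in each case we assemble these using the structure listed in \cref{def:adjoint-model:model-category}. Contexts are interpreted as left-nested $\otimes$-products. Whenever a rule splits the context as $\Gamma_1, \Gamma_2$, we use the canonical isomorphism $\interpret{\Gamma_1,\Gamma_2} \cong \interpret{\Gamma_1} \otimes \interpret{\Gamma_2}$. This isomorphism is built from the associator and is proved by a small auxiliary induction on $\Gamma_2$; we establish it first.

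The structural rules are handled as follows.
\begin{itemize}
\item \textsc{Var} is the left unitor $\lambda : 1 \otimes \interpret{\tau} \to \interpret{\tau}$.
\item \textsc{Wk} precomposes with $\idmorph{} \otimes {!}$, where ${!}$ is the unique map into the terminal object $1$, followed by the right unitor. This uses that the unit of $\otimes$ is terminal, which is exactly the affineness noted in the remark on units.
\item \textsc{Ex} precomposes with the map built from the associator and the symmetry $\sigma$ that swaps the two middle factors.
\item \textsc{Sub} is $\interpret{\mathcal{S}} \circ \interpret{\mathcal{D}} \circ \interpret{\mathcal{X}}$, using \cref{lem:adjoint-model:subtyping}.
\end{itemize}

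The logical rules are standard.
\begin{itemize}
\item \textsc{Abs} and \textsc{App} use $\phi_\to$, together with evaluation $\phi_\to^{-1}(\idmorph{})$.
\item \textsc{Tensor} is $\interpret{\mathcal{D}_1}\otimes\interpret{\mathcal{D}_2}$.
\item \textsc{With} uses the pairing of $\times$, and \textsc{Fst}/\textsc{Snd} use the projections; \textsc{Unit-N} is ${!}$.
\item \textsc{Inl}/\textsc{Inr} use the coproduct injections.
\item \textsc{Unit-P} is $\idmorph{1}$, and \textsc{Case-Unit-P} is $\interpret{\mathcal{D}_2}\circ\lambda\circ(\interpret{\mathcal{D}_1}\otimes\idmorph{})$.
\item \textsc{Case-Tensor} is $\interpret{\mathcal{D}_2}\circ\alpha^{-1}\circ(\idmorph{}\otimes\interpret{\mathcal{D}_1})$, with the symmetry applied as dictated by the context order in the rule.
\end{itemize}

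The main obstacle lies in \textsc{Case-Sum} and \textsc{Void}. Here the context $\Gamma_2$ must be distributed over the coproduct, which requires $\interpret{\Gamma_2} \otimes (A + B) \to (\interpret{\Gamma_2}\otimes A) + (\interpret{\Gamma_2} \otimes B)$ and $X \otimes 0 \to 0$. We obtain these from closedness: $- \otimes \interpret{\Gamma_2}$ is a left adjoint (after a symmetry), so it preserves colimits. Concretely, we define the copairing $[\interpret{\mathcal{D}_1},\interpret{\mathcal{D}_2}]$ transposed under $\phi_\to$, then transpose back. \textsc{Void} proceeds the same way, using the initial map $0 \to (\interpret{\Gamma}\to\interpret{\tau})$ transposed.

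The cost and potential rules use the graded structure.
\begin{itemize}
\item \textsc{Ret} is $\eta\circ\interpret{\mathcal{D}}$, and \textsc{Run} is $\varepsilon\circ\interpret{\mathcal{D}}$.
\item \textsc{Tick} is $\uparrow_p$.
\item \textsc{Bind} is $\mu_{p_1,p_2}\circ M(p_1,\interpret{\mathcal{D}_2}\circ\sigma)\circ \mathit{st}_{p_1}\circ(\idmorph{}\otimes\interpret{\mathcal{D}_1})\circ\sigma$, using the strength of $M$.
\item \textsc{Store} is the adjunction unit $\eta_{p}\circ\interpret{\mathcal{D}}$, and \textsc{Pay} is the counit $\epsilon_p\circ\interpret{\mathcal{D}}$.
\item \textsc{Split} is $\delta_{p_1,p_2}\circ\interpret{\mathcal{D}}$.
\item \textsc{PLet} is $P(p,\interpret{\mathcal{D}_2}\circ\sigma)\circ \mathit{st}^P_p\circ(\idmorph{}\otimes\interpret{\mathcal{D}_1})\circ\sigma$, using the strength of $P$.
\end{itemize}
Whenever an index is syntactically $p+0$ or $0+p$, strictness of $\costcategory$ makes the objects equal, so no coercion is needed.

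Since the theorem only asserts the existence of the morphism, no commutativity conditions need to be checked here. The coherence laws are deferred to the later soundness result for $\equiv$.
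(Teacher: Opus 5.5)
Your proposal is correct and follows the paper's proof. It proceeds by induction on the derivation with essentially the same morphism for each rule. Your $\eta_p\circ\interpret{\mathcal{D}}$ for \textsc{Store}, $\delta$ for \textsc{Split} and $\varepsilon$ for \textsc{Run} coincide with the paper's $\phi(P(\idmorph{p},\interpret{\mathcal{D}}))$, $\mu^{-1}$ and $\eta^{-1}$ by naturality and compatibility. Your transposition argument for \textsc{Case-Sum} just makes explicit the paper's appeal to $\otimes$ preserving colimits. You are also more careful than the paper about unitors, associators and the context-splitting isomorphism.

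There are two harmless slips. First, the inner $\sigma$ in your \textsc{Bind} and \textsc{PLet} composites should be dropped. After the outer swap, the left strength already lands in $M(p_1,\interpret{\Gamma_2}\otimes\interpret{\tau_1})$ (resp.\ $P(p,\interpret{\Gamma_2}\otimes\interpret{\tau_1})$), which is the domain of $\interpret{\mathcal{D}_2}$. Second, the paper's \textsc{Void} rule keeps the same context $\Gamma$, so no distribution is needed there: simply take $!\circ\interpret{\mathcal{D}}$.
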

\begin{proof}
    By induction on the typing derivation; we begin with the three core rules which characterize the structural properties of the system:
    \begin{description}
        \item[Var] We have to find a morphism from $\interpret{\tau}$ to $\interpret{\tau}$, which is given by the identity morphism $\idmorph{\interpret{\tau}}$.
        \item[Wk] By the induction hypothesis we have a morphism $\interpret{\mathcal{D}} : \interpret{\Gamma} \to \interpret{\tau}$.
        We construct the required morphism as follows, where $\rho$ is the right-unit law of monoidal categories:
        \begin{center}
          \begin{tikzcd}
            \interpret{\Gamma} \otimes \interpret{\tau} \ar[r, "\idmorph{\interpret{\Gamma}} \otimes !"] &
            \interpret{\Gamma} \otimes 1 \ar[r, "\rho_{\interpret{\Gamma}}"] &
            \interpret{\Gamma} \ar[r, "\interpret{\mathcal{D}}"] &
            \interpret{\tau}
          \end{tikzcd}
        \end{center}
        This is the only place in the proof where we use the property that the unit of the monoidal structure of $\otimes$ is terminal, and the only rule which makes the system affine instead of linear.
        \item[Sub] By the induction hypothesis we have a morphism $\interpret{\mathcal{D}} : \interpret{\Gamma_2} \to \interpret{\sigma}$.
        By \cref{lem:adjoint-model:subtyping,lem:adjoint-model:subtyping-contexts} we have morphisms $\interpret{\mathcal{S}} : \interpret{\sigma} \to \interpret{\tau}$ and $\interpret{\mathcal{X}} : \interpret{\Gamma_1} \to \interpret{\Gamma_2}$.
        We can thus define:
        \begin{center}
          \begin{tikzcd}
            \interpret{\Gamma_1} \ar[r, "\interpret{\mathcal{X}}"] &
            \interpret{\Gamma_2} \ar[r, "\interpret{\mathcal{D}}"] &
            \interpret{\sigma} \ar[r, "\interpret{\mathcal{S}}"] &
            \interpret{\tau}
          \end{tikzcd}
        \end{center}
        \item[Ex] Since we require a symmetric monoidal category we can use the braiding $B_{\interpret{\tau_1},\interpret{\tau_2}} : \interpret{\tau_1} \otimes \interpret{\tau_2} \simeq \interpret{\tau_2} \otimes \interpret{\tau_1}$ to perform the exchange of the two types in the context.
    \end{description}
    Next, we show the simple cases for the additive connectives $\with$ and $\oplus$ and their units $\top$ and $0$.
    \begin{description}
        \item[With] From the induction hypothesis there exist morphisms $\interpret{\mathcal{D}_1} : \interpret{\Gamma} \to \interpret{\tau_1}$ and $\interpret{\mathcal{D}_2} : \interpret{\Gamma} \to \interpret{\tau_2}$.
        By the universal property of the categorical product, we can combine those to obtain a morphism $\interpret{\mathcal{D}_1} \triangle \interpret{\mathcal{D}_2} : \interpret{\Gamma} \to \interpret{\tau_1} \times \interpret{\tau_2}$, whose codomain agrees with the
        interpretation of $\interpret{\tau_1 \with \tau_2}$.
        \item[Fst  and Snd] We show the case for the rule \textsc{Fst}; the case for \textsc{Snd} is similar.
        By the induction hypothesis, there exists a morphism $\interpret{\mathcal{D}} : \interpret{\Gamma} \to \interpret{\tau_1 \with \tau_2}$.
        Since the interpretation $\interpret{\tau_1 \with \tau_2}$ is defined to be $\interpret{\tau_1} \times \interpret{\tau_2}$, we can compose with the first projection of categorical products to obtain $\pi_1 \circ \interpret{\mathcal{D}} : \interpret{\Gamma} \to \interpret{\tau_1}$.
        \item[Inl and Inr] We show the case for the rule \textsc{Inl}; the case for \textsc{Inr} is similar.
        By the induction hypothesis, there exists a morphism $\interpret{\mathcal{D}} : \interpret{\Gamma} \to \interpret{\tau}$.
        Since the interpretation $\interpret{\tau_1 \oplus \tau_2}$ is defined to be $\interpret{\tau_1} + \interpret{\tau_2}$, we can compose with the first injection into the categorical coproduct to obtain $\iota_1 \circ \interpret{\mathcal{D}} : \interpret{\Gamma} \to \interpret{\tau_1} + \interpret{\tau_2}$, whose codomain is the same as $\interpret{\tau_1 \oplus \tau_2}$.
        \item[Case-Sum] By the induction hypothesis there exist morphisms $\interpret{\mathcal{D}_1} : \interpret{\Gamma_1} \to \interpret{\tau_1 \oplus \tau_2}$, $\interpret{\mathcal{D}_2} : \interpret{\Gamma_2} \otimes \interpret{\tau_1} \to \interpret{\tau_3}$ and $\interpret{\mathcal{D}_3} : \interpret{\Gamma_2} \otimes \interpret{\tau_2} \to \interpret{\tau_3}$.
        Because $\otimes$ is left adjoint to $\to$ it preserves colimits, and $\interpret{\tau_1 \oplus \tau_2} \otimes \interpret{\Gamma_2}$ is hence isomorphic to $(\interpret{\tau_1} \otimes \interpret{\Gamma_2}) + (\interpret{\tau_2} \otimes \interpret{\Gamma_2})$.
        \begin{center}
          \begin{tikzcd}
            \interpret{\Gamma_1} \otimes \interpret{\Gamma_2} \ar[rr, "\interpret{\mathcal{D}_1} \otimes \idmorph{\interpret{\Gamma_2}}"] & &
            \interpret{\tau_1 \oplus \tau_2} \otimes \interpret{\Gamma_2} \ar[r, dash, "\simeq"] &
            (\interpret{\tau_1} \otimes \interpret{\Gamma_2}) + (\interpret{\tau_2} \otimes \interpret{\Gamma_2}) \ar[rr, "\interpret{\mathcal{D}_2} \triangledown \interpret{\mathcal{D}_3}"] & &
            \interpret{\tau_3}
          \end{tikzcd}
        \end{center}
        \item[Unit-N] We have to construct a morphism from $\interpret{\Gamma}$ to $\interpret{\top}$, which is defined to be the terminal object in the category. We can therefore pick the unique morphism to the terminal object.
        \item[Void] By the induction hypothesis there exists a morphism $\interpret{\mathcal{D}} : \interpret{\Gamma} \to \interpret{0}$.
        Since $\interpret{0}$ is the initial object in the category, we can define $! \circ \interpret{\mathcal{D}} : \interpret{\Gamma} \to \interpret{\tau}$.
    \end{description}
    Next, let us look at the multiplicative connectives $\otimes$, $\multimap$ and the unit $\tyunitp$.
    \begin{description}
        \item[Tensor] By the induction hypothesis, there exist morphisms $\interpret{\mathcal{D}_1} : \interpret{\Gamma_1} \to \interpret{\tau_1}$ and $\interpret{\mathcal{D}_2} : \interpret{\Gamma_2} \to \interpret{\tau_2}$.
        Since $\otimes$ is a monoidal functor, we compose these two morphisms in parallel:
        \begin{equation*}
            \interpret{\mathcal{D}_1} \otimes \interpret{\mathcal{D}_2} : \interpret{\Gamma_1} \otimes \interpret{\Gamma_2} \to \interpret{\tau_1} \otimes \interpret{\tau_2} = \interpret{\tau_1 \otimes \tau_2}
        \end{equation*}
        \item[Case-Tensor] By the induction hypothesis, there exist morphisms $\interpret{\mathcal{D}_1} : \interpret{\Gamma_1} \to \interpret{\tau_1 \otimes \tau_2}$ and $\interpret{\mathcal{D}_2} : \interpret{\Gamma_2} \otimes \interpret{\tau_1} \otimes \interpret{\tau_2} \to \interpret{\tau_3}$.
        We have:
        \begin{align*}
            \idmorph{\interpret{\Gamma_2}} \otimes \interpret{\mathcal{D}_1} : \interpret{\Gamma_2} \otimes \interpret{\Gamma_1} \to \interpret{\Gamma_2} \otimes \interpret{\tau_1} \otimes \interpret{\tau_2}
        \end{align*}
        and hence by composition:
        \begin{align*}
             \interpret{\mathcal{D}_2}\circ (\idmorph{\interpret{\Gamma_2}} \otimes \interpret{\mathcal{D}_1}) : \interpret{\Gamma_2} \otimes \interpret{\Gamma_1} \to \interpret{\tau_3}
        \end{align*}
        \item[Unit-P] We have to construct a morphism from $1$ to $\interpret{\tyunitp}$, which is defined to be $1$.
        We can therefore pick the identity morphism on $1$.
        \item[Case-Unit-P] By the induction hypothesis there are morphisms $\interpret{\mathcal{D}_1} : \interpret{\Gamma_1} \to 1$ and $\interpret{\mathcal{D}_2} : \interpret{\Gamma_2} \to \interpret{\tau}$.
        We can compose these two morphisms in parallel to obtain $\interpret{\mathcal{D}_1} \otimes \interpret{\mathcal{D}_2} : \interpret{\Gamma_1} \otimes \interpret{\Gamma_2} \to 1 \otimes \interpret{\tau}$.
        There is an isomorphism $\rho$ which witnesses that $1$ is the unit of $\otimes$, so we have that $\rho \circ (\interpret{\mathcal{D}_1} \otimes \interpret{\mathcal{D}_2}) : \interpret{\Gamma_1} \otimes \interpret{\Gamma_2} \to \interpret{\tau}$.
        \item[Abs] We assume there is a derivation $\mathcal{D} : \Gamma, x : \tau_1 \vdash e : \tau_2$.
        By the induction hypothesis there exists a morphism $\interpret{\mathcal{D}} : \interpret{\Gamma} \otimes \interpret{\tau_1} \to \interpret{\tau_2}$.
        Hence $\Lambda \interpret{\mathcal{D}} : \interpret{\Gamma} \to (\interpret{\tau_1} \to \interpret{\tau_2})$.
        \item[App] We assume there are derivations $\mathcal{D}_1 : \Gamma_1 \vdash e_1 : \tau_1 \multimap \tau_2$ and $\mathcal{D}_2 : \Gamma_2 \vdash e_2 : \tau_1$.
        By the induction hypothesis there exist morphisms $\interpret{\mathcal{D}_1} : \interpret{\Gamma_1} \to (\interpret{\tau_1} \to \interpret{\tau_2})$ and $\interpret{\mathcal{D}_2} : \interpret{\Gamma_2} \to \interpret{\tau_1}$.
        By composing them first in parallel and then with the evaluation function:
        \begin{center}
          \begin{tikzcd}
            \interpret{\Gamma_1} \otimes \interpret{\Gamma_2} \ar[rr, "\interpret{\mathcal{D}_1} \otimes \interpret{\mathcal{D}_2}"] & &
            (\interpret{\tau_1} \to \interpret{\tau_2}) \otimes \interpret{\tau_1} \ar[r, "\mathrm{eval}"] &
            \interpret{\tau_2}
          \end{tikzcd}
        \end{center}
    \end{description}
    Lastly, we show how to interpret the rules for the cost monad and the potential type.
    \begin{description}
        \item[Ret] We assume that there is a derivation $\mathcal{D} : \Gamma \vdash e : \tau$.
        By the induction hypothesis there exists a morphism $\interpret{\mathcal{D}} : \interpret{\Gamma} \to \interpret{\tau}$.
        \begin{center}
          \begin{tikzcd}
            \interpret{\Gamma} \ar[r, "\interpret{\mathcal{D}}"] &
            \interpret{\tau} \ar[r, "\eta_{\interpret{\tau}}"] &
            M(0,\interpret{\tau})
          \end{tikzcd}
        \end{center}
        \item[Run] We assume that there is a derivation $\mathcal{D} : \Gamma \vdash e : \costmonad{0}{\tau}$.
        By the induction hypothesis there exists a morphism $\interpret{\mathcal{D}} : \interpret{\Gamma} \to \interpret{\costmonad{0}{\tau}}$.
        \begin{center}
          \begin{tikzcd}
            \interpret{\Gamma} \ar[r, "\interpret{\mathcal{D}}"] &
            M(0,\interpret{\tau}) \ar[r, "\eta^{-1}_{\interpret{\tau}}"] &
            \interpret{\tau}
          \end{tikzcd}
        \end{center}
        \item[Tick] Provided by the additional algebraic operation that comes with
        graded monad:
        \begin{center}
         \begin{tikzcd}
            1 \ar[r, "\uparrow_{p}"] & M(p,1)
          \end{tikzcd}
        \end{center}
        \item[Bind] We assume that there are derivations $\mathcal{D}_1 : \Gamma_1 \vdash e_1 : \costmonad{p_1}{\tau_1}$ and $\mathcal{D}_2 : \Gamma_2, x : \tau_1 \vdash e_2 : \costmonad{p_2}{\tau_2}$.
        By the induction hypothesis we have morphisms $\interpret{\mathcal{D}_1} : \interpret{\Gamma_1} \to M(p_1,\interpret{\tau_1})$ and $\interpret{\mathcal{D}_2} : \interpret{\Gamma_2} \otimes \interpret{\tau_1} \to M(p_2,\interpret{\tau_2})$.
        \begin{center}
          \begin{tikzcd}
          \interpret{\Gamma_1} \otimes \interpret{\Gamma_2} \ar[rr, "\sigma_{\interpret{\Gamma_1},\interpret{\Gamma_2}}"] &&
          \interpret{\Gamma_2} \otimes \interpret{\Gamma_1} \ar[rr, "\idmorph{\interpret{\Gamma_2}} \otimes \interpret{\mathcal{D}_1}"] &&
          \interpret{\Gamma_2} \otimes M(p_1, \interpret{\tau_1}) \ar[dllll, "\mathit{st}_{\interpret{\Gamma_2}, \interpret{\tau_1}}", swap] \\
          M(p_1, \interpret{\Gamma_2} \otimes \interpret{\tau_1}) \ar[rr, "{M(\idmorph{p_1},\interpret{\mathcal{D}_2})}", swap] &&
          M(p_1, M(p_2, \interpret{\tau_2})) \ar[rr, "\mu_{p_1,p_2}", swap] &&
          M(p_1 + p_2, \interpret{\tau_2})
          \end{tikzcd}
        \end{center}

        \item[Store] We assume there is a typing derivation $\mathcal{D} : \Gamma \vdash e : \tau$.
        By the induction hypothesis there exists a morphism $\interpret{\mathcal{D}} : \interpret{\Gamma} \to \interpret{\tau}$.
        We have to construct a morphism from $\interpret{\Gamma}$ to $M(p, P(p,\interpret{\tau}))$.
        Let us look at one specific instance of the adjunction between $M$ and $P$:
        \begin{equation*}
          \phi : \homset{\modelcategory}{P(p,\interpret{\Gamma})}{P(p,\interpret{\tau})} \simeq \homset{\modelcategory}{\interpret{\Gamma}}{M(p,P(p,\interpret{\tau}))} : \phi^{-1}
        \end{equation*}
        We can thus pick $\phi(P(\idmorph{p}, \interpret{\mathcal{D}}))$ as a morphism from $\interpret{\Gamma}$ to $M(p,P(p,\interpret{\tau}))$.


        \item[Pay] We assume there is a typing derivation $\mathcal{D} : \Gamma \vdash e : \potentialtype{p}{\costmonad{p}{\tau}}$.
        By the induction hypothesis there exists a morphism $\interpret{\mathcal{D}} : \interpret{\Gamma} \to \interpret{\potentialtype{p}{\costmonad{p}{\tau}}}$.
        We can compose this morphism with the counit of the adjunction to obtain the morphism $\epsilon \circ \interpret{\mathcal{D}} : \interpret{\Gamma} \to \interpret{\tau}$.

        \item[Split] We assume there is a typing derivation $\mathcal{D} : \Gamma \vdash e : \costmonad{(p_1 + p_2)}{\tau}$.
        By the induction hypothesis there exists a morphism $\interpret{\mathcal{D}} : \interpret{\Gamma} \to \interpret{\costmonad{(p_1 + p_2)}{\tau}}$.
        We can compose this morphism with the inverse of the multiplication of the graded monad $M$ to obtain
        $\mu^{-1} \circ \interpret{\mathcal{D}} : \interpret{\Gamma} \to \interpret{\costmonad{p_1}{(\costmonad{p_2}{\tau})}}$.

        \item[PLet] We assume there are typing derivations $\mathcal{D}_1 : \Gamma_1 \vdash e_1 : \potentialtype{p}{\tau_1}$ and $\mathcal{D}_2 : \Gamma_2, x : \tau_1 \vdash e_2 : \tau_2$.
        By the induction hypothesis there exist morphisms $\interpret{\mathcal{D}_1} : \interpret{\Gamma_1} \to \interpret{\potentialtype{p}{\tau_1}}$ and $\interpret{\mathcal{D}_2} : \interpret{\Gamma_2} \otimes \interpret{\tau_1} \to \interpret{\tau_2}$.
        \begin{center}
          \begin{tikzcd}
            \interpret{\Gamma_1} \otimes \interpret{\Gamma_2} \ar[rr, "\sigma_{\interpret{\Gamma_1},\interpret{\Gamma_2}}"] &&
            \interpret{\Gamma_2} \otimes \interpret{\Gamma_1} \ar[rr, "\idmorph{\interpret{\Gamma_2}} \otimes \interpret{\mathcal{D}_1}"] &&
            \interpret{\Gamma_2} \otimes P(p,\interpret{\tau_1}) \ar[dllll, "\mathit{st}_{\interpret{\Gamma_2},\interpret{\tau_1}}", swap] \\
            P(p, \interpret{\Gamma_2} \otimes \interpret{\tau_1}) \ar[rrrr, "{P(p, \interpret{\mathcal{D}_2})}",swap] &&&&
            P(p, \interpret{\tau_2})
          \end{tikzcd}
        \end{center}

    \end{description}
\end{proof}

We also introduced substitution of typing derivations as a lemma in \cref{def:lamor:substitution-derivations}.
These are interpreted as follows:
\begin{lemma}[Interpretation of Substitution]
  If $\mathcal{D}_1 : \Gamma_1 \vdash e_1 : \tau_1$ and $\mathcal{D}_2 : \Gamma_2,x : \tau_1 \vdash e_2 : \tau_2$,
  then \cref{thm:adjoint-model:fundamental} gives us morphisms $\interpret{\mathcal{D}_1} : \interpret{\Gamma_1} \to \interpret{\tau_1}$ and
  $\interpret{\mathcal{D}_2} : \interpret{\Gamma_2} \otimes \interpret{\tau_1} \to \interpret{\tau_2}$.
  We can combine them as
  \begin{align*}
    \interpret{\mathcal{D}_2} \circ (\idmorph{\interpret{\Gamma_2}} \otimes \interpret{\mathcal{D}_1}) \circ \sigma_{\interpret{\mathcal{D}_1},\interpret{\mathcal{D}_2}}: \interpret{\Gamma_1} \otimes \interpret{\Gamma_2} \to \interpret{\tau_2}
  \end{align*}
  which agrees with the interpretation of $\textsc{Subst}_1(\mathcal{D}_1,\mathcal{D}_2)$ of \cref{def:lamor:substitution-derivations}.
  Similarly for $\textsc{Subst}_2$.
\end{lemma}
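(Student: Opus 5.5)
The claim is that the interpretation of the derived rule $\textsc{Subst}_1(\mathcal{D}_2,\mathcal{D}_1)$ coincides with the composite $\interpret{\mathcal{D}_2} \circ (\idmorph{\interpret{\Gamma_2}} \otimes \interpret{\mathcal{D}_1}) \circ \sigma$. Since $\textsc{Subst}_1$ was defined by recursion on the derivation of $e_2$ (\cref{def:lamor:substitution-derivations}), I will prove this by induction on that derivation $\mathcal{D}_2 : \Gamma_2, x : \tau_1 \vdash e_2 : \tau_2$. I will do this simultaneously with the analogous statement for $\textsc{Subst}_2$, whose semantic counterpart is $\interpret{\mathcal{D}} \circ (\idmorph{} \otimes \interpret{\mathcal{D}_1} \otimes \interpret{\mathcal{D}_1'})$ up to the evident symmetry and associativity isomorphisms. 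The simultaneous treatment is needed because rules binding two variables (\textsc{Case-Tensor}) push a single substitution under a binder and therefore call the binary case.

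First I fix a slightly stronger, more flexible statement to use as the induction hypothesis. The substituted variable need not be last in the context, because exchange and the context splits of multiplicative rules can move it. So I will state the lemma for $x$ at an arbitrary position $\Gamma_2 = \Delta, x:\tau_1, \Delta'$. In that form, substituting corresponds to precomposing with $\interpret{\mathcal{D}_1}$ in the $x$-slot, conjugated by the coherence isomorphisms of $\otimes$. By Mac Lane coherence for symmetric monoidal categories, any two such reshuffling isomorphisms agree. This lets me ignore bookkeeping of $\alpha$, $\sigma$, $\lambda$, $\rho$.

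Then I go case by case through the rules used in \cref{thm:adjoint-model:fundamental}.
\begin{itemize}
\item \textbf{Var.} $\textsc{Subst}_1$ returns $\mathcal{D}_1$, and the composite is $\idmorph{} \circ \interpret{\mathcal{D}_1}$ modulo unitors, so the two agree.
\item \textbf{Wk of $x$ itself.} The interpretation discards $x$ with $!$, and $! \circ \interpret{\mathcal{D}_1} = {!}$ holds by terminality of $1$. This is exactly why the weakened context $\Gamma_1$ is discarded.
\item \textbf{Rules splitting the context} (\textsc{App}, \textsc{Tensor}, \textsc{Case-*}, \textsc{Bind}, \textsc{PLet}). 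Here $x$ lands in exactly one premise, the induction hypothesis applies there, and bifunctoriality of $\otimes$ (interchange) lets $\idmorph{}\otimes\interpret{\mathcal{D}_1}$ slide into the correct factor.
\item \textbf{Rules whose interpretation postcomposes} a fixed morphism (\textsc{Inl}, \textsc{Fst}, \textsc{Ret}, \textsc{Run}, \textsc{Pay}, \textsc{Split}, \textsc{Void}, \textsc{Sub}). These are immediate by associativity of composition.
\item \textbf{Rules sharing the context} (\textsc{With}). I use $(f \triangle g)\circ h = (f\circ h)\triangle(g \circ h)$.
\end{itemize}

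The cases where $x$ sits in a context that is transformed by a functor need genuine naturality arguments, and I expect these to be the main obstacle.
\begin{itemize}
\item \textbf{Bind and PLet,} with $x \in \Gamma_2$ under the strength. This needs naturality of $\mathit{st}$ in its first argument, $\mathit{st}\circ(f\otimes \idmorph{}) = M(p, f\otimes\idmorph{})\circ \mathit{st}$, and similarly for $P$. After that, functoriality of $M(p,-)$ and $P(p,-)$ finishes the case.
\item \textbf{Store.} This needs naturality of the adjunction bijection $\phi$ in $X$, i.e.\ $\phi(g)\circ f = \phi(g\circ P(p,f))$, together with functoriality of $P(p,-)$.
\item \textbf{Abs.} This needs naturality of currying, $\Lambda(h)\circ f = \Lambda(h\circ(f\otimes\idmorph{}))$.
\item \textbf{Case-Sum.} Here $x$ lies in $\Gamma_2$ and is distributed over the coproduct. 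I must check that the distributivity isomorphism $(A+B)\otimes C \simeq (A\otimes C)+(B\otimes C)$ is natural in $C$, which follows since it is built from coproduct injections tensored with $C$.
\end{itemize}

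Finally, \textsc{Ex} is handled by naturality of the braiding, and $\textsc{Subst}_2$ follows by the same analysis, or by iterating the unary case when $x$ and $y$ fall in different premises.
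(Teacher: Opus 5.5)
Your induction on the body derivation, generalised to an arbitrary position of $x$ and using coherence, naturality of the strength, naturality of $\phi$ and naturality of currying, is correct. It is the route the paper leaves implicit: the paper states this lemma without proof, and its only hint is that $\textsc{Subst}_1$ is defined by recursion on that derivation. One small fix: \textsc{Sub} is not a pure postcomposition case, since $\interpret{\textsc{Sub}(\mathcal{D},\mathcal{X},\mathcal{S})} = \interpret{\mathcal{S}}\circ\interpret{\mathcal{D}}\circ\interpret{\mathcal{X}}$ also precomposes with the context coercion, whose $x$-component must be pushed onto $\mathcal{D}_1$ and absorbed by the same interchange law for $\otimes$ you use for the context-splitting rules (and \textsc{Unit-N}, which you omit, is immediate by terminality).
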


\begin{remark}[Deriving tick from subtyping]
\label{rem:tick-from_sub}
For cost models where $0 \leq p$ for all $p$ (i.e., $0$ is initial in the
$\costcategory{}$), then we can derive the tick operation $\uparrow_p : 1 \to
M(p, 1)$ from the model of subtyping:
         \begin{center}
          \begin{tikzcd}
            1 \ar[r, "\eta_{1}"] & M(0,1) \ar[r, "\interpret{\mathcal{S}}"] & M(p,1)
          \end{tikzcd}
         \end{center}
We do not bake in this choice however to avoid the need for $0$ to be the bottom element
and to decouple the notion of enforcing a cost (which may want a computational interpretation)
from the model of subtyping (which may \emph{not} want a computational interpretation).
\end{remark}


\begin{theorem}[Soundness]
    If $\mathcal{D}_1 \equiv \mathcal{D}_2$, then $\interpret{\mathcal{D}_1} = \interpret{\mathcal{D}_2}$.
\end{theorem}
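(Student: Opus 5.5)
We prove the statement by induction on the derivation of $\mathcal{D}_1 \equiv \mathcal{D}_2$, checking for each equational rule that the two sides have the same interpretation under \cref{thm:adjoint-model:fundamental}.

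The structural rules are immediate. $\equiv$\textsc{-Refl}, $\equiv$\textsc{-Sym} and $\equiv$\textsc{-Trans} hold because equality of morphisms is an equivalence relation. The congruence rules hold because every clause of the interpretation is built compositionally from the interpretations of the immediate subderivations, using composition, $\otimes$, $\triangle$, $\triangledown$, $\Lambda$, $M(p,-)$ and $P(p,-)$. Before the computational cases we record one auxiliary lemma, proved by induction on the derivation being substituted into. It states that $\interpret{\textsc{Subst}_1(\mathcal{D}_2,\mathcal{D}_1)}$ is the composite $\interpret{\mathcal{D}_2} \circ (\idmorph{} \otimes \interpret{\mathcal{D}_1}) \circ \sigma$ (as already asserted by the Interpretation of Substitution lemma), and similarly for $\textsc{Subst}_2$ and for the iterated exchanges $\textsc{Ex}^\ast$, which are interpreted by the corresponding composites of braidings. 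The standard $\beta$/$\eta$ laws for $\otimes,\multimap,\with,\oplus$ and the units then follow in the usual way from the universal properties. For $\oplus$ one uses that $\otimes$ preserves coproducts, being a left adjoint. For $\tyunitp$ and $\tyunitn$ one uses terminality of $1$.

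For the cost laws we unfold the \textsc{Bind} and \textsc{Ret} clauses. $\equiv$\textsc{-Monad-Unit-L} follows from naturality of the strength, the strength–unit coherence $\mathit{st}\circ(\idmorph{}\otimes\eta)=\eta$, and the unit law for $\mu$; together these reduce the left-hand side to the substitution composite. $\equiv$\textsc{-Monad-Unit-R} is the right unit law plus the unitor coherence of $\mathit{st}$. $\equiv$\textsc{-Monad-Assoc} uses the strength–multiplication coherence (\cref{apx:model-details:strong-graded-monad-diagrams}), associativity of $\mu$, the associator coherence of $\mathit{st}$, and the symmetric monoidal coherence theorem to match the braidings introduced by $\textsc{Ex}^\ast$. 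The tick laws are exactly the two graded algebraic operation equations. For the comonad laws, note that \textsc{Split} is $\mu^{-1}=\delta$ and \textsc{Run} is $\eta^{-1}=\varepsilon$, and that \textsc{CLet} unfolds to $\mu\circ M(p,\eta\circ f)\circ\mathit{st}\circ\ldots$, which by the unit law collapses to $M(p,f)$ after the strength. Each comonad law is then the inverse of the corresponding monad law. For instance, $\equiv$-\textsc{Comonad}$_3$ says $M(p_1,\delta)\circ\delta=\delta\circ\delta$, which is the inverse of monad associativity. The laws $\equiv$\textsc{-Run-Ret}, $\equiv$\textsc{-Ret-Run}, $\equiv$\textsc{-Split-Bind} and $\equiv$\textsc{-Bind-Split} are direct from the isomorphism requirements $\eta^{-1}=\varepsilon$ and $\mu^{-1}=\delta$. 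In $\textsc{Bind}(\mathcal{D},\textsc{Var})$ the strength with the empty context is a unitor by the strength axiom, so these cases reduce as expected.

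The potential laws are handled similarly. $\equiv$-\textsc{PLet}$_1$ and $\equiv$-\textsc{PLet}$_2$ follow from functoriality of $P(p,-)$, naturality of the strength and its associativity coherence, exactly as in the monad case but without $\mu$. For the adjunction laws we write the \textsc{Store} clause $\phi(P(\idmorph{},\interpret{\mathcal{D}}))$ as $\eta_{p}\circ\interpret{\mathcal{D}}$ (adjunction unit) by naturality of $\phi$. Then $\equiv$\textsc{-Adjunction}$_1$ becomes $M(p,\epsilon_p)\circ\eta_{p,M(p,X)}\circ\interpret{\mathcal{D}}=\interpret{\mathcal{D}}$ and $\equiv$\textsc{-Adjunction}$_2$ becomes $\epsilon_{p,P(p,X)}\circ P(p,\eta_p)\circ\interpret{\mathcal{D}}=\interpret{\mathcal{D}}$. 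These are the two triangle identities, once \textsc{CLet} and \textsc{PLet} with an empty auxiliary context have been simplified to $M(p,-)$ and $P(p,-)$ via the unitor axiom of the strength. Finally, the rules involving subsumption require that $\interpret{\mathcal{S}}$ is natural and compatible with the operations. This is where the functoriality of $M$ and $P$ in the grade, and the naturality of $\mu$, $\eta$ and $\delta$ in the grades, enter.

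The main obstacle will be the bookkeeping in $\equiv$\textsc{-Monad-Assoc}, and the analogous context-splitting in $\equiv$-\textsc{PLet}$_2$. Here the interpretations pass through several braidings and strengths on different context splits, so the strength–multiplication coherences must be applied in the right order. I would first establish a derived lemma that the \textsc{Bind} interpretation equals the Kleisli composition ``$\mu\circ M(\interpret{\mathcal{D}_2})\circ\mathit{st}$'' in a normalised context order, and then use coherence for symmetric monoidal categories to dispose of all remaining braiding shuffles at once.
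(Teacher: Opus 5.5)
Your proposal is correct and takes the same route as the paper: induction on the derivation of $\mathcal{D}_1 \equiv \mathcal{D}_2$, with each rule discharged by the matching piece of model structure. That structure is the graded monad and comonad laws and their compatibility, the strength coherences, the triangle identities of $P(p,-) \dashv M(p,-)$, and the universal properties for the ordinary connectives. The paper only asserts this correspondence as straightforward, so your rule-by-rule account elaborates its argument rather than departing from it; this includes the semantic substitution lemma and rewriting \textsc{Store} as the adjunction unit via naturality of $\phi$.
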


\begin{proof}
Since the model is close to structure of derivations, the proof is straightforward, in some cases with an exact 1-1 correspondence.
\end{proof}

\subsection{Set Model}
\label{sec:set-model}

The simplest instance of the model presented in the previous section is the set-theoretic model which ignores the type-theoretic analysis of cost and potential and maps the monad and potential types directly to the representation of their underlying types:

\begin{definition}[Set-Theoretic Model]
    The model category $\modelcategory$ is just $\setcategory$, the category of sets.
    Both monoidal structures $(\modelcategory, \times, 1, \alpha, \lambda, \rho, \sigma)$ and $(\modelcategory, \otimes, 1, \alpha, \lambda, \rho,\sigma)$ are given by the categorical product in $\setcategory$.
    The monoidal structure $(\modelcategory, +, 0, \alpha, \lambda, \rho, \sigma)$ is given by the categorical coproduct in $\setcategory$.
    Both functors $M : \costcategory \times \setcategory \to \setcategory$ and $P : \costcategory^\op \times \setcategory \to \setcategory$ are given by identity functors which ignore the category $\costcategory$ completely.
\end{definition}

It can easily be checked that this definition satisfies all the required monad and functor laws, including strength, and that there is an adjunction between the two identity functors.

\section{$\lambda$-amor's Kripke Logical Relation Model}
\label{sec:kripke-model}
Previous work defined a logical relation model of
$\lambda$-amor~\cite{Rajani2021lamor}, shown
below. We leverage this to form an instance of the $\lambda$-amor Core
model of Section~\ref{sec:adjoint-model}.

Section~\ref{sec:lamor-cat-model} argued for a reduced set of language
primitives from the perspective of a categorical model for
$\lambda$-amor centered around a family of adjunctions between
potential and cost, and a structure for cost that is both graded
monadic and graded comonadic.  To justify that this is a rational
reconstruction of $\lambda$-amor, we show in this section that from
the original logical relations model for $\lambda$-amor we can form a
sound \emph{syntactic model} for $\lambda$-amor Core. A further
implication of this result is that from $\lambda$-amor primitives we
can derive the $\lambda$-amor Core primitives.

The model is defined in terms of a value interpretation on
types $\kripkemodel{\tau}$, an interpretation of contexts
$\kripkemodel{\Gamma}$, and an expression-based interpretation of types
$\kripkemodele{\tau}$:


\begin{definition}[Kripke Logical Relation Model]
  \label{def:kripke-model:model}
  \begin{displaymath}
    \begin{array}{llll}
      \kripkemodel{\tyunitp} & \coloneq & \{(p, \tmunit)\} & \\
      \kripkemodel{\tyunitn} & \coloneq & \{ (p, \tmcocase{})\} & \\
      \kripkemodel{\tyvoid}  & \coloneq & \emptyset & \\
      \kripkemodel{\tau_1 \oplus \tau_2} & \coloneq & \lbrace (p,\tminl{v}) \mid (p,v) \in \kripkemodel{\tau_1} \rbrace \cup \lbrace (p, \tminr{v}) \mid (p,v) \in \kripkemodel{\tau_2} \rbrace & \\
      \kripkemodel{\tau_1 \otimes \tau_2} & \coloneq & \{ (p, \tmtensor{v_1}{v_2}) \mid \exists \potential_1, \potential_2. \potential_1 + \potential_2 \leq p \wedge (p_1, v_1) \in \kripkemodel{\tau_1} \wedge (p_2,v_2) \in \kripkemodel{\tau_2} \} & \\
      \kripkemodel{\tau_1 \with \tau_2} & \coloneq & \{ (p, \tmcocase{\tmfst \Rightarrow t_1; \tmsnd \Rightarrow t_2}) \mid (p, t_1) \in \kripkemodele{\tau_1} \wedge (p,t_2) \in \kripkemodele{\tau_2} \} & \\
      \kripkemodel{\tau_1 \multimap \tau_2}& \coloneq & \{ (p,\lambda x. e) \mid \forall p',v . (p', v) \in \kripkemodele{\tau_1} \Rightarrow (p + p', e[v/x]) \in \kripkemodele{\tau_2} \} \\
      \kripkemodel{\costmonad{\cost}{\tau}} & \coloneq & \{ (p,v) \mid \exists \potential' . \potential' + \cost' \leq \potential + \cost \wedge v \forcesto{\cost'} v' \wedge
                                                          (p',v') \in \kripkemodel{\tau} \} & \\
      \kripkemodel{\potentialtype{n}{\tau}}& \coloneq & \{ (p,v) \mid \exists \potential'. n + \potential' \leq \potential \wedge (\potential',v) \in \kripkemodel{\tau} \} &
      \\[0.5em]
      \kripkemodel{\Gamma} & \coloneq & \{(p,\gamma) \mid \exists f : \mathsf{dom}(\Gamma) \rightarrow \costcategory . \\
      & & (\mathcal\forall x \in \mathsf{dom}(\Gamma) . (f(x), \gamma(x)) \in \kripkemodel{\Gamma(x)}) \wedge (\sum_{x \in \mathsf{dom}(\Gamma)} f(x)) \leq p \} \\[0.5em]
      \kripkemodele{\tau} & \coloneq & \{ (p,e) \mid \forall v . e \evaluatesto v \wedge (p,v) \in \kripkemodel{\tau} \}
    \end{array}
  \end{displaymath}
\end{definition}

\noindent
The interpretation of the cost monad is that: given some concrete amount of fuel
$\kappa'$ used-up by forcing the term $v$ and given that we require fuel $p'$ to model
the inner term (type $\tau$) then overall $\kappa' + p'$ must be no more than
the incoming fuel $p$ plus the extra fuel $\kappa$ needed by this term, as recorded in the
type. For the potential type, since we are storing at most $n$ fuel here then
the amount of fuel available $p$ must be at least $n$ plus the amount $p'$ required by
the inner term of type $\tau$.

We view the model in the form $\kripkemodel{\tau}_p$ as cost-indexed
  set of terms (hence the \emph{Kripke} naming):
  $$
  \kripkemodel{\tau}_p \triangleq \bigcup \{ t \mid (p, t) \in \kripkemodel{\tau} \}
  $$
  Whenever $p \leq p'$ then $\kripkemodel{\tau}_p \subseteq \kripkemodel{\tau}_{p'}$.
This model induces the structure of a category.
\begin{definition}[Kripke Category, \lean]
  \label{def:kripke-model:category}
   For all cost categories $\costcategory$ in which $0$ is initial,
 i.e., that $0 \leq p$ for all $p \in \costcategory$, then
  we can define the category $\kripkecategory$ by the following data:
  \begin{itemize}
    \item Objects of $\kripkecategory$ are types.
    \item For any $A,B \in \catobjects{\kripkecategory}$, a morphism from $A$ to $B$ is an $\costcategory{}$-indexed set of terms $t_1$ such that, for every $p \in \costcategory{}$, $t_1$ is a term under exactly one binder (i.e., $\lambda x . t_1$) and:
\begin{align*}
        \forall q, t_2 ,\ t_2 \in \kripkemodel{A}_q \implies t_1[t_2/0] \in \kripkemodel{B}_{p + q}
    \end{align*}
    That is, for all terms $t_2$ with cost $q$ in the Kripke interpretation of the
    source then the substitution of that term into the body $t_1$ is in the model of the target at cost $p + q$.
%

\end{itemize}
\end{definition}

We can prove that this category is an instance of the generic model
of Section~\ref{sec:adjoint-model}, defining the rest of the structure
atop of $\kripkecategory$:

\begin{theorem}[Model Instance, \lean]
\label{thrm:kripke-model-instance}
The Kripke
 category $\kripkecategory$ defined
 in \cref{def:kripke-model:category} provides an instance of the
 generic model defined in \cref{def:adjoint-model:model-category}.
\end{theorem}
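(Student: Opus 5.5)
Since $\kripkecategory$ is built from a logical relation whose objects are types, I would construct every piece of structure demanded by \cref{def:adjoint-model:model-category} \emph{syntactically}: each required morphism is given by a term under one binder, and membership in the hom-set is checked by unfolding $\kripkemodel{-}$ and $\kripkemodele{-}$. The first step is to confirm that $\kripkecategory$ is a category. Identities are the variable $x$ with index $0$, using $0 + q = q$. Composition substitutes one body into the other and adds the indices; its correctness rests on a substitution lemma for the model together with the monotonicity property $\kripkemodel{\tau}_p \subseteq \kripkemodel{\tau}_{p'}$ for $p \le p'$. Because hom-sets are sets of terms, I would work up to the equality that the Lean development uses, e.g.\ extensional equality of the induced maps on the Kripke interpretations, so that the categorical laws reduce to $\beta$-style identities.

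Next, I would build the three monoidal structures on type formers. $\otimes$ is interpreted by the type $\tau_1\otimes\tau_2$. Its action on morphisms and its coherence maps are case-analyses that repackage pairs; they are well-typed because the defining clause of $\kripkemodel{\tau_1\otimes\tau_2}$ splits the index as $p_1+p_2\le p$. The unit is $\tyunitp$, and it is terminal via the weakening term $\tmunit$ (this uses $0\le p$). $\times$ is interpreted by $\with$, using $\tmcocase{\ldots}$ and the projections, and $+$ by $\oplus$, with $\tyvoid$ initial because its interpretation is empty. The closed structure uses $\multimap$. Currying and uncurrying are $\lambda$-abstraction and application, and the index arithmetic matches the clause $(p+p', e[v/x])$ in $\kripkemodel{\tau_1\multimap\tau_2}$. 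Strength of $P$ and of $M$ is a term that destructs the pair and rebuilds the potential or monadic value. For $P$, the extra index of the $X$ component is simply added to the residual potential.

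The cost structure is next. $M(p,\tau) \coloneq \costmonad{p}{\tau}$ and $P(p,\tau) \coloneq \potentialtype{p}{\tau}$. The unit $\eta$ is $\mathtt{ret}$, the multiplication $\mu$ is $\tmbind{y}{x}{y}$, and ticks are $\uparrow^p$. The comonad counit and comultiplication are $\mathtt{run}$ and $\mathtt{split}$. Each is justified by the forcing rules of \cref{def:forcing-eval}, with costs related through the inequality $p'+\kappa'\le p+\kappa$. The adjunction $P(p,-)\dashv M(p,-)$ is given by the unit $\mathtt{store}$ and the counit $\mathtt{pay}$; alternatively, I would use the $\mathtt{release}$-based encodings of \cref{subsec:lamor:differences}, as the text before the theorem suggests. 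For the counit, an element of $\kripkemodel{\potentialtype{p}{\costmonad{p}{\tau}}}_q$ has inner index $p'$ with $p+p'\le q$. Forcing it at cost $\kappa'$ gives a result in $\kripkemodel{\tau}_{p''}$ with $p''+\kappa'\le p'+p\le q$, so the result lies in $\kripkemodel{\tau}_q$ by monotonicity.

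I expect the main obstacle to be the \emph{equational} requirements rather than the existence of the morphisms: the graded monad laws, the compatibility isomorphisms $\eta^{-1}=\varepsilon$ and $\mu^{-1}=\delta$, the naturality and triangle identities of the adjunction, and the strength coherences. With syntactic morphisms, these hold only modulo evaluation. So I would first prove a lemma that two morphisms are equal whenever their bodies evaluate and force to the same values on every argument (the chosen notion of hom-equality). Each law then reduces to a direct computation with the rules of \cref{def:pure-eval,def:forcing-eval}, for example $\mathtt{run}(\mathtt{ret}\,e)\Downarrow v$ iff $e\Downarrow v$. The delicate part is bookkeeping of indices under substitution and existential witnesses. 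This is exactly where the mechanization does most of the work, and the bulk of the effort goes into case-by-case checks with this lemma.
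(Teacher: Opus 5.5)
Your overall architecture matches the paper's. Every structure map is an explicit term, membership in the hom-sets of $\kripkecategory$ is checked by unfolding $\kripkemodel{-}$, and the equational laws are left to the mechanization. The gap is in how you realize the interaction of cost and potential.

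$\kripkecategory$ sits on the logical relation of the \emph{original} $\lambda$-amor. Its terms have $\mathtt{release}$ but not $\mathtt{split}$, $\mathtt{pay}$ or $\mathtt{plet}$. The paper uses this theorem precisely to show that the Core primitives are \emph{realized by} the release-based derivations, so taking them as primitives changes the term language and sidesteps the point.

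More concretely, your claim that $\delta=\mathtt{split}$ is justified by the rules of \cref{def:forcing-eval} has nothing behind it. That definition has no rule for $\mathtt{split}$. The only rule the paper gives, \textsc{F-Split}, forces the inner computation completely: $\tmsplit{x}\forcesto{\kappa'}v'$ returns a $\tau$-value at the full cost $\kappa'$. That result is not in $\kripkemodel{\costmonad{p_2}{\tau}}$, and the cost need not fit the outer budget $q+p_1$, since you only know $p'+\kappa'\le q+p_1+p_2$.

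The missing idea is the paper's $\delta=\lambda x.\tmbind{y}{\tmstore{x}}{\tmret{(\tmrelease{z}{y}{z})}}$. Because $\mathtt{store}$ forces at cost $0$, the outer layer is free and can be witnessed at index $q+p_1$. The inner $\mathtt{release}$ then re-forces $x$ at its actual cost $\kappa'$, which fits because $p'+\kappa'\le(q+p_1)+p_2$.

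The same applies elsewhere. The paper takes the adjunction counit to be $\lambda x.\tmrun{\tmrelease{y}{x}{y}}$ rather than $\mathtt{pay}$. The functorial action and strength of $P$, which you leave unspecified (``rebuilds the potential value''), need the pattern $\tmrun{\tmrelease{y}{x}{\tmstore{(\ldots)}}}$. This is because in $\lambda$-amor a term of potential type can only arise from $\mathtt{store}$ inside the monad.

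Second, your plan for the laws would not go through as stated. Suppose morphisms are equal exactly when their bodies force to the same values at the same cost. Then $\mu^{-1}=\delta$ fails. Take $x\in\kripkemodel{\costmonad{p_1}{(\costmonad{p_2}{\tau})}}$ with $x\forcesto{\kappa_1}u$ and $u\forcesto{\kappa_2}v$. Then $\delta(\mu\,x)$ forces at cost $0$ to a deferred $\mathtt{release}$ term costing $\kappa_1+\kappa_2$, whereas $x$ forces at cost $\kappa_1$ to $u$. Any term-definable $\delta$ must defer in this way, since it cannot observe how the actual cost of $x$ splits.

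The law $\eta\circ\varepsilon=\mathrm{id}$ fails as well. $\kripkemodel{\costmonad{0}{\tau}}_q$ contains computations of positive actual cost, e.g.\ $\tmtick{1}$ at $q=1$, and on these $\mathtt{run}$ is stuck by \textsc{E-Run}.

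The paper does not derive these laws by evaluation. It asserts that they follow from the definitions and defers them to the mechanization. So the equality on $\kripkecategory$ must be much coarser than an evaluation-based one, and your lemma cannot be what discharges the compatibility conditions.
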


\begin{proof}
We outline the main ideas of the construction and proof, which has been mechanised
(proofs scripts accompanying this paper). At various points, the initiality of $0$
is required.

\begin{enumerate}
    \item Products given by $\tau_1 \& \tau_2$ which is
    a bifunctor with morphism mapping such that for all $f : \tau \to \tau'$ and $g : \sigma \to \sigma'$ then $f \, \& \, g \triangleq \lambda x . \tmcocase{\tmfst \Rightarrow
    x.\tmfst, \tmsnd \Rightarrow x.\tmsnd}$. The corresponding unit is given by
    $\top$ which is terminal
    $!_\tau \triangleq (\lambda x . \tmcocase{}) : \tau \to \top$.

    \item Symmetric monoidal structure $\tau_1 + \tau_2$ which is a bifunctor
    with morphism mapping $f + g \triangleq \lambda x . \tmcase{x}{\tminl{y} \Rightarrow
    \tminl{y}, \tminr{z} \Rightarrow \tminr{z}}$ with unit $0$.

    \item Symmetric monoidal structure $\tau_1 \otimes \tau_2$ which is a bifunctor with morphism mapping $f \otimes g \triangleq
    \lambda x . \tmcase{x}{\tmtensor{y}{z} \Rightarrow \tmtensor{f y}{g z}}$
    with unit $1$ which is terminal with
    $!_\tau \triangleq (\lambda x . \tmunit) : \tau \to 1$.

    \item The right adjoint to $\otimes$ given
     by functions $\tau_1 \multimap \tau_2$ with the isomorphism of hom-sets by:
    \setlength{\arraycolsep}{0.1em}
    \begin{align*}
    \begin{array}{rll}
    \Phi_\rightarrow (f : (\sigma \otimes \tau_1) \multimap \tau_2) & \triangleq \lambda z . \lambda y . f \tmtensor{z}{y} & : \sigma \multimap (\tau_1 \multimap \tau_2) \\
    \Phi_\rightarrow^{-1} (g : \sigma \multimap (\tau_1 \multimap \tau_2)) &
    \triangleq \lambda x . \tmcase{x}{\tmtensor{y}{z} \Rightarrow g y z}
    & : (\sigma \otimes \tau_1) \multimap \tau_2
    \end{array}
    \end{align*}

    \item A \emph{strong graded monad} structure on $\costmonad{p}{-}$
    with a $\costcategory$-indexed family of
    strong endofunctors on $\kripkecategory$ with the morphism mappings defined such
    that for all $f \in \tau_1 \to \tau_2$ then:
    \[
    \costmonad{p}{f} \triangleq (\lambda x. \tmbind{y}{x}{\tmret{(f\ y)}}) :
    \costmonad{p}{\tau_1} \to \costmonad{p}{\tau_2}
    \]
    and with strong graded monad operations:
    \setlength{\arraycolsep}{0.1em}
    \begin{align*}
    \begin{array}{rll}
     \eta_\tau & \triangleq \lambda x . \tmret{x} & : \tau \to \costmonad{0}{\tau} \\
     \mu_{p_1,p_2,\tau} & = \lambda x. \tmbind{y}{x}{y}
     & : \costmonad{p_1}{(\costmonad{p_2}{\tau})} \multimap \costmonad{(p_1 + p_2)}{\tau} \\
    \mathit{st}_{p,\tau_1,\tau_2} & \triangleq
   \lambda x . \tmcase{x}{\tmtensor{y}{z} \Rightarrow \tmbind{z'}{z}{\tmret{\tmtensor{y}{z'}}}} & : (\tau_1 \otimes \costmonad{p}{\tau_2}) \to \costmonad{p}{(\tau_1 \otimes \tau_2)}
     \end{array}
     \end{align*}

    \item The $\uparrow_p$ tick operation is given by the graded monad's unit operation and
    by our assumption that $0$ is initial thus we can use subtyping (see Remark~\ref{rem:tick-from_sub}),
    i.e., $\uparrow_p \triangleq \lambda x . \tmret{x} : 1 \to \costmonad{p}{1}$

    \item A \emph{graded comonad} structure on $\costmonad{p}{-}$, that is compatible
    with the graded monad (i.e., operations form an isomorphism), with operations:
    \begin{align*}
    \begin{array}{rll}
    \varepsilon_\tau & \triangleq \lambda x . \tmrun{x} & : \costmonad{0}{\tau} \to \tau \\
    \delta_{p_1,p_2,\tau} & \triangleq \lambda x . \tmbind{y}{\tmstore{x}}{\tmret{\tmrelease{z}{y}{z}}}
    & : \costmonad{(p_1 + p_2)}{\tau} \to \costmonad{p_1}{\costmonad{p_2, \tau}}
    \end{array}
    \end{align*}
    following the derivations for $\mathtt{split}$ for $\delta$
    from Section~\ref{subsubsec:split-via-release}

\item The graded monad and comonad operations are compatible, i.e.,
$\eta^{-1} = \varepsilon$ and $\mu^{-1} = \delta$ which follows from
their definitions.

\item The indexed family of strong functors $\potentialtype{p}{-}$ with
morphism mapping for all $f \in \tau_1 \to \tau_2$ and strength given by:
\begin{align*}
\begin{array}{rll}
\potentialtype{p}{f} & \triangleq \lambda x . \tmrun{\tmrelease{y}{x}{\tmstore{f\, x}}}
& : \potentialtype{p}{\tau_1} \to \potentialtype{p}{\tau_2} \\
\mathit{st}_{p,\tau_1,\tau_2} & \triangleq
\lambda x . \tmcase{x}{\tmtensor{y}{z} \Rightarrow \tmrun{\tmrelease{z'}{z}{\tmstore{\tmtensor{y}{z'}}}}} & : \tau \otimes \potentialtype{p}{\tau'}
\to \potentialtype{p}{(\tau \otimes \tau')}
\end{array}
\end{align*}

\item A \emph{family of adjunctions} $\potentialtype{p}{-} \dashv \costmonad{p}{-}$ for all $p$, given by the unit and counit operations:
    \begin{align*}
    \begin{array}{rll}
    \varepsilon_{p,\tau} & = \lambda x . \tmrun{\tmrelease{y}{x}{y}} & \in \potentialtype{p}{\costmonad{p}{\tau}} \to \tau \\
    \eta_{p,\tau} & = \lambda x . \tmstore{x} & \in \tau \to \costmonad{p}{\potentialtype{p}{\tau}}
    \end{array}
    \end{align*}
following the derivation for $\mathtt{pay}$ for $\varepsilon$ given
in Section~\ref{subsubsec:pay-via-release}.

\end{enumerate}
Items (3-9) have been fully mechanized in Lean; (1-2) are straightforward and
less the focus here.
\end{proof}

\section{A general Copresheaf Model}
\label{sec:model}
We now consider a model construction based around standard constructions in category
theory which gives a family of models for our
calculus. We take inspiration from two ideas.

Firstly, Section~\ref{sec:kripke-model} gave an instance of the model
reusing the Kripke logical relations model of $\lambda$-amor, built around
cost-indexed set of terms, i.e., a
\emph{(co)presheaf} model on the category
$[\costcategory, \setcategory]$.  (Co)presheaf models
abound in the programming language semantics literature whenever our
semantics depends on some notion of world, context~\cite{hofmann1999semantical},
or resource~\cite{pym2004possible}.
The standard construction of \emph{Day's convolution} \cite{Day1970thesis} then gives a
way to `push down' a combinational theory on contexts/resources into the
semantics; we leverage this approach here.

Secondly, the adjoint structure of cost \emph{vs.} potential that
we exposed in Section~\ref{sec:adjoint-model} suggests a
model using the standard notion of products as left adjoint to exponents,
i.e., $- \otimes R \dashv R \multimap -$, modelling potential via a pair
of a computational result and a resource $R$, and cost as a function consuming
a resource $R$ to produce a result.

We combine these two ideas to give a family of models, comparing it to
model of Section~\ref{sec:model}.  We start with some standard
categorical definitions in order to introduce the notation that we use
for various constructions.  We reuse the definition of the
cost category $\costcategory$ of Definition~{def:adjoint-model:costcategory}.

\begin{remark}
Consider taking the model category to be that of copresheaves over $\costcategory$,
i.e., $\modelcategory = [\costcategory, \setcategory]$.
    The intuition here is that the interpretation of any type $\tau$ is a functor from $\costcategory$ to Set, and can therefore be \enquote{applied} to different costs.
    For example, $\interpret{\tau}(0)$ contains the interpretation of all terms of type $\tau$ that can be evaluated with cost 0, $\interpret{\tau}(5)$ contains those that can be evaluated with cost at most $5$ and so on.
    Since the category $\costcategory$ is ordered there is a (unique, since the category is thin) morphism $f_{0\leq 5}$ from 0 to 5, so $\interpret{\tau}(f_{0\leq 5}) : \interpret{\tau}(0) \to \interpret{\tau}(5)$ witnesses that everything that can be had for free can also be had for the cost of 5. Hence the use of \emph{co}presheaves rather than presheaves.
\end{remark}

Whilst (co)presheaves are traditionally over $\setcategory$, we generalise
here for our model category:

\begin{definition}[Model category $\copresheafmodel$]
    \label{def:copresheaf-model:model}
    Let $\modelcategory$ be a symmetric monoidal closed category which is
    complete and cocomplete (i.e., has all limit and colimits).
    If $\costcategory$ is $\modelcategory$-enriched (i.e., its hom-objects
    are objects of $\modelcategory)$, then
    our model category is given by
    $\copresheafmodel = [\costcategory, \modelcategory]$,
    i.e. co-presheaves over $\costcategory$.
\end{definition}

\begin{definition}[Monoidal product via Day convolution]
\label{def:day-conv}
The copresheaf category $\copresheafmodel$ is
symmetric monoidal with unit $\dayidentity$ and tensor $\dayconvolution$ defined as follows for all $X, Y : [\costcategory, \modelcategory]$
\begin{align*}
(X \dayconvolution Y)(p) & \triangleq \int^{p_1,p_2} X(p_1) \otimes Y(p_2) \otimes \costcategory(p_1 + p_2, p) \\
\dayidentity(p) & \triangleq \costcategory(0, p)
\end{align*}
When $\modelcategory = \setcategory$, then the above coend specialises to:
$\exists p_1, p_2 . \int^{p_1,p_2} X(p_1) \times Y(p_2) \times \costcategory(p_1 + p_2, p)$
and since $\costcategory$ is thin then $\costcategory(p_1 + p_2, p)$ is either
a singleton set or empty.
\end{definition}

\begin{definition}[Day internal hom]
\label{def:day-exponent}
Further, the copresheaf category $\copresheafmodel$ is symmetric
monoidal closed with the addition of the \emph{Day internal hom}, which is
right adjoint to the Day tensor:
\[
(X \dayexponential Y)(p) \triangleq \int_{p_1, p_2}
\costcategory(p + p_1, p_2) \multimap (X(p_1) \multimap Y(p_2))
\]
\end{definition}

A further useful construction used below is that of the \emph{coYoneda embedding}:
\begin{definition}
For a $\modelcategory$-enriched category $\costcategory$, then the
\emph{coYoneda} embedding $\yoneda{-} : \costcategory^\op \to
[\costcategory, \modelcategory]$ is given by the hom-functor:
\begin{equation}
\yoneda{n} = \costcategory(n, -)
\end{equation}
The Day unit thus can be re-expressed as $\dayidentity \triangleq \yoneda{0}$.

An implication of this definition is also that, for all $n, m \in \costcategory$,
we have:
\begin{equation}
\label{eq:yoneda-comb}
\yoneda{n} \dayconvolution \yoneda{m} \cong \yoneda{n + m}
\end{equation}
\end{definition}

\begin{theorem}[Model Instance]
    The category $\copresheafmodel$ defined in \cref{def:copresheaf-model:model} is an instance of the generic adjoint model defined in  \cref{def:adjoint-model:model-category}.
\end{theorem}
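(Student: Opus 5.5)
The plan is to verify the items of \cref{def:adjoint-model:model-category} one at a time for $\copresheafmodel = [\costcategory, \modelcategory]$. Cost and potential are both built from the coYoneda embedding. Concretely, I would set
\[
P(p, X) \triangleq X \dayconvolution \yoneda{p}, \qquad M(p, X) \triangleq \yoneda{p} \dayexponential X .
\]
With this choice the family of adjunctions $P(p,-) \dashv M(p,-)$ is exactly the tensor--hom adjunction of the Day closed structure of \cref{def:day-exponent}, instantiated at the object $\yoneda{p}$. Contravariance of $P$ in $p$ is automatic, because $\yoneda{-}$ is contravariant on $\costcategory$. Covariance of $M$ in $p$ is also automatic, because $\dayexponential$ is contravariant in its first argument, so the two contravariances cancel.

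\textbf{Cartesian, cocartesian and closed structure.} Products, coproducts and the terminal and initial objects of $\copresheafmodel$ are computed pointwise from the (co)completeness of $\modelcategory$. The third monoidal structure is $(\dayconvolution, \dayidentity)$, and its right adjoint $\to$ is $\dayexponential$; both come from Day's theorem, so I only need to cite it.

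\textbf{Terminality of the unit.} One point needs care: the interpretation forces the unit of $\otimes$ to be the terminal object $1$. I would check that $\dayidentity = \yoneda{0}$ is terminal. This holds under the standing assumption that $0$ is initial in the thin category $\costcategory$, since then $\costcategory(0,p)$ is the unit object of $\modelcategory$ for every $p$. I expect to need that this unit is itself terminal in $\modelcategory$ (true for $\setcategory$), and I would add it as a hypothesis if necessary.

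\textbf{Graded monad and comonad on $M$.} I would derive the whole structure from \eqref{eq:yoneda-comb}. Using the closed structure,
\[
\yoneda{p_1} \dayexponential (\yoneda{p_2} \dayexponential X) \;\cong\; (\yoneda{p_1} \dayconvolution \yoneda{p_2}) \dayexponential X \;\cong\; \yoneda{p_1 + p_2} \dayexponential X .
\]
I define $\mu$ as this isomorphism and $\delta$ as its inverse. Similarly, $\yoneda{0} \dayexponential X \cong X$ gives $\eta$ and $\varepsilon = \eta^{-1}$. The compatibility requirement (item 8) therefore holds by construction. The unit and associativity laws should reduce to the coherence of the Day monoidal structure, using strictness of $+$ and the fact that the isomorphism \eqref{eq:yoneda-comb} is coherent with the associator and unitors. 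Dually, $P$ is a strong functor: its strength is the Day associator $X \dayconvolution (Y \dayconvolution \yoneda{p}) \cong (X \dayconvolution Y) \dayconvolution \yoneda{p}$. The strength of $M$ is the mate of evaluation, i.e. the transpose of
\[
X \dayconvolution (\yoneda{p} \dayexponential Y) \dayconvolution \yoneda{p} \to X \dayconvolution Y .
\]
Its compatibility with $\mu$ and $\eta$ is a standard enriched closed-category calculation.

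\textbf{Tick and the main obstacle.} For $\uparrow_p$, I would use \cref{rem:tick-from_sub} with $0 \leq p$. This gives the map $\yoneda{p} \to \yoneda{0}$, and hence a map $\yoneda{0} \dayexponential 1 \to \yoneda{p} \dayexponential 1$. The algebraic-operation equations then follow from thinness of $\costcategory$. The step I expect to be the main obstacle is this coherence bookkeeping. Concretely, I must show that the isomorphisms \eqref{eq:yoneda-comb} assemble into a strong monoidal functor $\yoneda{-} : (\costcategory^\op,+,0) \to (\copresheafmodel, \dayconvolution, \dayidentity)$. With that in hand, $M$ and $P$ are obtained by transporting the action of the monoidal closed category along a strong monoidal functor, and every monad, comonad and strength law is inherited at once. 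I would prove this strong monoidality by a coend computation (co-Yoneda reduction), rather than checking each diagram separately.
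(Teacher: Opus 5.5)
Your proposal is correct and follows the paper's proof essentially item by item. It makes the same choices $P(p,X) = X \dayconvolution \yoneda{p}$ and $M(p,X) = \yoneda{p} \dayexponential X$, uses pointwise (co)products, Day convolution and its internal hom, the adjunction $(- \dayconvolution \yoneda{p}) \dashv (\yoneda{p} \dayexponential -)$, gets compatibility from \eqref{eq:yoneda-comb}, and obtains tick via \cref{rem:tick-from_sub}. Your extra care does not change the route: you check that $\dayidentity = \yoneda{0}$ really is the terminal object (this needs $0$ initial in $\costcategory$ and the unit of $\modelcategory$ to be terminal, which the paper's sketch passes over), and you derive the coherence laws from strong monoidality of $\yoneda{-}$ rather than the paper's informal appeal to the ``reader monad''.
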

\begin{proof}
    We go through the properties required in \cref{def:adjoint-model:model-category} one-by-one.
    \begin{enumerate}
        \item For the interpretation of $\&$ and $\top$, the categorical product $\times$ is formed on $\copresheafmodel$ via
        the pointwise construction i.e., $(X \times Y)(p) = X(p) \times Y(p)$ and
        similarly for the terminal object $1(p) \triangleq 1$;

        \item The symmetric monoidal structure for $+$ and $0$ is also given by the standard pointwise construction in copresheaves;

        \item The symmetric monoidal structure for $\otimes$ and $1$ is given by Day convolution $\dayconvolution$ and its unit $\dayidentity$ (Definition~\ref{def:day-conv});

        \item $\dayconvolution$ is left adjoint to $\dayexponential$.

        \item The strong graded monad $M : \costcategory \times \copresheafmodel \to \copresheafmodel$ is given by
        $$M(n,X) = \yoneda{n} \dayexponential X$$
        This functor is covariant in both arguments. The monad operations are then essentially those
        of the standard \emph{reader monad}.

        \item The $\uparrow_p$ operation corresponds here to one of type
        $\dayidentity \to \yoneda{p} \dayexponential \dayidentity$.
        Essentially we then require that there exists in the underlying model category
        $\modelcategory{}$ a family of maps $\yoneda{p}(q) \multimap \yoneda{0}(q)$
        for all $p$ and $q$. Recalling Remark~\ref{rem:tick-from_sub}, this is always
        provided when $0$ is the least element: in this case,
        since the coYoneda embedding is contravariant in its first argument then
        from $0 \leq p$ we get $\yoneda{p} \dayexponential \yoneda{0}$ by functoriality.

        \item The graded monad construction also admits a graded comonad,
        essentially the \emph{graded product comonad}.

        \item Furthermore, by Equation~\eqref{eq:yoneda-comb} we have that:
        \begin{align*}
        M(p, M(q, X)) \cong M(p + q, X)  \qquad M(0, X) \cong X
        \end{align*}
        and thus we have both a graded monad and graded comonad that are
        compatible.

        \item The strong functor $P : \costcategory^\op \times \copresheafmodel \to \copresheafmodel$ is given by $$P(n,X) = X \dayconvolution \yoneda{n}$$
              This functor is contravariant in its first argument, and covariant in its second argument.

        \item For every $p \in \catobjects{\costcategory}$ there is an adjunction
        given by the adjunction of $(- \dayconvolution \yoneda{p}) \dashv (\yoneda{p} \dayexponential -)$.
    \end{enumerate}
\end{proof}


\begin{remark}[Interpretation of the potential type]
Expanding the definition of the model of the indexed potential type, at some
cost $p$, we get:
\begin{align*}
\interpret{\potentialtype{n}{\tau}}(p)
& = \hspace{-0.3em} \int^{p_1,p_2} \interpret{\tau}(p_1) \otimes \yoneda{n}(p_2) \otimes \costcategory(p_1 + p_2, p) \\
& = \hspace{-0.3em} \int^{p_1,p_2} \interpret{\tau}(p_1) \otimes \costcategory(n,p_2) \otimes \costcategory(p_1 + p_2, p) \cong \hspace{-0.3em} \int^{p'} \interpret{\tau}(p') \otimes \costcategory(p' + n, p)
\end{align*}
The isomorphism here simplifies the coend term by the enriched (co)Yoneda lemma.

In $\setcategory$, this specialises to $\exists p', \interpret{\tau}(p') \wedge
p' + n \leq p$. This is exactly the interpretation of the potential type
given in the Kripke logical relations model (Definition~\ref{def:kripke-model:model}).
\end{remark}

For potential types, the copresheaf model coincides with the original Kripke
logical relations model. However, this happy coincidence does not hold
for the cost graded monad:

\begin{remark}[Interpretation of cost graded monad]
Expanding the definition of the model of the graded monad, at some cost $p$,
we get:
\begin{align*}
(\interpret{\costmonad{n}{\tau}})(p) = (\yoneda{n} \dayexponential \interpret{\tau})(p) & = \int_{p_1,p_2} \mathbb{C}(p + p_1, p_2) \multimap  (\yoneda{n}(p_1) \multimap \interpret{\tau}(p_2)) \\ 
%
& \cong \int_{p'} \mathbb{C}(n, p') \multimap \interpret{\tau}(p + p')
\end{align*}
In $\setcategory$, this specialises to
$\forall p', (n \leq p') \Rightarrow \interpret{\tau} (p + p')$, which
differs considerably to the model of the cost graded monad in the logical
relations model which incorporates also the forcing relation.
Instead, we can read the above model as saying that given some incoming fuel
$p$ then we need some further fuel $p'$ that has to be at least as big as $n$
in order to satisfy the cost demands of $\tau$.
\end{remark}
Thus, the copresheaf model and Kripke logical relation model give different
models for cost.

We conclude this section with an observation that shows we can run a computation if we provide enough fuel, which is here represented by the coYoneda embedding of the corresponding cost.

\begin{theorem}[Enough Fuel]
    \label{thm:model:enough-fuel}
    If $p \leq p'$, then there is a morphism:
    \begin{align*}
        \interpret{\costmonad{p}{\tau}} \dayconvolution \yoneda{p'} \to \interpret{\tau}
    \end{align*}
\end{theorem}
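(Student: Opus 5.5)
The plan is to build the morphism as a composite of three maps: a restriction of the available fuel, the counit of the adjunction $P(p,-) \dashv M(p,-)$ from the proof of the model instance theorem, and nothing more. Recall that in $\copresheafmodel$ we set $M(p,X) = \yoneda{p} \dayexponential X$ and $P(p,X) = X \dayconvolution \yoneda{p}$, and that for every $p$ the adjunction $(- \dayconvolution \yoneda{p}) \dashv (\yoneda{p} \dayexponential -)$ holds. Its counit is the Day evaluation map
\[
\mathrm{ev}_{p,X} : (\yoneda{p} \dayexponential X) \dayconvolution \yoneda{p} \to X ,
\]
which is exactly the semantic $\mathtt{pay}$ at the object $X = \interpret{\tau}$. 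So the whole statement reduces to the case $p' = p$, provided we can pass from $\yoneda{p'}$ to $\yoneda{p}$.

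The reduction uses functoriality of the coYoneda embedding. Since $\yoneda{-} : \costcategory^\op \to \copresheafmodel$ is contravariant, the unique morphism $p \leq p'$ in the thin category $\costcategory$ gives a natural transformation $\yoneda{p \leq p'} : \yoneda{p'} \to \yoneda{p}$. Concretely, at component $q$ it is precomposition $\costcategory(p',q) \to \costcategory(p,q)$ with $p \leq p'$. Intuitively, having fuel $p'$ lets us supply fuel $p$. Tensoring with the identity, using bifunctoriality of $\dayconvolution$, gives
\[
\idmorph{M(p,\interpret{\tau})} \dayconvolution \yoneda{p \leq p'} : \interpret{\costmonad{p}{\tau}} \dayconvolution \yoneda{p'} \to \interpret{\costmonad{p}{\tau}} \dayconvolution \yoneda{p} .
\]
Postcomposing with $\mathrm{ev}_{p,\interpret{\tau}}$ yields the required morphism into $\interpret{\tau}$.

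I do not expect a genuine obstacle, because the two ingredients are already available. The only point needing care is the counit itself. If we want it explicitly rather than through the closed structure of Day convolution, we obtain it by transposing $\idmorph{\yoneda{p} \dayexponential X}$ along the adjunction of \cref{def:day-exponent}. Using the coend and end formulas, a component at $q$ sends a triple $(f, e, \costcategory(p_1+p_2,q))$, with $f \in (\yoneda{p}\dayexponential X)(p_1)$ and $e \in \yoneda{p}(p_2)$, to $X(p_1+p_2 \le q)(f_{p_2,p_1+p_2}(\mathrm{id})(e))$. Checking that this respects the coend identifications is precisely dinaturality of the end, so this is routine. A symmetry $\sigma$ may be needed to match the argument order in $\dayexponential$; it is harmless since the Day structure is symmetric.

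Alternatively, one can describe the same map via the remark on the interpretation of the cost graded monad: an element of $\interpret{\costmonad{p}{\tau}}(q_1)$ is a family $\forall p''.\, p \leq p'' \Rightarrow \interpret{\tau}(q_1+p'')$. Paired with $p' \leq q_2$ and $p \leq p'$, we instantiate at $p'' = p'$ and reindex along $q_1 + p' \leq q_1 + q_2 \leq q$. This gives an independent sanity check of the construction in the $\setcategory$ case.
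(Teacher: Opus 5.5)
Your proposal is correct and follows the paper's proof. The paper likewise obtains $f : \yoneda{p'} \to \yoneda{p}$ from $p \leq p'$ by contravariance of the coYoneda embedding, and defines the morphism as $\textrm{Day-Apply} \circ (\idmorph{M(p,\interpret{\tau})} \dayconvolution f)$, where Day-Apply is the counit $(\yoneda{p} \dayexponential \interpret{\tau}) \dayconvolution \yoneda{p} \to \interpret{\tau}$ you describe (note your opening says ``three maps'', but your construction uses only two).
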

\begin{proof}
   From $p \leq p'$ we obtain a morphism $f : \yoneda{p'} \to \yoneda{p}$.
   If we unfold the definition, then $\interpret{\costmonad{p}{\tau}} \dayconvolution \yoneda{p'}$ is the same as $(\yoneda{p} \dayexponential \interpret{\tau}) \dayconvolution \yoneda{p'}$.
   We can thus define:
   \begin{equation*}
      \textrm{Day-Apply} \circ (\idmorph{\costmonad{p}{\tau}} \dayconvolution f) : \interpret{\costmonad{p}{\tau}} \dayconvolution \yoneda{p'} \to \interpret{\tau}
   \end{equation*}
\end{proof}

\section{Related Work}
\label{sec:related-work}
\paragraph{Type-theoretic amortized cost analysis}
Verification of amortized cost bounds using types relies on adaptations of Tarjan's
potential method~\cite{Tarjan1985amortized} into a type-theoretic framework, which tracks resource
consumption by assigning imaginary "potential" to types. Early work by \citet{DBLP:conf/popl/HofmannJ03} embodied this by building a type and effect system to track
potentials and using that to account for cost of programs steps. This methodology was subsequently
extended to more a more general setting through Resource-Aware ML (RAML) by \citet{DBLP:journals/toplas/0002AH12}, utilising multivariate polynomial potential to capture
complex, non-linear execution bounds. Besides such type and effect approaches, graded modal types
have also been used to study amortized cost bounds. The main work in that space is the
$\lambda_{\text{amor}}$ framework from \citet{Rajani2021lamor} which is also the main object of
study of this work.
Also worth mentioning is the work by \citet{Danielsson2008semiformal} which introduced a lightweight and semiformal approach
using a type system (mechanized in Agda) that wraps computations in a graded \texttt{Thunk} monad,
tracking exact evaluation steps directly within the program's type index to analyze purely
functional, persistent data structures. Seeking a more mathematical bridge between program execution
and traditional algorithmic analysis, \citet{DBLP:journals/pacmpl/KavvosMLD20} utilized
Call-by-Push-Value (CBPV) to formalize a uniform framework for recurrence extraction across both
call-by-value and call-by-name strategies. This denotational extraction methodology was later
extended by \citet{DBLP:journals/pacmpl/CutlerLD20} to accommodate amortized
analysis by establishing a formal bounding logical relation over a source language augmented with
banker's style time credits.
Finally, approaches like those by \citet{Niu2022calf,Grodin2024} utilize
dependent type theory to structurally separate cost tracking from functional correctness.

\paragraph{Graded monads}
Graded monads provide a generalisation of monads which can be used to provide a precise model of
effect systems~\cite{Katsumata2014parametric,Orchard2014marriage,DBLP:conf/birthday/MycroftOP16},
dataflow analyses of stateful programs~\cite{ivavskovic2022programming}, and in probabilistic
programming~\cite{perrone2018categorical}. The application of graded monads to cost has appeared in
the type system of Granule~\cite{DBLP:journals/pacmpl/OrchardLE19} and in the context of modelling
Graded Hoare Logic~\cite{Gaboardi2021gradedhoare}. The line of work by~\citet{Rajani2021lamor} and
\citet{Rajani2024modal} built from this view of graded cost to the amortized view as described in
this paper.

\paragraph{Other presheaf models}

\citet{Fiore1999abstract} develop a model for variable binding that is
close to our approach here, based on copresheaves $\mathcal{F} = \mathbf{Set}^{\mathbb{F}}$
where $\mathbb{F}$ is category whose objects are finite sets $\{1, \ldots, n\}$
(for some $n \in \mathbb{N}$) and functions between such sets. Copresheaves
 $\mathcal{F}$ are used as a model of types, indexed
(or stratified) by the \emph{size} of the free-variable context. A (co)presheaf
of \emph{abstract variables} $V \in \mathcal{F}$ is defined by the Yoneda embedding
to present a context with one variable:
$$
V(n) = \mathcal{F}(1, n) =n.
$$
A context extension operator $\delta : \mathcal{F} \to \mathcal{F}$ captures
the idea that $\delta A$ in a context $n$ is an element of $A$ in an extended context $n + 1$:
$$
(\delta A) n = A(n + 1)
$$
They note that $\delta$ is equal to the cartesian exponential $[V,-]$ and is therefore right adjoint to $(-)\times V$:
$$
\mathcal{F}(X, \delta A) \cong \mathcal{F}(X, [V, A]) \cong \mathcal{F}(X \times V, A)
$$
In other words, this parallels our construction with $\delta A = \yoneda{1} \dayexponential A =  \interpret{\costmonad{1}{\tau}}$ (assuming $A =\interpret{\tau}$) and
$A \times V = A \dayconvolution \yoneda{1} = \interpret{\potentialtype{1}{\tau}}$; thus
we can see the size of a context as a kind of cost.

\citet{Biering2004logic} defines a Kripke-Joyal semantics for bunched implications
in a presheaf over a small symmetric monoidal category $(\mathcal{C}, \cdot, e)$
(\cite[Theorem 6.3.3]{Biering2004logic}). Similarly to our tensors (modulo their use of contravariance and our use of covariance), the model of separating conjunction $\ast$ is then Day convolution: for some $c \in C$ and propositions $p, q$ then $c \models p \ast q\ \iff\ \int^{d,d' \in C} C(c, d \cdot d') \times (d \models p) \times (d' \models q)$ (by a slight abuse of (our) notation to draw out the similarity), and the model of implication is given by the Day exponent.


Where we have externalized the cost monoid as the base of a graded copresheaf model, \citet{Grodin2024} account for cost as a writer-monad effect $X \mapsto X \times C$ internal to a single (effectful) type theory.



\section{Future Work}
\label{sec:future-work}
One of properties that we expect to hold for the class of adjoint models we have defined in \cref{sec:adjoint-model} is the following completeness theorem.

\begin{theorem}[Completeness]
    Let $\mathcal{D}_1 : \Gamma \vdash e_1 : \tau$ and $\mathcal{D}_2 : \Gamma \vdash e_2 : \tau$.
    There exists a model such that if $\interpret{\mathcal{D}_1} = \interpret{\mathcal{D}_2}$ in that model, then $\mathcal{D}_1 \equiv \mathcal{D}_2$.
\end{theorem}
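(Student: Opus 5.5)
The natural approach is the \emph{term model} (syntactic category) construction. We build a category $\mathcal{T}$ whose objects are contexts, or equivalently types, since a context $\Gamma$ can be identified with the iterated tensor of its types. A morphism $\Gamma \to \tau$ is an $\equiv$-equivalence class of derivations $\mathcal{D} : \Gamma \vdash e : \tau$. Composition is $\textsc{Subst}_1$ from \cref{def:lamor:substitution-derivations}, and the identity is $\textsc{Var}$. We then check that $\mathcal{T}$ carries all the structure of \cref{def:adjoint-model:model-category}, and that in this model $\interpret{\mathcal{D}}$ is (up to the canonical iso on contexts) just the class $[\mathcal{D}]$. From this completeness is immediate: $\interpret{\mathcal{D}_1} = \interpret{\mathcal{D}_2}$ means $[\mathcal{D}_1] = [\mathcal{D}_2]$, that is, $\mathcal{D}_1 \equiv \mathcal{D}_2$.

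The steps, in order, are as follows.

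\textbf{Step 1: category laws.} We show that $\textsc{Subst}_1$ respects $\equiv$ and that it is associative and unital up to $\equiv$. This follows by induction on derivations, using the congruence rules and the $\beta$/$\eta$ rules of \cref{sec:lambda-amor-core-complete}.

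\textbf{Step 2: logical structure.} We equip $\mathcal{T}$ with its logical structure:
\begin{itemize}
\item products by $\with$ and $\top$, using the $\beta\eta$ laws for $\with$, with terminality of $\top$ from $\textsc{Unit-N}$ and its $\eta$ law;
\item coproducts by $\oplus$ and $\tyvoid$;
\item the tensor $\otimes$ with unit $\tyunitp$, together with the associator, unitors and braiding built from $\textsc{Case-Tensor}$, $\textsc{Ex}$ and $\textsc{Wk}$;
\item the closed structure $\phi_{\to}$ from $\textsc{Abs}$/$\textsc{App}$ and the $\beta\eta$ laws for $\multimap$.
\end{itemize}

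\textbf{Step 3: cost and potential.} We interpret the graded structure as follows:
\begin{itemize}
\item $M(p,\tau) = \costmonad{p}{\tau}$, with $\eta$ given by $\textsc{Ret}$, $\mu$ by $\textsc{Bind}$ with $\textsc{Var}$, strength by $\textsc{CLet}$ under a tensor pattern, and $\uparrow_p$ by $\textsc{Tick}$;
\item the graded comonad by $\textsc{Run}$ and $\textsc{Split}$;
\item $P(p,\tau) = \potentialtype{p}{\tau}$, with action and strength given by $\textsc{PLet}$;
\item the adjunction with unit $\textsc{Store}(\textsc{Var})$ and counit $\textsc{Pay}(\textsc{Var})$.
\end{itemize}
Each required diagram is then discharged by the matching equational rule:
\begin{itemize}
\item the monad laws by $\equiv$\textsc{-Monad-Unit-L}, $\equiv$\textsc{-Monad-Unit-R} and $\equiv$\textsc{-Monad-Assoc};
\item the comonad laws by $\equiv$-\textsc{Comonad}$_1$ through $\equiv$-\textsc{Comonad}$_3$;
\item the isomorphisms by $\equiv$\textsc{-Run-Ret}, $\equiv$\textsc{-Ret-Run}, $\equiv$\textsc{-Split-Bind} and $\equiv$\textsc{-Bind-Split};
\item functoriality of $P$ by $\equiv$-\textsc{PLet}$_{1,2}$;
\item the triangle identities by $\equiv$\textsc{-Adjunction}$_{1,2}$;
\item the tick laws by $\equiv$\textsc{tick}$_{1,2}$.
\end{itemize}
The grading functoriality, including the contravariance of $P$, comes from the $\textsc{Sub}$ rule with \textsc{S-Mon} and \textsc{S-Pot}. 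Naturality of the various transformations needs the commuting-conversion laws, for example that $\textsc{Sub}$ and $\textsc{CLet}$ commute with $\textsc{Store}$ and $\textsc{Pay}$. These are assumed to be among the omitted standard rules, and if they are missing they must be added.

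\textbf{Step 4: the interpretation is the identity.} By induction on $\mathcal{D}$, following the case analysis of \cref{thm:adjoint-model:fundamental}, we show that $\interpret{\mathcal{D}}$ in $\mathcal{T}$ equals $[\mathcal{D}]$ precomposed with the canonical iso between $\interpret{\Gamma}$ and $\Gamma$. In each case the categorical combinator unfolds, via Steps 2 and 3, to a derivation $\equiv$-equal to the original rule applied to the subderivations. For example, in the \textsc{Bind} case the composite of $\mathit{st}$, $M(p_1,\interpret{\mathcal{D}_2})$ and $\mu$ is $\equiv \textsc{Bind}(\mathcal{D}_1,\mathcal{D}_2)$ by associativity and the unit laws.

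I expect the main obstacle to be the strictness and thinness demanded of $\costcategory$, together with the fact that the paper states the cost equations only for the specific forms given. Two points need care. First, $\textsc{Split}$ is typed at $p_1 + p_2$ but must be natural in both grades along $\leq$. Second, the strength of $P$ must be compatible with the adjunction, which is a condition the equations may not state. I would first try to derive these coherences from the $\textsc{Sub}$ congruence and the $\eta$ laws. If that fails, the honest fix is to enlarge the equational theory by exactly these missing commutation laws. Soundness is unaffected, since every instance model satisfies them.
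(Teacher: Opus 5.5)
The paper gives no proof of this statement. It appears in the Future Work section as a property the authors \emph{expect} to hold (``apart from this missing theorem''). Your term-model plan is the standard route, but it breaks at Step~2 for the equational theory the paper actually has, and the break cannot be repaired inside that theory.

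The model definition requires three things that matter here:
\begin{itemize}
\item $\interpret{\tyunitp}=\interpret{\tyunitn}=1$ must be terminal \emph{and} the unit of $\otimes$ (this is how \textsc{Wk} is interpreted);
\item $\oplus$ and $\tyvoid$ must denote a genuine coproduct and initial object;
\item the associator and unitors of $\otimes$ must be invertible.
\end{itemize}
In a syntactic category each of these needs laws that the paper does not have. Appendix~\ref{sec:lambda-amor-core-complete} contains $\eta$-rules only for the negative types $\with$, $\tyunitn$, $\multimap$. It has no $\eta$-rules or commuting conversions for $\otimes$, $\tyunitp$, $\oplus$, $\tyvoid$, and no laws for \textsc{Wk} and \textsc{Ex} themselves. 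Without $\otimes$-$\eta$ you cannot even identify a context with the tensor of its types, so the canonical iso of Step~4 does not exist.

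Worse, the statement as written is false. Take $\mathcal{D}_1=\textsc{Var} : x:\tyunitp \vdash x:\tyunitp$ and $\mathcal{D}_2=\textsc{Wk}(\textsc{Unit-P}) : x:\tyunitp \vdash \tmunit:\tyunitp$. In every adjoint model $\interpret{\mathcal{D}_1}=\interpret{\mathcal{D}_2}$, since both are maps into the terminal object.

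Yet $\mathcal{D}_1\not\equiv\mathcal{D}_2$. Interpret everything as in the set model of Section~\ref{sec:set-model}, with three changes:
\begin{itemize}
\item $\interpret{\tyunitp}=\{a,b\}$;
\item \textsc{Unit-P} and \textsc{Tick} denote the constant $a$;
\item $\textsc{Case-Unit-P}(\mathcal{E}_1,\mathcal{E}_2)$ denotes $\interpret{\mathcal{E}_2}$ precomposed with the projection that discards $\interpret{\Gamma_1}$.
\end{itemize}
This interpretation is compositional and interprets $\textsc{Subst}_1$ as composition. It validates every equation of the theory; the only laws that mention $\tyunitp$ specifically are $\equiv$-$\beta$-\textsc{Unit-P}, the \textsc{Sub}-permutation rule for \textsc{Case-Unit-P}, and $\equiv$\textsc{tick}$_{1,2}$. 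Yet it sends $\mathcal{D}_1$ to the identity on $\{a,b\}$ and $\mathcal{D}_2$ to the constant $a$. So no model has the required property for this pair, and in particular your term model is not an instance of the model definition.

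Your fallback of enlarging the theory is therefore essential, not optional. It must also go well beyond the cost-related conversions you anticipate (\textsc{Sub} and \textsc{CLet} past \textsc{Store}/\textsc{Pay}, naturality of \textsc{Split}). You would need:
\begin{itemize}
\item $\eta$-laws with commuting conversions for $\otimes$, $\tyunitp$, $\oplus$, $\tyvoid$;
\item an affine law making $\tyunitp$ terminal, namely that every $\mathcal{D}:\Gamma\vdash e:\tyunitp$ is $\equiv$ to a weakened \textsc{Unit-P};
\item naturality and coherence laws for \textsc{Wk} and \textsc{Ex};
\item the strength coherences for $M$ and $P$.
\end{itemize}
By comparison, the strictness and \textsc{Split}-naturality issues you flag as the main obstacle are minor. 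With such an enlarged theory your Steps~1--4 are the right plan. What they would then prove, though, is completeness for a strictly stronger equational theory than the paper's, not the statement as formulated.
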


Apart from this missing theorem we see the following ways to extend our work.
We want to extend the categorical semantics to the complete theory of $\lambda$-amor introduced in \citet{Rajani2021lamor}, in particular, this means that we also have to cover recursion and exponentials.
We also plan to extend the Lean formalization to cover the model introduced in \cref{sec:model} and to fully formalize \cref{thm:adjoint-model:fundamental}.

\section{Conclusion}
\label{sec:conclusion}
It is well-established today that type systems can be used to soundly reason about the amortized cost of algorithms.
What was lacking, however, were categorical (and therefore compositional) models that can explain the meaning of the graded cost (co-)monad and its interaction with the potential type --- interactions which are essential for understanding how amortized cost analysis works.
Our paper provides this missing piece of the puzzle: We showed how to model the relationship between cost and potential using a family of adjunctions, and presented three instances of the general scheme.
Furthermore, constructing the model also helped us to refine the syntactic presentation of the type theory, since it pushed us towards replacing the \enquote{release} term by three more primitive constructs \enquote{pay}, \enquote{plet} and \enquote{split}.
We are confident that this is not the end of the story, and that our insights will help both in constructing more expressive type theories for reasoning about amortized cost, and in combining the categorical model presented here with categorical models for other effects and type system features.

\begin{acks}
This research was conducted as part of the project \enquote{Graded Modal Types for Quantitative Analysis of Higher-Order Probabilistic Programs} (\grantnum{ARIA}{MSAI-PR01-P11}) that was funded by the \grantsponsor{ARIA}{ARIA}{https://aria.org.uk/} programme on \enquote{Safeguarded AI}.
Dominic Orchard also received support from Schmidt Sciences, LLC.
\end{acks}

\section*{Data-Availability Statement}
\label{sec:data-availability}

This paper is accompanied by a formalization in the Lean proof assistant.
Theorems in the paper that have been formalized are marked with \lean.
The formalization will be published under an open source licence on Zenodo and will be submitted to artifact evaluation.

\bibliography{bibliography/bibliography.bib}

\newpage

\appendix

\section{Category Theory}
\label{sec:category-theory}
This appendix introduces the notation that we use for various categorical constructions in the paper.
The most important concept is that of a category.

\begin{definition}[Category]
    \label{def:appendix-categories:category}
    A category $\mathcal{C}$ consists of:
    \begin{enumerate}
        \item A collection of objects $\catobjects{\mathcal{C}}$.
        \item For every two objects $A,B \in \catobjects{\mathcal{C}}$ a collection of morphisms $\homset{\mathcal{C}}{A}{B}$.
        \item For every object $A \in \catobjects{\mathcal{C}}$ an identity morphism $\idmorph{A} \in \homset{\mathcal{C}}{A}{A}$.
        \item For objects $A,B,C \in \catobjects{\mathcal{C}}$ and morphisms $f \in \homset{\mathcal{C}}{A}{B}$ and $g \in \homset{\mathcal{C}}{B}{C}$
              the composition $g \circ f \in \homset{\mathcal{C}}{A}{C}$.
    \end{enumerate}
    The composition must be associative, and the identity morphism its unit.
\end{definition}
We need the standard categorical constructions of products and sums, as well as initial and terminal objects, in order to model the corresponding types in the type theory.

\begin{definition}[Categorical Product]
    \label{def:appendix-categories:product}
    Given two objects $A, B \in \catobjects{\mathcal{C}}$, then their categorical product $A \times B \in \catobjects{\mathcal{C}}$ comes with two morphims $\pi_1 : A \times B \to A$ and $\pi_2 : A \times B \to B$.
    For any other object $C$ in the category, if there are morphims $f: C \to A$ and $g : C \to B$, then there is a unique morphism $f \triangle g : C \to A \otimes B$ that makes the following diagram commute.

    \begin{center}
        \begin{tikzcd}
            & C \ar[dl, "f", swap] \ar[dr, "g"] \ar[d, dotted, "f \triangle g"]& \\
            A & A \times B \ar[l, "\pi_1"] \ar[r, "\pi_2", swap] & B \\
        \end{tikzcd}
    \end{center}
    We also need the parallel composition of arrows $f \times g$ which corresponds to the following commuting diagram:
    \begin{center}
        \begin{tikzcd}
            A \ar[d, "f"] & A \times B \ar[l,"\pi_1", swap] \ar[r, "\pi_2"] \ar[d, "f \times g", dotted] & B \ar[d,"g"] \\
            A' & A' \times B' \ar[l,"\pi_1", swap] \ar[r, "\pi_2"] & B' 
        \end{tikzcd}
    \end{center}
    which is defined as $f \times g = (f \circ \pi_1) \triangle (g \circ \pi_2)$.
\end{definition}

\begin{definition}[Categorical Sum]
    \label{def:appendix-categories:sum}
    Given two objects $A, B \in \catobjects{\mathcal{C}}$, then their categorical sum $A + B \in \catobjects{\mathcal{C}}$ comes with two morphims $\iota_1 : A \to A + B$ and $\iota_2 : B \to A + B$.
    For any other object $C$ in the category, if there are morphims $f: A \to C$ and $g : B \to C$, then there is a unique morphism $f \triangle g : C \to A \otimes B$ that makes the following diagram commute.

    \begin{center}
        \begin{tikzcd}
            & C  & \\
            A \ar[ur, "f"] \ar[r, "\iota_1", swap] & A + B \ar[u, dotted, "f \triangledown g", swap]& B \ar[ul, "g", swap] \ar[l, "\iota_2"]\\
        \end{tikzcd}
    \end{center}
    We also need the parallel composition of arrows $f + g$ which corresponds to the following commuting diagram:
    \begin{center}
        \begin{tikzcd}
          A \ar[d,"f"] \ar[r, "\iota_1"] & A + B \ar[d,"f + g", dotted] & B \ar[d,"g"] \ar[l, "\iota_2", swap] \\
          A' \ar[r, "\iota_1"] & A' + B' & B' \ar[l, "\iota_2", swap]
        \end{tikzcd}
    \end{center}
    which is defined as $f + g = (\iota_1 \circ f) \triangledown (\iota_2 \circ g)$.
\end{definition}

\begin{definition}[Initial and Terminal Element]
    Initial objects $0 \in \catobjects{\mathcal{C}}$ and terminal objects $1 \in \catobjects{\mathcal{C}}$ are characterized by existence of unique morphisms to, respectively from, any other object $C \in \catobjects{\mathcal{C}}$.
    \begin{center}
        \begin{tikzcd}
          0 \ar[r,"!", dotted] & C \ar[r, "!", dotted] & 1\\
        \end{tikzcd}
    \end{center}
\end{definition}

\begin{definition}[Monoidal Category]
    A monoidal category is a 6-tuple $(M, \otimes : M \times M \to M, I \in \catobjects{M}, \alpha, \lambda, \rho)$, $\otimes$ is a bifunctor, $I$ is its unit, and $\alpha$, $\lambda$ and $\rho$ are natural isomorphisms
    which witness the associativity, left and right unit laws.
    \[
    \begin{array}{rcrclcl}
        \alpha_{A,B,C} &:& (A \otimes B) \otimes C &\simeq& A \otimes (B \otimes C) &:& \alpha^{-1}_{A,B,C} \\
        \lambda_A &:& I \otimes A &\simeq& A &:& \lambda^{-1}_A \\
        \rho_A &:& A \otimes I &\simeq& A &:& \rho^{-1}_A
    \end{array}
    \]
    These natural transformations must fulfil coherence conditions, i.e. the following diagrams must commute.

    \begin{center}
      \begin{tikzcd}
        A \otimes (B \otimes (C \otimes D)) \ar[r, "\alpha_{A,B,C \otimes D}"] \ar[d, "\idmorph{A}\otimes (\alpha_{B,C,D})"] & (A \otimes B) \otimes (C \otimes D) \ar[r, "\alpha_{A \otimes B,C,D}"] & ((A \otimes B) \otimes C) \otimes D\\
        A \otimes ((B \otimes C) \otimes D) \ar[rr, "\alpha_{A, B\otimes C,D}"] & & (A \otimes (B \otimes C)) \otimes D \ar[u, "\alpha_{A,B,C} \otimes \idmorph{D}"]
      \end{tikzcd}
      \begin{tikzcd}
        A \otimes (I \otimes B) \ar[d, "\idmorph{A} \otimes \lambda_{B}"] \ar[r, "\alpha_{A,I,B}"] & (A \otimes I) \otimes B \ar[dl, "\rho_A \otimes \idmorph{B}"] \\
        A \otimes B
      \end{tikzcd}
    \end{center}
\end{definition}
\begin{definition}[Symmetric Monoidal Category]
    A monoidal category is called symmetric monoidal if there is a natural isomorphism $\sigma$ which witnesses the symmetry of the tensor $\otimes$.
    \[
    \begin{array}{rcrclcl}
        \sigma_{A,B} &:& A \otimes B &\simeq& B \otimes A &:& \sigma_{B,A}
    \end{array}
    \]
    For symmetric monoidal categories the following diagrams must commute:
    \begin{center}
        
        \begin{tikzcd}
          (A \otimes B) \otimes C \ar[r, "\sigma_{A,B} \otimes \idmorph{C}"] \ar[d, "\alpha_{A,B,C}"] & (B \otimes A) \otimes C \ar[d, "\alpha_{B,A,C}"] \\
          A \otimes (B \otimes C) \ar[d, "\sigma_{A, B\otimes C}"] & B \otimes (A \otimes C) \ar[d, "\idmorph{B} \otimes \sigma_{A,C}"] \\
          (B \otimes C) \otimes A \ar[r, "\alpha_{B,C,A}"] & B \otimes (C \otimes A) \\
        \end{tikzcd}

        \begin{tikzcd}
            A \otimes I \ar[r, "\sigma_{A,I}"] \ar[d, "\rho_{A}"] & I \otimes A \ar[ld, "\lambda_{A}"] \\
            A
        \end{tikzcd}
        \begin{tikzcd}
            A \otimes B \ar[d, "\idmorph{A \otimes B}"] \ar[r,"\sigma_{A,B}"] & B \otimes A \ar[dl, "\sigma_{B,A}"]\\
            A \otimes B
        \end{tikzcd}
    \end{center}
\end{definition}

\begin{remark}
    Categorical product and sum (cf. \cref{def:appendix-categories:product,def:appendix-categories:sum}) form symmetric monoidal categories.
\end{remark}

\begin{definition}[Strong Monoidal Functor]
    A strong monoidal functor $F : (\mathcal{C}, \otimes_\mathcal{C}, I_\mathcal{C}) \to (\mathcal{D}, \otimes_\mathcal{D}, I_\mathcal{D})$ is a functor $F : \mathcal{C} \to \mathcal{D}$
    together with:
    \[
    \begin{array}{rcrclcl}
        \epsilon &:& F(I_\mathcal{C}) &\simeq& I_\mathcal{D}  &:& \epsilon^{-1} \\
        \mu_{A,B} &:&  F(A) \otimes_\mathcal{D} F(B) &\simeq& F(A \otimes_\mathcal{C} B) &:& \mu^{-1}_{A,B} \\
    \end{array}
    \]
\end{definition}

\begin{lemma}[Existence of Ends]
    If $X$ is small-complete (all small diagrams in $X$ have limits in $X$) and $C$ is small, then every functor $S : C^\mathrm{op}\times C \to X$ has an end in $X$.
\end{lemma}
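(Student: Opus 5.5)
The end of $S$ is constructed as an ordinary limit in $X$: I would exhibit a small diagram whose limit cones are exactly the wedges for $S$. The standard choice is an equalizer of two maps between products:
\begin{equation*}
  \int_{c} S(c,c) \;\coloneq\; \mathrm{Eq}\Bigl(\prod_{c \in \catobjects{C}} S(c,c) \rightrightarrows \prod_{f : c \to c'} S(c,c')\Bigr).
\end{equation*}
The second product ranges over all morphisms $f : c \to c'$ of $C$. Since $C$ is small, both index collections are sets, so both products exist by small-completeness, and so does the equalizer. The two parallel arrows are defined componentwise at $f : c \to c'$. One is $S(\idmorph{c}, f) \circ \pi_c$ and the other is $S(f, \idmorph{c'}) \circ \pi_{c'}$. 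Both are maps from $\prod_c S(c,c)$ into $S(c,c')$, and they assemble into the two maps by the universal property of the product. Call the equalizer $E$, with inclusion $e$. The projections $\omega_c \coloneq \pi_c \circ e : E \to S(c,c)$ are then the candidate wedge.

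The first step is that $\omega$ is a wedge, meaning $S(\idmorph{c}, f) \circ \omega_c = S(f, \idmorph{c'}) \circ \omega_{c'}$ for every $f : c \to c'$. This is exactly the $f$-component of the equalizing equation, obtained by composing with $\pi_f$.

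The second step is universality. Take any wedge $(W, \beta_c : W \to S(c,c))$. The family $\beta$ induces $\langle \beta_c \rangle_c : W \to \prod_c S(c,c)$. The wedge condition says precisely that both parallel arrows agree on this map after each projection $\pi_f$. By uniqueness of maps into a product, the two composites are equal. The equalizer property then gives a unique $u : W \to E$ with $e \circ u = \langle \beta_c \rangle_c$, and hence $\omega_c \circ u = \beta_c$. For uniqueness, suppose $v$ also satisfies $\omega_c \circ v = \beta_c$ for all $c$. Then $e \circ v = \langle \beta_c \rangle_c$ by the uniqueness property of the product, and $e$ is monic, so $v = u$.

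The main obstacle is being careful about size. Small-completeness must supply products indexed both by $\catobjects{C}$ and by the set of all arrows of $C$, which is where smallness of $C$ is used.

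An alternative route avoids the explicit construction. One presents the end as the limit of $S$ composed with the projection from the twisted-arrow category of $C$, which is small whenever $C$ is. Existence then follows directly from small-completeness. I would mention this as a remark but give the equalizer proof, since it is self-contained.
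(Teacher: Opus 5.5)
Your proof is correct. The two parallel maps have the right variance, the $f$-component of the equalizing equation is exactly the wedge condition, and monicity of $e$ gives uniqueness of the mediating map.

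The paper itself gives no argument, only the citation to Mac Lane. There the corollary is obtained by recognising the end of $S$ as an ordinary limit of a diagram over a small category built from $C$. That category is Mac Lane's subdivision category, whose cones correspond exactly to wedges; this is essentially the twisted-arrow alternative you mention at the end. Existence then follows at once from small-completeness. Your main argument instead computes that limit by hand as an equalizer of two products.

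The limit-over-a-small-category route is shorter and lets ends inherit general facts about limits for free, e.g.\ preservation by continuous functors. Your explicit route has different advantages:
\begin{itemize}
\item It is self-contained and needs only small products and equalizers.
\item It yields the concrete description of an end in $\setcategory$ as a set of compatible families, which is how one actually unwinds ends such as the paper's Day internal hom.
\item It generalises directly to enriched ends, as in the paper's $\modelcategory$-enriched Day constructions. One replaces $\prod_{f : c \to c'} S(c,c')$ with the product of powers $\prod_{c,c'} [C(c,c'), S(c,c')]$. The limit-over-an-ordinary-category route does not generalise directly in this way.
\end{itemize}
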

\begin{proof}
    See \citet[Corollary 2, p.~224]{Maclane1971}.
\end{proof}

\begin{definition}[Yoneda Embedding]
    The yoneda embedding $\yoneda{-} : \mathbb{C} \to \mathbb{X}$ is defined as:
    \begin{equation*}
        \yoneda{c} \coloneq \lambda p. \mathbf{Hom}_\mathbb{C}(c,p)
    \end{equation*}
\end{definition}

\begin{definition}[Day Convolution]
    Let $(\mathcal{C}, \otimes, I)$ be a monoidal category, and $[\mathcal{C},\setcategory]$ a category of Set-valued functors.
    Then the monoidal structure on $\mathcal{C}$ induces a monoidal structure on $[\mathcal{C},\setcategory]$, the day convolution.

    Let $F, G \in \catobjects{[\mathcal{C}, \setcategory]}$, then day convolution $\dayconvolution : [\mathcal{C}, \setcategory] \times [\mathcal{C}, \setcategory] \to [\mathcal{C}, \setcategory]$ is defined as follows:
    \begin{align*}
        F \dayconvolution G &\coloneq \int^{c_1, c_2} \yoneda{c_1 \otimes c_2} \times F(c_1) \times G(c_2) \\
        \dayidentity &\coloneq \yoneda{I}
    \end{align*}
    Then $([\mathcal{C}, \setcategory], \dayconvolution, \dayidentity)$ is a monoidal category.
\end{definition}

\begin{lemma}
    The Yoneda embedding $\yoneda{-} : (\mathcal{C}, \otimes, I) \to ([\mathcal{C}, \setcategory], \dayconvolution, \yoneda{I})$ is a strong monoidal functor.
\end{lemma}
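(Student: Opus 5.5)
We have to show that $\yoneda{-}$ is strong monoidal from $(\mathcal{C}, \otimes, I)$ to $([\mathcal{C}, \setcategory], \dayconvolution, \yoneda{I})$. That means giving the unit comparison $\epsilon$ and the multiplication comparison $\mu_{A,B}$ as natural isomorphisms, and then checking the coherence conditions. (On variance: with $\yoneda{c} = \mathbf{Hom}(c,-)$ the embedding is really contravariant, $\mathcal{C}^\op \to [\mathcal{C},\setcategory]$. Since a monoidal structure on $\mathcal{C}$ induces one on $\mathcal{C}^\op$, we treat $\yoneda{-}$ as a functor out of $\mathcal{C}^\op$ throughout, exactly as in the paper's definition of the coYoneda embedding.)

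The unit comparison is immediate. By the definition of Day convolution, $\dayidentity = \yoneda{I}$, so $\epsilon$ and $\epsilon^{-1}$ are both the identity.

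The substantive step is the isomorphism $\mu_{A,B} : \yoneda{A} \dayconvolution \yoneda{B} \simeq \yoneda{A \otimes B}$, i.e.\ the statement recorded earlier as Equation~\eqref{eq:yoneda-comb}. Evaluating at an object $p$,
\begin{align*}
(\yoneda{A} \dayconvolution \yoneda{B})(p) = \int^{c_1, c_2} \mathbf{Hom}(c_1 \otimes c_2, p) \times \mathbf{Hom}(A, c_1) \times \mathbf{Hom}(B, c_2).
\end{align*}
We remove the two coend variables one at a time using the co-Yoneda (density) lemma. For a covariant functor $F$ it reads $\int^{c} \mathbf{Hom}(A, c) \times F(c) \cong F(A)$, naturally in $A$ and in $F$.
\begin{enumerate}
  \item Use Fubini for coends to separate the variables.
  \item Apply co-Yoneda in $c_1$, taking $F = \mathbf{Hom}(- \otimes c_2, p)$. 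This is covariant in $c_1$ because the hom is contravariant in its first argument and $\yoneda{-}$ lives over $\mathcal{C}^\op$.
  \item Apply co-Yoneda in $c_2$ in the same way.
\end{enumerate}
The result is $\mathbf{Hom}(A \otimes B, p) = \yoneda{A \otimes B}(p)$. Explicitly, $\mu$ sends the class of $(h, f, g)$ to $h \circ (f \otimes g)$, and the inverse sends $k$ to the class of $(k, \idmorph{A}, \idmorph{B})$. Every step is natural in $p$, so $\mu$ is an isomorphism of copresheaves, and it is natural in $A$ and $B$ because the co-Yoneda isomorphism is natural in its parameters.

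The coherence conditions (associativity and the two unit laws) are where I expect the work to lie, because the Day associator and unitors are themselves built from Fubini and co-Yoneda manipulations that are easy to lose track of. Rather than chase coends directly, I would compare both sides on representatives. Both legs of the associativity hexagon send a class $(h, (h', f, g), k)$ in $(\yoneda{A} \dayconvolution \yoneda{B}) \dayconvolution \yoneda{C}$ to an element of $\mathbf{Hom}((A \otimes B) \otimes C, p)$ or of $\mathbf{Hom}(A \otimes (B \otimes C), p)$. After transport along $\yoneda{\alpha}$, both reduce to $h \circ (h' \otimes \idmorph{}) \circ ((f \otimes g) \otimes k)$ precomposed with $\alpha$. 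Equality then follows from naturality of $\alpha$ in $\mathcal{C}$. Because the Day associator is defined as the unique map acting as $\alpha$ on representatives, checking equality on representatives of the coend suffices.

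The unit triangles follow the same way: a class $(h, u, f)$ with $u : I \to c_1$ is sent to $h \circ (u \otimes f)$, and this agrees with the unitor image by naturality of $\lambda$ (respectively $\rho$).
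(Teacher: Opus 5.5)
Your proof is correct. The paper gives no argument for this lemma. It is recorded in the appendix as a standard fact, and the key isomorphism $\yoneda{n} \dayconvolution \yoneda{m} \cong \yoneda{n + m}$ of Equation~\eqref{eq:yoneda-comb} is only asserted in the main text as ``an implication of'' the definition of the coYoneda embedding. So you are not taking a different route; you are supplying the standard argument the paper omits. That argument has three parts:
\begin{itemize}
\item Fubini plus two applications of the density formula, which produce $\mu$.
\item Explicit formulas $[h,f,g] \mapsto h \circ (f \otimes g)$ and $k \mapsto [k, \idmorph{A}, \idmorph{B}]$.
\item A check of the associativity and unit coherences on representatives. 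This is legitimate because in $\setcategory$ every element of the iterated coend is a class of such a representative.
\end{itemize}
The coherence check goes beyond the paper, whose definition of strong monoidal functor lists only the isomorphisms $\epsilon$ and $\mu$.

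Your variance correction is needed. With $\yoneda{c} = \mathbf{Hom}(c,-)$, the appendix's typing $(\mathcal{C},\otimes,I) \to [\mathcal{C},\setcategory]$ is not literally right. Reading it over $\mathcal{C}^{\op}$ matches the main text's $\yoneda{-} : \costcategory^{\op} \to [\costcategory, \modelcategory]$.

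Two wording points should be fixed, though neither is a gap.
\begin{itemize}
\item The density formula $\int^{c} \mathbf{Hom}(A,c) \times F(c) \cong F(A)$ is a well-typed coend only when $F$ is contravariant on $\mathcal{C}$, i.e.\ covariant on $\mathcal{C}^{\op}$. That is exactly what $\mathbf{Hom}(- \otimes c_2, p)$ is, so your application is right, but you should say ``covariant on $\mathcal{C}^{\op}$'' rather than ``covariant''.
\item The map you precompose with at the end of the hexagon must go $A \otimes (B \otimes C) \to (A \otimes B) \otimes C$. For the paper's orientation of $\alpha$ this is $\alpha^{-1}$, which is also the associator of $\mathcal{C}^{\op}$. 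With that convention the right leg matches the left after moving $f \otimes (g \otimes k)$ past it by naturality, as you say. The same remark applies to $\lambda$ in the unit triangle.
\end{itemize}
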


\begin{definition}[Day Exponential]
    Similar to the familiar adjunction $- \times X \dashv -^{X}$ between categorical products and exponentials, there is an adjunction between day convolution and day exponentials. Day exponentials are right adjoint to Day convolution:
    \begin{equation*}
        - \dayconvolution X \dashv X \dayexponential -
    \end{equation*}
    This construction can be equivalently defined as follows:
    \begin{equation*}
        F \dayexponential G \coloneq  \lambda p. \int_{p_i} \mathbf{Hom}_{\mathrm{Set}}(F(p_i), G(p_i + p))
    \end{equation*}
\end{definition}

\begin{lemma}[Properties of Day Convolution and Day Exponential]
    \[
    \begin{array}{rcrclcl}
        \mathrm{uncurry} &:& X \dayexponential (Y \dayexponential Z) &\simeq& (X \dayconvolution Y) \dayexponential Z  &:& \mathrm{curry} \\
    \end{array}
    \]
    \label{lem:model:day-properties}
\end{lemma}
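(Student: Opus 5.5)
The isomorphism will come from the tensor--hom adjunction for Day convolution, $- \dayconvolution Y \dashv Y \dayexponential -$ (stated in the appendix), combined with associativity of $\dayconvolution$ and the Yoneda lemma. Reading $X \dayexponential -$ as right adjoint to $- \dayconvolution X$, I will show that for every test copresheaf $W$ there is an isomorphism of hom-sets
\begin{align*}
\homset{\copresheafmodel}{W}{X \dayexponential (Y \dayexponential Z)}
&\cong \homset{\copresheafmodel}{W \dayconvolution X}{Y \dayexponential Z} \\
&\cong \homset{\copresheafmodel}{(W \dayconvolution X) \dayconvolution Y}{Z} \\
&\cong \homset{\copresheafmodel}{W \dayconvolution (X \dayconvolution Y)}{Z} \\
&\cong \homset{\copresheafmodel}{W}{(X \dayconvolution Y) \dayexponential Z},
\end{align*}
and that it is natural in $W$. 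The first, second and fourth steps are applications of the adjunction. The third uses the associator $\alpha$ of the Day monoidal structure, which I take as given since the paper asserts that $([\costcategory,\setcategory],\dayconvolution,\dayidentity)$ is monoidal.

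Each step is natural in $W$. For the adjunction steps this holds because the hom-set bijection of any adjunction is natural in the left-hand variable. For the associator step it holds because $\alpha$ is natural. The Yoneda lemma then yields an isomorphism $X \dayexponential (Y \dayexponential Z) \cong (X \dayconvolution Y) \dayexponential Z$. I will call its components $\mathrm{uncurry}$ and $\mathrm{curry}$. Concretely, $\mathrm{uncurry}$ is obtained by taking $W = X \dayexponential (Y \dayexponential Z)$, starting from the identity, and chasing it through the chain. The result is the transpose of two applications of the Day evaluation map, precomposed with $\alpha^{-1}$.

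As a sanity check I will also compute pointwise, using the end formula $(F \dayexponential G)(p) = \int_{q} \mathbf{Hom}(F(q), G(q+p))$. In the coend defining $X \dayconvolution Y$, the factor $\costcategory(p_1+p_2, q)$ is eliminated by co-Yoneda, after which the hom turns a coend into an end. This gives
\[
((X \dayconvolution Y) \dayexponential Z)(p) \cong \int_{p_1,p_2} \mathbf{Hom}(X(p_1) \times Y(p_2), Z(p_1+p_2+p)).
\]
Currying in $\setcategory$, together with commutativity and associativity of the strict symmetric $+$ used to reorder the index, identifies this with the unfolding of $(X \dayexponential (Y \dayexponential Z))(p)$.

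The main obstacle is bookkeeping of variance and ordering. The paper states the adjunction as $- \dayconvolution X \dashv X \dayexponential -$, so the exponent is the right-hand factor of the tensor. The chain therefore yields $W \dayconvolution (X \dayconvolution Y)$ only after reassociating, and one has to check that the order of $X$ and $Y$ comes out as claimed. If it comes out swapped, I will insert the symmetry $\sigma_{X,Y}$ of Day convolution, which is available because $\costcategory$ is symmetric. In the pointwise check the same issue appears when verifying that the end and coend manipulations are natural in $p$, which needs the strictness of $\costcategory$.
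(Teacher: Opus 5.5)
The paper states this lemma in the appendix without proof, treating it as a standard property of Day convolution alongside the adjunction $- \dayconvolution X \dashv X \dayexponential -$, which is also unproved. There is therefore no route of the paper's to compare against.

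Your representability argument is the standard proof, and it is complete. Granting that adjunction and the associator of $\dayconvolution$, your four bijections natural in $W$, followed by the Yoneda lemma, give the isomorphism. Your description of $\mathrm{uncurry}$ as the transpose of $\mathrm{ev}\circ(\mathrm{ev}\dayconvolution\idmorph{Y})\circ\alpha^{-1}$ is also correct.

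Two small corrections:
\begin{itemize}
\item The worry in your last paragraph is unnecessary. Your chain already ends at $W \dayconvolution (X \dayconvolution Y)$ with $X$ and $Y$ in the stated order, so no symmetry $\sigma_{X,Y}$ is needed. In fact the abstract argument works in any right-closed monoidal category, symmetric or not.
\item In the pointwise check the order of steps is reversed. You cannot eliminate $\costcategory(p_1+p_2,q)$ inside the coend defining $(X \dayconvolution Y)(q)$. Instead, the hom first turns that coend into an end over $p_1,p_2$. After exchanging the ends, the Yoneda lemma applied to $q \mapsto Z(q+p)$ removes the representable. Naturality in $p$ is automatic; strict commutativity of $+$ is needed only to identify $Z(p_2+p_1+p)$ with $Z(p_1+p_2+p)$.
\end{itemize}
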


\begin{definition}[Exponential graded comonad]
  For a symmetric monoidal category $(\mathbb{C}, \otimes, I)$ and pre-ordered semiring $(\mathcal{R}, +, 0, \ast, 1, \leq)$ represented as a category
  semiringoidal category $\mathcal{R}$, an exponential graded comonad comprises a functor $\Box : \mathcal{R}^\mathsf{op} \to [\mathbb{C}, \mathbb{C}]$ along with natural transformations:
  \[
    \begin{array}{rcl}
      \varepsilon_A &:& \Box_1 A \to A \\
      \delta_{r,s,A} &:& \Box_{r \ast s} A \to \Box_r \Box_s A \\
      m_{r,A,B} &:& \Box_r A \times \Box_r B \to \Box_r (A \otimes B) \\
      m_{r,I} &:& I \rightarrow \Box_r I \\
      w_{A} &:& \Box_0 A \rightarrow I \\
      c_{r,s,A} &:& \Box_{r + s} A \rightarrow \Box_r A \otimes \Box_s A
    \end{array}
  \]
  satisfying a number of coherence conditions that essentially amount to ensuring the semiring-like properties of the
  operations above (see~\citet{Gaboardi2016combining}).
  Note that when $r \leq s$ we have a morphism $f : r \to s \in \mathcal{R}$ which lifts to a natural transformation
  $(\Box f)_A : \Box_s A \rightarrow \Box_r A$
\end{definition}

\section{Complete Description of the Equational Theory}
\label{sec:lambda-amor-core-complete}
This appendix contains the complete equational theory introduced in \cref{sec:lambda-amor-core}.

\subsection{Equivalence Relation}

Since it is an equivalence relation, we have to include the following three rules:

\begin{minipage}{0.25\textwidth}
    \eqrulerefl
\end{minipage}
\begin{minipage}{0.25\textwidth}
    \eqrulesymm
\end{minipage}
\begin{minipage}{0.4\textwidth}
    \eqruletrans
\end{minipage}

\subsection{Congruence Rules}

The equational theory should be a congruence relation, which means that there should be one congruence rule for each term-forming operation.

\begin{minipage}{0.45\textwidth}
    \eqruleconginl
\end{minipage}
\begin{minipage}{0.45\textwidth}
     \eqruleconginr
\end{minipage}

\begin{minipage}{0.45\textwidth}
    \eqrulecongfst
\end{minipage}
\begin{minipage}{0.45\textwidth}
    \eqrulecongsnd
\end{minipage}

\begin{prooftree}
    \AxiomC{$\mathcal{D}_1 \equiv \mathcal{D}'_1$}
    \AxiomC{$\mathcal{D}_2 \equiv \mathcal{D}'_2$}
    \RightLabel{$\equiv$\textsc{-Cong-Case-UnitP}}
    \BinaryInfC{$\textsc{Case-UnitP}(\mathcal{D}_1, \mathcal{D}_2) \equiv \textsc{Case-UnitP}(\mathcal{D}'_1, \mathcal{D'}_2)$}
\end{prooftree}

\begin{minipage}{0.45\textwidth}
    \begin{prooftree}
        \AxiomC{$\mathcal{D}_1 \equiv \mathcal{D}_2$}
        \RightLabel{$\equiv$\textsc{-Cong-Void}}
        \UnaryInfC{$\textsc{Void}(\mathcal{D}_1) \equiv \textsc{Void}(\mathcal{D}_2)$}
    \end{prooftree}
\end{minipage}
\begin{minipage}{0.45\textwidth}
    \begin{prooftree}
        \AxiomC{$\mathcal{D}_1 \equiv \mathcal{D}_2$}
        \RightLabel{$\equiv$\textsc{-Cong-Abs}}
        \UnaryInfC{$\textsc{Abs}(\mathcal{D}_1) \equiv \textsc{Abs}(\mathcal{D}_2)$}
    \end{prooftree}
\end{minipage}

\begin{prooftree}
    \AxiomC{$\mathcal{D}_1 \equiv \mathcal{D}'_1$}
    \AxiomC{$\mathcal{D}_2 \equiv \mathcal{D}'_2$}
    \RightLabel{$\equiv$\textsc{-Cong-App}}
    \BinaryInfC{$\textsc{Case-App}(\mathcal{D}_1, \mathcal{D}_2) \equiv \textsc{Case-App}(\mathcal{D}'_1, \mathcal{D}'_2)$}
\end{prooftree}

\begin{prooftree}
    \AxiomC{$\mathcal{D}_1 \equiv \mathcal{D}'_1$}
    \AxiomC{$\mathcal{D}_2 \equiv \mathcal{D}'_2$}
    \RightLabel{$\equiv$\textsc{-Cong-Tensor}}
    \BinaryInfC{$\textsc{Tensor}(\mathcal{D}_1, \mathcal{D}_2) \equiv \textsc{Tensor}(\mathcal{D}'_1, \mathcal{D}'_2)$}
\end{prooftree}

\begin{prooftree}
    \AxiomC{$\mathcal{D}_1 \equiv \mathcal{D}'_1$}
    \AxiomC{$\mathcal{D}_2 \equiv \mathcal{D}'_2$}
    \RightLabel{$\equiv$\textsc{-Cong-Case-Tensor}}
    \BinaryInfC{$\textsc{Case-Tensor}(\mathcal{D}_1, \mathcal{D}_2) \equiv \textsc{Case-Tensor}(\mathcal{D}'_1, \mathcal{D}'_2)$}
\end{prooftree}

\begin{prooftree}
    \AxiomC{$\mathcal{D}_1 \equiv \mathcal{D}'_1$}
    \AxiomC{$\mathcal{D}_2 \equiv \mathcal{D}'_2$}
    \AxiomC{$\mathcal{D}_3 \equiv \mathcal{D}'_3$}
    \RightLabel{$\equiv$\textsc{-Cong-Case-Sum}}
    \TrinaryInfC{$\textsc{Case-Sum}(\mathcal{D}_1, \mathcal{D}_2, \mathcal{D}_3) \equiv \textsc{Case-Sum}(\mathcal{D}'_1, \mathcal{D}'_2, \mathcal{D}'_3)$}
\end{prooftree}

\begin{prooftree}
    \AxiomC{$\mathcal{D}_1 \equiv \mathcal{D}'_1$}
    \AxiomC{$\mathcal{D}_2 \equiv \mathcal{D}'_2$}
    \RightLabel{$\equiv$\textsc{-Cong-With}}
    \BinaryInfC{$\textsc{With}(\mathcal{D}_1, \mathcal{D}_2) \equiv \textsc{With}(\mathcal{D}'_1, \mathcal{D}'_2)$}
\end{prooftree}

\begin{minipage}{0.45\textwidth}
    \begin{prooftree}
        \AxiomC{$\mathcal{D}_1 \equiv \mathcal{D}_2$}
        \RightLabel{$\equiv$\textsc{-Cong-Ret}}
        \UnaryInfC{$\textsc{Ret}(\mathcal{D}_1) \equiv \textsc{Ret}(\mathcal{D}_2)$}
    \end{prooftree}
\end{minipage}
\begin{minipage}{0.45\textwidth}
    \begin{prooftree}
        \AxiomC{$\mathcal{D}_1 \equiv \mathcal{D}_2$}
        \RightLabel{$\equiv$\textsc{-Cong-Store}}
        \UnaryInfC{$\textsc{Store}(\mathcal{D}_1) \equiv \textsc{Store}(\mathcal{D}_2)$}
    \end{prooftree}
\end{minipage}

\begin{prooftree}
    \AxiomC{$\mathcal{D}_1 \equiv \mathcal{D}'_1$}
    \AxiomC{$\mathcal{D}_2 \equiv \mathcal{D}'_2$}
    \RightLabel{$\equiv$\textsc{-Cong-Bind}}
    \BinaryInfC{$\textsc{Bind}(\mathcal{D}_1, \mathcal{D}_2) \equiv \textsc{Bind}(\mathcal{D}'_1, \mathcal{D}'_2)$}
\end{prooftree}

\begin{prooftree}
    \AxiomC{$\mathcal{D}_1 \equiv \mathcal{D}'_1$}
    \AxiomC{$\mathcal{D}_2 \equiv \mathcal{D}'_2$}
    \RightLabel{$\equiv$\textsc{-Cong-Release}}
    \BinaryInfC{$\textsc{Release}(\mathcal{D}_1, \mathcal{D}_2) \equiv \textsc{Release}(\mathcal{D}'_1, \mathcal{D}'_2)$}
\end{prooftree}

\subsection{Applications of Subsumption}

We have to permute the subsumption rule through type formers:
\begin{align*}
    \tag*{$\equiv$\textsc{-Sub-Inl}}
    \textsc{Inl}(\textsc{Sub}(\mathcal{D}, \mathcal{X}, \mathcal{S})) &\equiv \textsc{Sub}(\textsc{Inl}(\mathcal{D}), \mathcal{X}, \oplus(\mathcal{S}, \textsc{Refl})) \\
    \tag*{$\equiv$\textsc{-Sub-Inr}}
    \textsc{Inr}(\textsc{Sub}(\mathcal{D}, \mathcal{X}, \mathcal{S})) &\equiv \textsc{Sub}(\textsc{Inr}(\mathcal{D}), \mathcal{X}, \oplus(\textsc{Refl}, \mathcal{S})) \\
    \tag*{$\equiv$\textsc{-Sub-Tensor}}
    \textsc{Tensor}(\textsc{Sub}(\mathcal{D}_1,\mathcal{X}_1,\mathcal{S}_1), \textsc{Sub}(\mathcal{D}_2, \mathcal{X}_2, \mathcal{S}_2)) &\equiv \textsc{Sub}(\textsc{Tensor}(\mathcal{D}_1, \mathcal{D}_2), (\mathcal{X}_1, \mathcal{X}_2), \otimes(\mathcal{S}_1, \mathcal{S}_2)) \\
    \tag*{$\equiv$\textsc{-Sub-With}}
    \textsc{With}(\textsc{Sub}(\mathcal{D}_1, \mathcal{X}, \mathcal{S}_1), \textsc{Sub}(\mathcal{D}_2, \mathcal{X}, \mathcal{S}_2)) &\equiv \textsc{Sub}(\textsc{With}(\mathcal{D}_1,\mathcal{D}_2), \mathcal{X}, \with(\mathcal{S}_1,\mathcal{S}_2)) \\
    \tag*{$\equiv$\textsc{-Sub-Fst}}
    \textsc{Fst}(\textsc{Sub}(\mathcal{D}, \mathcal{X}, \with(\mathcal{S}_1, \mathcal{S}_2))) &\equiv \textsc{Sub}(\textsc{Fst}(\mathcal{D}),\mathcal{X}, \mathcal{S}_1) \\
    \tag*{$\equiv$\textsc{-Sub-Snd}}
    \textsc{Snd}(\textsc{Sub}(\mathcal{D}, \mathcal{X}, \with(\mathcal{S}_1, \mathcal{S}_2))) &\equiv \textsc{Sub}(\textsc{Snd}(\mathcal{D}), \mathcal{X}, \mathcal{S}_2) \\
    \tag*{$\equiv$\textsc{-Case-UnitP}}
    \textsc{Case-UnitP}(\textsc{Sub}(\mathcal{D}_1, \mathcal{X}_1,\textsc{Refl}), \textsc{Sub}(\mathcal{D}_2), \mathcal{X}_2, \mathcal{S})  &\equiv \textsc{Sub}(\textsc{Case-UnitP}(\mathcal{D}_1, \mathcal{D}_2),(\mathcal{X}_1, \mathcal{X}_2), \mathcal{S}) \\
    \tag*{$\equiv$\textsc{-Sub-Ret}}
    \textsc{Ret}(\textsc{Sub}(\mathcal{D}, \mathcal{X}, \mathcal{S})) &\equiv \textsc{Sub}(\textsc{Ret}(\mathcal{D}), \mathcal{X}, \costmonad{\leq}{\mathcal{S}})
\end{align*}
and we can eliminate uses of the subsumption rule if we use it with a subtyping derivation for reflexivity.
Similarly, applications of \textsc{Trans} can be replaced by two uses of the subsumption rule instead.
\begin{align*}
    \tag*{$\equiv$\textsc{-Sub-Refl}}
    \textsc{Sub}(\mathcal{D},\textsc{Refl}, \textsc{Refl}) &\equiv \mathcal{D} \\
    \tag*{$\equiv$\textsc{-Sub-Trans}}
    \textsc{Sub}(\mathcal{D}, \textsc{Trans}(\mathcal{X}_1, \mathcal{X}_2), \textsc{Trans}(\mathcal{S}_1,\mathcal{S}_2)) &\equiv \textsc{Sub}(\textsc{Sub}(\mathcal{D},\mathcal{X}_1, \mathcal{S}_1), \mathcal{X}_2, \mathcal{S}_2) \\
\end{align*}

\subsection{Computation Rules}

Then we have to add the following $\beta$-rules, where we use the derived substitution operation from \cref{def:lamor:substitution-derivations} in some of the rules.
\begin{align}
    \tag*{$\equiv$-$\beta$-$\textsc{Fun}$}
    \textsc{App}(\textsc{Abs}(\mathcal{D}_1), \mathcal{D}_2) &\equiv \textsc{Subst}_1(\mathcal{D}_1, \mathcal{D}_2) \\
    \tag*{$\equiv$-$\beta$-$\textsc{Unit-P}$}
    \textsc{Case-Unit-P}(\textsc{Unit-P}, \mathcal{D}) &\equiv \mathcal{D} \\
    \tag*{$\equiv$-$\beta$-$\textsc{With}_1$}
    \textsc{Fst}(\textsc{With}(\mathcal{D}_1, \mathcal{D}_2)) &\equiv \mathcal{D}_1 \\
    \tag*{$\equiv$-$\beta$-$\textsc{With}_2$}
    \textsc{Snd}(\textsc{With}(\mathcal{D}_1, \mathcal{D}_2)) &\equiv \mathcal{D}_2 \\
    \tag*{$\equiv$-$\beta$-$\textsc{Sum}_1$}
    \textsc{Case-Sum}(\textsc{Inl}(\mathcal{D}_1), \mathcal{D}_2, \mathcal{D}_3) &\equiv \textsc{Subst}_1(\mathcal{D}_2, \mathcal{D}_1) \\
    \tag*{$\equiv$-$\beta$-$\textsc{Sum}_2$}
    \textsc{Case-Sum}(\textsc{Inr}(\mathcal{D}_1), \mathcal{D}_2, \mathcal{D}_3) &\equiv \textsc{Subst}_1(\mathcal{D}_3, \mathcal{D}_1) \\
    \tag*{$\equiv$-$\beta$-$\textsc{Tensor}$}
    \textsc{Case-Tensor}(\textsc{Tensor}(\mathcal{D}_1, \mathcal{D}_2), \mathcal{D}_3) &\equiv \textsc{Subst}_2(\mathcal{D}_3, \mathcal{D}_1, \mathcal{D}_2)
\end{align}

\subsection{Extensionality Rules}
We include $\eta$-equalities for all negative types.

\begin{align}
    \tag*{$\equiv$-$\eta$\textsc{-With}}
    \textsc{With}(\textsc{Fst}(\mathcal{D}), \textsc{Snd}(\mathcal{D})) &\equiv \mathcal{D} \\
    \tag*{$\equiv$-$\eta$\textsc{-Fun}}
    \textsc{Abs}(\textsc{App}(\mathcal{D}, \textsc{Var})) &\equiv \mathcal{D} \\
    \tag*{$\equiv$-$\eta$\textsc{-Unit-N}}
    \textsc{Unit-N} &\equiv \mathcal{D}
\end{align}

Note that there is a side condition for the rule $\equiv$-$\eta$\textsc{-Fun} regarding free variables which cannot be properly expressed using this shorthand notation.

\section{Detail and Proofs for the Categorical Model}
\label{sec:model-details}
\subsection{Commuting Diagrams for Strong Graded Monads}
\label{apx:model-details:strong-graded-monad-diagrams}

The following diagrams, omitted from \cref{def:adjoint-model:strong-graded-monad}, give the full coherence conditions
for a strong graded monad, i.e, that the strong natural transformation $\mathit{st}_{p,X,Y} : X \otimes M(p,Y) \to M(p, X \otimes Y)$ must satisfy with
respect to the unit $\eta$ and multiplication $\mu$ of the graded monad:

\begin{center}
  \begin{tikzcd}
    X \otimes Y \ar[r, "\eta_Y"] \ar[d, "{\idmorph{} \otimes \eta_Y}" swap] & M(0,X \otimes Y) \\
    X \otimes M(0,Y) \ar[ru, "{\mathit{st}_{0,X,Y}}" swap] & \\
  \end{tikzcd}
  \begin{tikzcd}
    M(p,X) \ar[r, "M{(p,\lambda_X)}"] \ar[d, "\lambda_{M(p, X)}" swap] & M(p,1 \otimes X) \\
    1 \otimes M(p,X) \ar[ur, "{\mathit{st}_{p,1,X}}"] & \\
  \end{tikzcd}
\end{center}

\begin{center}
  \begin{tikzcd}
    (X \otimes Y) \otimes M(p,Z) \ar[rr, "{\mathit{st}_{p, X \otimes Y, Z}}"] \ar[d, "\alpha" swap] & & M(p,(X \otimes Y) \otimes Z) \ar[d, "M\alpha"] \\
    X \otimes (Y \otimes M(p,Z)) \ar[r, "\idmorph{} \otimes {\mathit{st}_{p,Y,Z}}" swap] & X \otimes M(p,Y \otimes Z) \ar[r, "{\mathit{st}_{p,X,Y \otimes Z}}" swap] & M(p,X \otimes (Y \otimes Z)) \\
  \end{tikzcd}
\end{center}
\begin{center}
  \begin{tikzcd}
    X \otimes M(p_1,M(p_2,Y)) \ar[r, "{\mathit{st}_{p_1,X,M(p_2,Y)}}"] \ar[d, "{\idmorph{} \otimes \mu_{p_1,p_2,Y}}" swap] & M(p_1, X \otimes M(p_2,Y)) \ar[r, "{M(p_1, \mathit{st}_{p_2,X,Y})}"] & M(p_1,M(p_2, X \otimes Y)) \ar[d, "{\mu_{p_1,p_2,X \otimes Y}}"] \\
    X \otimes M(p_1 + p_2, Y) \ar[rr, "{\mathit{st}_{p_1 + p_2, X,Y}}" swap] & & M(p_1 + p_2, X \otimes Y) \\
  \end{tikzcd}
\end{center}

\subsection{Proof of \cref{lem:adjoint-model:subtyping}}
\label{apx:model-details:subtyping}

\subtypinginmodel*
\begin{proof}
    We show the first part of this lemma by induction on the subtyping derivation $\mathcal{S}$; we show a subset of the relevant cases.
    The extension to subtyping derivations between typing contexts is straightforward.
    \begin{description}
        \item[S-Refl] We need to construct a morphism from $\interpret{\tau}$ to $\interpret{\tau}$; we can just pick $\idmorph{\interpret{\tau}}$.
        \item[S-Trans] We assume there are subtyping derivations $\mathcal{S}_1 : \tau_1 \sub \tau_2$ and $\mathcal{S}_2 : \tau_2 \sub \tau_3$.
        By the induction hypothesis there exist morphisms $\interpret{\mathcal{S}_1} : \interpret{\tau_1} \to \interpret{\tau_2}$ and $\interpret{\mathcal{S}_2} : \interpret{\tau_2} \to \interpret{\tau_3}$.
        We can thus use $\interpret{\mathcal{S}_2} \circ \interpret{\mathcal{S}_1} : \interpret{\tau_1} \to \interpret{\tau_3}$.
        \item[S-Tensor] We assume there are subtyping derivations $\mathcal{S}_1 : \tau_1 \sub \tau_2$ and $\mathcal{S}_2 : \tau_3 \sub \tau_4$.
        By the induction hypothesis there exist morphisms $\interpret{\mathcal{S}_1} : \interpret{\tau_1} \to \interpret{\tau_2}$ and $\interpret{\mathcal{S}_2} : \interpret{\tau_3} \to \interpret{\tau_4}$.
        We need to construct a morphism from $\interpret{\tau_1} \otimes \interpret{\tau_3}$ to $\interpret{\tau_2} \otimes \interpret{\tau_4}$.
        Since $\otimes : \modelcategory \times \modelcategory \to \modelcategory$ is a bifunctor we can simply choose $\interpret{\mathcal{S}_1} \otimes \interpret{\mathcal{S}_2}$.
        \item[S-Potential] We assume that there is a subtyping derivation $\mathcal{S} : \tau_1 \leq \tau_2$ and a morphism $(p_2 \leq p_1) : p_2 \to p_1$.
        By the induction hypothesis there exists a morphism $\interpret{\mathcal{S}} : \interpret{\tau_1}  \to \interpret{\tau_2}$.
        Since $P$ is a functor that is contravariant in its first argument and covariant in its second argument we can choose $P(p_2 \leq p_1, \interpret{\mathcal{S}})$.
        \item[S-Monad] We assume that there is a subtyping derivation $\mathcal{S} : \tau_1 \leq \tau_2$ and a morphism $(p_1 \leq p_2) : p_1 \to p_2$.
        By the induction hypothesis there exists a morphism $\interpret{\mathcal{S}} : \interpret{\tau_1}  \to \interpret{\tau_2}$.
        Since $M$ is a functor that is covariant in both of its arguments we can pick $M(p_1\leq p_2, \interpret{\mathcal{S}})$.
    \end{description}
\end{proof}

\end{document}